\definecolor{dullmagenta}{rgb}{0.4,0,0.4}   % #660066
\definecolor{darkblue}{rgb}{0,0,0.4}
\newcommand{\opnorm}{\@ifstar\@opnorms\@opnorm}
\newcommand{\@opnorms}[1]{%
	\left|\mkern-1.5mu\left|\mkern-1.5mu\left|
	#1
	\right|\mkern-1.5mu\right|\mkern-1.5mu\right|
}
\newcommand{\@opnorm}[2][]{%
	\mathopen{#1|\mkern-1.5mu#1|\mkern-1.5mu#1|}
	#2
	\mathclose{#1|\mkern-1.5mu#1|\mkern-1.5mu#1|}
}
\let\Diamond=\diamondsuit
\let\mathbb=\mathds
\DeclarePairedDelimiter{\ceil}{\lceil}{\rceil}
\DeclareMathOperator*{\argmax}{\arg\max}
\DeclareMathOperator*{\argmin}{\arg\min}
\DeclareMathOperator{\Tr}{Tr}
\DeclareMathOperator{\e}{\mathrm{e}}
\DeclareMathOperator{\Var}{Var}
\newcommand{\ket}[1]{| #1 \rangle}
\newcommand{\be}{{\mathbf e}}
\def\0{{\mathbf{0}}}
\def\1{{\mathbf{1}}}
\def\2{{\mathbf{2}}}
\def\3{{\mathbf{3}}}
\def\4{{\mathbf{4}}}
\def\5{{\mathbf{5}}}
\def\6{{\mathbf{6}}}
\def\7{{\mathbf{7}}}
\def\8{{\mathbf{8}}}
\def\9{{\mathbf{9}}}
\def\be{\begin{equation}}
\def\ee{\end{equation}}
\def\bea{\begin{eqnarray}}
\def\eea{\end{eqnarray}}
\def\eps{\varepsilon}
\theoremstyle{plain}
\newtheorem{theo}{Theorem} %[section]%[chapter]
\newtheorem{prop}[theo]{Proposition} %[section]
\newtheorem{lemm}[theo]{Lemma} %[section]
\newtheorem{coro}[theo]{Corollary} %[section]
\newtheorem*{prop2}{Proposition~\ref{prop:I2}}
\newtheorem*{prop3}{Proposition~\ref{prop:saddle}}
\newtheorem*{prop4}{Proposition~\ref{prop:Esp}}
\theoremstyle{definition}
\newtheorem{defn}[theo]{Definition} %[section]
\theoremstyle{remark}
\newtheorem{remark}{Remark}[section]
\begin{document}
	%%%%% How to disable amsart.cls to capitalize the article title?
\let\origmaketitle\maketitle
\def\maketitle{
	\begingroup
	\def\uppercasenonmath##1{} % this disables uppercasing title
	\let\MakeUppercase\relax % this disables uppercasing authors
	\origmaketitle
	\endgroup
}
	%%%%%%%%%%%%%%%%%%%%%%%%%%%%%%%%
	
\title{\bfseries \Large{Quantum Sphere-Packing Bounds with Polynomial Prefactors}}
	
\author{ {Hao-Chung Cheng$^{1,2,3}$, Min-Hsiu Hsieh$1$, and Marco  Tomamichel$^{1,4}$}}
\address{\small  	
$^{1}$Centre for Quantum Software and Information (UTS:Q$\ket{SI}$),\\
Faculty of Engineering and Information Technology, University of Technology Sydney, Australia\\
$^{2}$Graduate Institute Communication Engineering, National Taiwan University, Taiwan (R.O.C.)\\
$^{3}$Department of Applied Mathematics and Theoretical Physics, University of Cambridge, United Kingdom\\
$^{4}$School  of  Physics,  The University  of  Sydney, Australia}
\email{\href{mailto:HaoChung.Ch@gmail.com}{HaoChung.Ch@gmail.com}}
\email{\href{mailto:Min-Hsiu.Hsieh@uts.edu.au}{Min-Hsiu.Hsieh@uts.edu.au}}
\email{\href{mailto:marco.tomamichel@uts.edu.au}{marco.tomamichel@uts.edu.au}}
	
	%\date{\today}
	
\begin{abstract}
We study lower bounds on the optimal error probability in classical coding over classical-quantum channels at rates below the capacity, commonly termed quantum sphere-packing bounds.
Winter and Dalai have derived such bounds for classical-quantum channels; however, the exponents in their bounds only coincide when the channel is classical.  In this paper, we show  that these two exponents admit a variational representation and are related by the Golden-Thompson inequality, reaffirming that Dalai's expression is stronger in general classical-quantum channels. Second, we establish a finite blocklength sphere-packing bound for classical-quantum channels, which significantly improves Dalai's prefactor from the order of subexponential to polynomial. Furthermore, the gap between the obtained error exponent for constant composition codes and the best known classical random coding exponent vanishes in the order of $o(\log n / n)$, indicating our sphere-packing bound is almost exact in the high rate regime. Finally, for a special class of symmetric classical-quantum channels, we can completely characterize its optimal error probability without the constant composition code assumption. The main technical contributions are two converse Hoeffding bounds for quantum hypothesis testing and the saddle-point properties of error exponent functions.  
\end{abstract}
% Winter [\href{http://arxiv.org/abs/quant-ph/9907077}{ \textit{PhD Thesis}, \texttt{arXiv:quant-ph/9907077}}] and Dalai [\href{http://dx.doi.org/10.1109/tit.2013.2283794}{\textit{IEEE Trans.~Inf.~Theory}
	
	\maketitle
	
	\section{Introduction} \label{sec:intro}
	
	Shannon's noisy coding theorem \cite{Sha48} states that a message in an appropriately coded form can be {reliably} transmitted through a discrete memoryless channel $\mathscr{W}$, provided the coding rate $R$ is below the channel capacity $C_\mathscr{W}$. More precisely, the probability of decoding errors can be made arbitrarily small as the coding blocklength grows. Later, Shannon  himself pioneered the study of the exponential dependency of the optimal error probability $\epsilon^*(n,R)$ for a blocklength $n$ and transmission rate $R$ \cite{Sha59}. He defined the \emph{reliability function} to be, for any fixed coding rate $R< {C}_\mathscr{W}$,
	\begin{align}
	{E}(R) := \limsup_{n\to+\infty} \, -\frac1n \log \epsilon^*(n,R).
	\end{align}
	The quantity $E(R)$ then provides a measure of how rapidly the error probability approaches zero with an increase in blocklength. This characterization of the reliability function is hence called the \emph{reliability function analysis} or the \emph{error exponent analysis}.
	
For a classical channel, lower bounds for the reliability function can be established by random coding arguments \cite{Fei55,Fan61,Gal65,Gal68}. However, upper bounds require different techniques since the code-dependent bounds on the error probability need to be optimized over all codebooks. The first result---the \emph{sphere-packing bound} $E(R) \leq E_\text{sp}(R)$---was studied by Fano \cite{Fan61} and 
Shannon, Gallager, and Berlekamp \cite{SGB67}, and proved by the later paper. The \emph{sphere-packing exponent} $E_\text{sp}(R)$ is defined as
	\begin{align} \label{eq:sp}
	E_\text{sp}(R) :=  \sup_{s\geq 0}\left\{\max_{P} E_0(s,P) - sR\right\},
	\end{align}
where $P$ is maximized over all probability distributions on the input alphabet, and $E_0(s,P)$ is the \emph{auxiliary function} or \emph{Gallager's function} \cite{Gal65}.  Unlike Shannon-Gallager-Berlekamp's technique which relates channel coding to binary hypothesis testing, Haroutunian \cite{Har68,HHH07} employed a combinatorial method and obtained an upper bound for the reliability function in terms of the following expression
	\begin{align} \label{eq:sp2}
	\widetilde{E}_\text{sp}(R) :=
	\max_{P} \min_{\mathscr{V} } \left\{
	{D}\left(\mathscr{V}\|\mathscr{W}|P\right): I(P,\mathscr{V}) \leq R \right\},
	\end{align}
where $\mathscr{V}$ is minimized over all channels with the same output alphabet as $\mathscr{W}$, ${D}(\mathscr{V}\|\mathscr{W}|P)$ is the conditional relative entropy between the {dummy channel} $\mathscr{V}$ and the {true channel} $\mathscr{W}$, and $I(P,\mathscr{V})$ is the mutual information of the channel $\mathscr{V}$ (the detailed definitions are given in Section \ref{sec:notation}). It was then realized that the two quantities in Eqs.~\eqref{eq:sp} and \eqref{eq:sp2} are equivalent: they are related by convex program duality \cite{Har68, Bla74,BTC88,CK11}. Therefore, these two expressions, Eqs.~\eqref{eq:sp} or \eqref{eq:sp2}, are both called sphere-packing exponents.

Error exponent analysis in classical-quantum (c-q) channels is more challenging because of the noncommutative nature of quantum mechanics. Burnashev and Holevo \cite{BH98} introduced a quantum version of the auxiliary function \cite{Hol00,HM16} and initialized the study of reliability functions in c-q channels.	Winter \cite{Win99} derived a sphere-packing bound for c-q channels in the form of $\widetilde{E}_\text{sp}(R)$ in Eq.~\eqref{eq:sp2}, generalizing  Haroutunian's idea \cite{Har68}. Dalai \cite{Dal13} employed Shannon-Gallager-Berlekamp's approach \cite{SGB67} to establish a sphere-packing bound with Gallager's exponent in Eq.~\eqref{eq:sp}. In the follow-up work \cite{DW14b},  Dalai and Winter pointed out that these two exponents are not equal in c-q channels.
	In this work, we explicitly demonstrate a relationship between the two quantities. Precisely, we show that they individually admit a variational representation (Theorem \ref{theo:dual_sp} in Section \ref{sec:relation}):
	\begin{align}
	E_\textnormal{sp}(R) 
	&= \max_{P} \sup_{0<\alpha\leq 1} \min_{\sigma}  \left\{  \frac{1-\alpha}{\alpha} \left(  \sum_{x}P(x) D_{\alpha} \left(W_x\|\sigma\right) - R \right) \right\}; \label{eq:dual_sp001} \\
	\widetilde{E}_\textnormal{sp}(R) 
	&= \max_{P} \sup_{0<\alpha\leq 1} \min_{\sigma}  \left\{  \frac{1-\alpha}{\alpha} \left(  \sum_{x}P(x)  D_{\alpha}^\flat \left(W_x\|\sigma\right) - R \right) \right\}, \label{eq:dual_sp002}
	\end{align}
	where $\sigma$ is minimized over all density operators on some Hilbert space $\mathcal{H}$; 
	$W_x$ is the channel output state on $\mathcal{H}$;
	$D_\alpha$ is the (Petz) $\alpha$-R\'enyi divergence \cite{Pet86}; and $D_\alpha^\flat$ is the   \emph{log-Euclidean} $\alpha$-R\'enyi divergence. 
%\footnote{The divergence $D_\alpha^\flat$ arises from matrix information geometry---the \emph{log-Euclidean mean} $A\Diamond_s B := \exp\left\{ (1-s)\log A + s \log B \right\}$	\cite{AF94, FNT05, AFP+07, HP09}, which belongs to the family of the \emph{quasi-arithmetric mean}: $A\Diamond_s B = \lim_{r\to 0} \left( (1-s)A^r + sB^r  \right)^{1/r}$ \cite{BS78}.} \cite{ON00,AD15,MO14b}.
	% (see Section \ref{sec:notation} for the detailed definitions).
	
Since $D_\alpha \leq D_\alpha^\flat$ for all $\alpha\in(0,1]$, as a simple consequence of the Golden-Thompson inequality \cite{Gol65,Tho65}, the exponent $E_\text{sp}(R)$ in Eq.~\eqref{eq:dual_sp001} is stronger than $\widetilde{E}_\text{sp}(R)$ in~Eq.~\eqref{eq:dual_sp002}, i.e.~
	\begin{align}
	E(R) \leq E_\textnormal{sp} ( R ) \leq \widetilde{E}_\text{sp}(R).
	\end{align}
	These two exponents coincide for all $R$ only when all the channel output states commute\footnote{For the coding rates above channel capacity, these two exponents are both zero ($\alpha$ attains 1 in Eqs.~\eqref{eq:dual_sp001} and \eqref{eq:dual_sp002}). We exclude this trivial case and only consider the rate being strictly below capacity.} (i.e.~for classical channels). Thus, we call $E_\textnormal{sp} ( R)$ and $\widetilde{E}_\text{sp}(R)$ the \emph{strong sphere-packing exponent} and the \emph{weak sphere-packing exponent}, respectively. The lower bounds for the optimal error probability in terms of these two quantities are called the strong sphere-packing bound
	\begin{equation}
	\epsilon^*\left(n, R\right)  \geq f(n) \exp\left\{
	-n\left[ E_\textnormal{sp}(R - g(n))  \right]
	\right\}, \label{eq:ssp}
	\end{equation}
	and the weak sphere-packing bound 
	\begin{equation}
	\epsilon^*\left(n, R\right) \geq f(n) \exp\left\{
	-n\left[ \widetilde{E}_\textnormal{sp}(R - g(n))  \right]
	\right\}, \label{eq:wsp}
	\end{equation}
	where $f(n)$ is the prefactor of the bound, and $g(n)$ is a rate back-off term.
	We note that $g(n) = 0$ in our main result, and hence we only study $f(n)$ in the following discussion.
	
The strong sphere-packing bound obtained by Dalai \cite{Dal13} had a prefactor $f(n) = \mathrm{e}^{-O(\sqrt{n})} $, which is loose for small blocklength $n$ or in the situation where the transmission rate is close to channel capacity. 
Furthermore, such bound only holds asymptotically, i.e.~when $n$ tends to infinity.
The main contribution of this paper is to establish a sphere-packing bound with a better prefactor $f(n) = O(n^{-t})$ for some $t>1/2$, which not only holds for finite blocklength $n$ but also notably improves Dalai's bound \cite{Dal13} from the order of subexponential to polynomial  (Corollary \ref{coro:refined}). When restricting to constant composition codes, we can be more explicit about the obtained prefactor, namely, $f(n) = n^{-\frac12\left(1+\left|E_\text{sp}'(R)\right|+o(1)\right)}$ (Theorem \ref{theo:refined})\footnote{The notion $E_\text{sp}'(R)$ means the left-derivative of $E_\text{sp}(R)$. Note that $E_\text{sp}(R)$ is not necessarily differentiable.}. Moreover, this sphere-packing bound and the best known random coding upper bound \cite{AW12, SMF14, Sca14, Hon15} in the classical case coincide up to the third-order term (see the discussion in Section \ref{sec:main})).  Hence, our result yields a tight asymptotics of the sphere-packing bound for constant composition codes. Our second contribution is to show that, for a class of symmetric c-q channels, the prefactor $f(n) = O(n^{-\frac12\left(1+\left|E_\text{sp}'(R)\right|\right)})$, holds for general codes. In other words, %we are able to circumvent the assumption of constant composition codes and 
we are able to obtain a tight sphere-packing bound for general codes, by exploiting a symmetric property of the channel.

Our main ingredients are a tight concentration inequality in strong large deviation theory \cite{BR60}, \cite[Theorem 3.7.4]{DZ98}, \cite[Section III.D]{AW14b} (Appendix \ref{app:tight}), an one-shot converse via quantum hypothesis testing \cite{Fan61, SGB67, Bla74}, and a uniform continuity property (Proposition~\ref{prop:UC} in Appendix~\ref{app:UC}). The strategy of the proof consists of three steps: (i) formulate the error probability of a certain codebook to a hypothesis testing problem; (ii) prove lower (or called the {converse}) bounds on type-I error in quantum hypothesis testing; and (iii) relate the error with the strong sphere-packing exponent. In Section \ref{ssec:converse_HT}, we first prove a one-shot converse bound to relate the channel coding problem into a binary quantum hypothesis (Proposition~\ref{lemm:hypothesis}).
We then employ Bahadur-Ranga Rao's inequality \cite{BR60} to establish converse Hoeffding bounds for quantum hypothesis testing (Theorem~\ref{theo:sharp_Hoeffding}).
Next, we apply a uniform continuity property (Proposition~\ref{prop:UC} in Appendix~\ref{app:UC}) and the established sharp Hoeffding bound (Theorem~\ref{theo:sharp_Hoeffding}) to prove two finite blocklength converse bounds on the optimal type-I error for a fixed-composition codeword.
The first bound is a Chebyshev-type bound with a subexponential prefactor (Proposition~\ref{prop:Chebyshev}), while the second bound is a sharp converse bound with a polynomial prefactor (Proposition~\ref{prop:sharp}).  Finally, we combine these two results to obtain a refined strong sphere-packing bound with a polynomial prefactor and finite blocklength.
	
Table \ref{table:comparison} collects major proof approaches of classical sphere-packing bounds, Eqs.~\eqref{eq:ssp} and \eqref{eq:wsp}, and discusses their generalizations to c-q channels. We remark that the established  finite blocklength bounds and the polynomial prefactor are crucial for the analysis of coding performance in the medium error probability regime (more commonly known as moderate deviation analysis) \cite{AW14b,CH17, CCT+16b}, classical data compression with quantum side information \cite{CHDH-2018, CHDH2-2018}, and joint source-channel coding with quantum side information \cite{CHDH3-2018}. 
	
The remaining part of the paper is organized as follows. Section~\ref{sec:notation} introduces the notation and necessary preliminaries. The relationship between the weak and strong sphere-packing exponents is proved in Section~\ref{sec:relation}. In Section~\ref{sec:main}, we prove a refined sphere-packing bound for c-q channels. We consider a symmetric c-q channel and establish an exact sphere-packing bound in Section~\ref{sec:symm}. Lastly, we conclude this paper in Section~\ref{sec:conclusion}.

\begin{table}[th!]
\centering
\resizebox{1\columnwidth}{!}{
\begin{tabular}{@{}>{\columncolor[gray]{0.90}}lcccccc@{}} % 	
		\toprule
				
& \multirow{2}{*}{Asymptotics} &  Composition & Pre-factor & Rate back-off & Classical-Quantum & \multirow{2}{*}{Expression} \\ 
\multirow{-2}{*}{Bounds\textbackslash Settings}
&   & dependent & $f(n)$ & $g(n)$ & channels &\\

\midrule
\midrule		

$\quad\,\,\,\,$Shannon-Gallager-  & \multirow{2}{*}{Asymptotic} & \multirow{2}{*}{Yes} & \multirow{2}{*}{$\mathrm{e}^{-O(\sqrt{n})}$} & \multirow{2}{*}{$O\left(\frac{\log n}{n}\right)$} & \multirow{2}{*}{Dalai \cite{Dal13}}& \multirow{2}{*}{$E_\text{sp}(R)$} \\
\multirow{-2}{*}{(a)} Berlekamp \cite{SGB67} & & & & & & \\
\midrule			

$\quad\,\,\,\,$Haroutunian \cite{Har68}  & \multirow{3}{*}{Asymptotic}  & \multirow{3}{*}{Yes} & \multirow{3}{*}{$\mathrm{e}^{-o(n)}$} & \multirow{3}{*}{$o(1)$} & \multirow{3}{*}{Winter \cite{Win99}} & \multirow{3}{*}{ $\widetilde{E}_\text{sp}(R)$ }\\
$\quad\,\,$ Omura \cite{Omu75} &  & & & & &\\
\multirow{-3}{*}{(b)} $  $Csis\'ar-Korner \cite{CK11} &  & & & & &\\
\hline
				
(c) Blahut \cite{Bla74}  & Asymptotic & Yes\protect\footnotemark & $\mathrm{e}^{-O(\sqrt{n})}$ & $O\left(n^{-\frac12}\right)$ & Eqs.~\eqref{eq:sketch3} \& \eqref{eq:caseI} & $E_\text{sp}(R)$\\
\hline		
				
& \multirow{2}{*}{Finite Blocklength} & \multirow{2}{*}{Yes} & \multirow{2}{*}{$n^{-\frac12\left(1+\left|E_\text{sp}'(R)\right|+o(1)\right)}  $} & \multirow{2}{*}{$0$ } & \multirow{2}{*}{Theorem \ref{theo:refined}} & \multirow{2}{*}{$E_\text{sp}(R)$} \\\multirow{-2}{*}{(d) Altu\u{g}-Wagner \cite{AW14}}&  & & & & &\\	
\hline		
								
(e) Elkayam-Feder \cite{EF16}  & Asymptotic & Yes & $O\left(n^{-t}\right)$ & $O\left(\frac{\log n}{n}\right)$ & Unknown & Unknown\\	
\hline				
				
$\quad\,\,\,\,$Agustin-Nakibo\u{g}lu  & \multirow{2}{*}{Finite Blocklength} & \multirow{2}{*}{No} & \multirow{2}{*}{$O\left(n^{-t}\right)$} & \multirow{2}{*}{$0$} & \multirow{2}{*}{Unknown}& \multirow{2}{*}{Unknown} \\
\multirow{-2}{*}{(f)}  $\,$\cite{Aug69,Aug78,Nak16a,Nak16b, Nak18a, Nak18b} & & & & & & \\			
\bottomrule				
\end{tabular}
}
\vspace{0.5em}
\caption[Caption without FN]{Different sphere-packing bounds are compared by (i) whether the bounds hold for finite blocklength $n$ or hold asymptotically as $n\to+\infty$; 	(ii) whether or not they are dependent on the constant composition codes;	(iii) \& (iv) 
the asymptotics $f(n)$ and $g(n)$; (v) the corresponding generalizations to classical-quantum channel coding; The parameter $t$ in rows (e) and (f) is some value in the range $t> 1/2$; and (vi) whether their error exponent expressions for c-q channels are expressed by $E_\text{sp}(R)$ given in Eq.~\eqref{eq:sp} or by $\widetilde{E}_\text{sp}(R)$ defined in Eq.~\eqref{eq:sp2}. 
%We note that column ``Tightness" focuses on the case of non-commutative classical-quantum channel. Hence, the expressions in  Eqs.~\eqref{eq:ssp} and \eqref{eq:wsp} are not equal.
%We note that the strong and weak sphere-packing bounds in Eqs.~\eqref{eq:ssp} and \eqref{eq:wsp}, are equal for classical channels, and rows (c) and (d) can be generalized to the strong form in Section \ref{sec:main}. 
}	\label{table:comparison}
		
\end{table}		

\footnotetext{Blahut in Ref.~\cite{Bla74} claimed that the sphere-packing bound can be derived without using constant composition argument. However, there is a non-trivial gap. We refer readers to the discussion by Nakibo\u{g}lu in Ref.~\cite[Appendix A]{Nak18b}}

	\section{Notation and Preliminaries} \label{sec:notation}
	
	Throughout this paper, we consider a finite-dimensional Hilbert space $\mathcal{H}$.  The set of density operators (i.e.~positive semi-definite operators with unit trace) and the set of full-rank density operators on $\mathcal{H}$ are defined as $\mathcal{S(H)}$ and $\mathcal{S}_{>0}(\mathcal{H})$, respectively. For $\rho,\sigma\in\mathcal{S(H)}$, we write $\rho\ll \sigma$ if the support of $\rho$ is contained in the support of $\sigma$. The identity operator on $\mathcal{H}$ is denoted by $\mathds{1}_\mathcal{H}$. If there is no possibility of confusion, we will skip the subscript $\mathcal{H}$.  We use $\Tr\left[\,\cdot\, \right]$ to denote the trace.  Let $\mathbb{N}$, $\mathbb{R}$, $\mathbb{R}_{\geq 0}$, and $\mathbb{R}_{> 0}$ denote the set of integers, real numbers, non-negative real numbers, and positive real numbers, respectively.
	Define $[n] := \{1,2,\ldots, n\}$ for $n\in\mathbb{N}$.
	
	For a positive semi-definite operator $A$ whose spectral decomposition is $A = \sum_{i} a_i P_i$, where $(a_i)_i$ and $(P_i)_i$ are the eigenvalues and eigenprojections of $A$, its power is defined as: $A^p := \sum_{i:a_i\neq 0} a_i^p P_i$.
	In particular, $A^0$ denotes the projection onto $\texttt{supp}(A)$, where we use $\texttt{supp}(A)$ to denote the support of the operator $A$.
	Further, $A\perp B$ means $\texttt{supp}(A) \cap \texttt{supp}(B) = \emptyset$, and $A\ll B$ indicates $\texttt{supp}(A) \subseteq \texttt{supp}(B)$.
%	We denote by $\texttt{ri}$ and $\texttt{cl}$ the relative interior and the closure of a set, respectively. 
	We denote by $\log$ the natural logarithm. % and use the convention $\log 0 := -\infty$.
	We use $f \vee g$ (resp.~$a\wedge b$) to denote the pointwise maximum (resp.~minimum) between two functions $f$ and $g$.

	\subsection{Information Quantities and Error-Exponent Functions} \label{ssec:Info}
	Given a pair of positive semi-definite operators $\rho,\sigma\in\mathcal{S(H)}$, we define {quantum relative entropy} \cite{Ume62,HP91} and {relative variance} \cite{TH13,Li14,TV15}, respectively as
	\begin{align} 
	{D}(\rho\|\sigma) &:=  \Tr \left[ \rho \left( {\log} \rho - {\log} \sigma \right) \right]; \label{eq:relative} \\
	{V}(\rho\|\sigma) &:=  \Tr \left[ \rho \left( {\log} \rho - {\log} \sigma \right)^2 \right] - {D}(\rho\|\sigma)^2, \label{eq:info_var}
	\end{align}
	when $\rho\ll\sigma$, and $+\infty$ otherwise.

	For two positive definite operators $\rho,\sigma > 0$ on $\mathcal{H}$, 
	and every $\alpha\in (0,1)$, we define the following two families of quantum R\'enyi divergences \cite{Pet86,ON00,MO14b}:
	\begin{align}
	&D_\alpha(\rho\|\sigma) := \frac{1}{\alpha-1} \log Q_\alpha(\rho\|\sigma) , \quad
	Q_\alpha(\rho\|\sigma) := \Tr \left[ \rho^\alpha \sigma^{1-\alpha} \right]; \label{eq:Petz}\\
	&D^\flat_\alpha(\rho\|\sigma) := \frac{1}{\alpha-1} \log Q^\flat_\alpha(\rho\|\sigma), \quad
	Q^\flat_\alpha(\rho\|\sigma) := \Tr \left[\mathrm{e}^{\alpha \log \rho + (1-\alpha) \log \sigma}\right]. \label{eq:chaotic}
	\end{align}
	We term the above quantities as the \emph{(Petz) $\alpha$-R\'enyi divergence}, and the \emph{log-Euclidean $\alpha$-R\'enyi divergence}, respectively.
	The log-Euclidean R\'enyi divergence arises from the \emph{log-Euclidean operator mean} (also called the \emph{chaotic mean}): $A\Diamond_\alpha B  := \exp\left( (1-\alpha)\log A + \alpha \log B \right)$ for $0\leq \alpha \leq 1$. 
	For general positive semi-definite operators $\rho,\sigma\geq 0$, the above definitions can be extended as
	\begin{align} \label{eq:delta_cont}
	Q_\alpha(\rho\|\sigma):= \lim_{\delta \downarrow 0} Q_\alpha(\rho+\delta\mathds{1}\|\sigma+\delta\mathds{1})
	\quad \text{and} \quad
	Q_\alpha^\flat(\rho\|\sigma):= \lim_{\delta \downarrow 0} Q_\alpha^\flat(\rho+\delta\mathds{1}\|\sigma+\delta\mathds{1}).
	\end{align}
	From the Golden-Thompson inequality \cite{Gol65,Tho65}:
	\begin{align}
	\Tr\left[ \e^{A+B} \right] \leq \Tr\left[ \e^A \e^B \right], \quad
	\forall A,B\geq 0,
	\end{align}
	these two quantities are related by 
\begin{align} \label{eq:compare}
Q^\flat_\alpha(\rho\|\sigma)  \leq Q_\alpha(\rho\|\sigma), \; \forall \alpha\in(0,1).
\end{align}  
For $\alpha=1$ and $\alpha=0$, we define (see e.g.~\cite[Lemma 3.5]{MO14b}):
\begin{align}
&D_1(\rho\|\sigma) := \lim_{\alpha\uparrow 1} D_\alpha (\rho\|\sigma) = {D}(\rho\|\sigma), \quad
D_1^\flat(\rho\|\sigma) := \lim_{\alpha\uparrow 1} D_\alpha^\flat (\rho\|\sigma) = {D}(\rho\|\sigma);\\
&D_0(\rho\|\sigma) := \lim_{\alpha\downarrow 0} D_\alpha (\rho\|\sigma), \quad D_0^\flat(\rho\|\sigma) := \lim_{\alpha\downarrow 0} D_\alpha^\flat (\rho\|\sigma).
	\end{align}
	
We will need the properties of the R\'enyi divergence for the next section.
\begin{lemm} \label{lemma:chaotic}
The following hold:
\begin{enumerate}[(a)]
	\item\label{Da_mono_cont}	For every $\rho,\sigma \in \mathcal{S(H)}$, the map $\alpha \mapsto D_\alpha\left( \rho \|\sigma\right)$  is continuous and monotone increasing on $[0,1]$.
	
	\item\label{Da_second_mono} Let $\rho \in \mathcal{S(H)}$, positive semi-definite operators $\sigma_1$ and  $\sigma_2$ on $\mathcal{H}$, and $\alpha\in[0,1]$.
	If $\sigma_1\leq \sigma_2$, then $ D_\alpha(\rho\|\sigma_1)\geq D_\alpha(\rho\|\sigma_2)$.
	Moreover, if $\sigma_1 = \gamma \sigma_2$ for some $\gamma>0$, then $D_\alpha(\rho\|\sigma_1) = D_\alpha(\rho\|\sigma_2) - \log \gamma$.
	
	\item\label{Da_second_convex} For every $\rho \in \mathcal{S(H)}$ and $\alpha\in[0,1]$, the map $\sigma \mapsto D_\alpha(\rho\|\sigma)$ is  convex and lower semi-continuous on $\mathcal{S(H)}$.
\end{enumerate}
\end{lemm}
We note that item~\ref{Da_mono_cont} was proved in \cite[Lemma~3.12, Corollary~3.15]{MO14b}; item~\ref{Da_second_mono} was proved in \cite[Lemma~3.24]{MO14b}; item~\ref{Da_second_convex} was shown in \cite{Ando79, Lie73, Pet86} \cite[Theorem 3.16]{MO14b}\footnote{It was shown in \cite[Corollary 3.27]{MO14b} that the map $\sigma \mapsto D_\alpha(\rho\|\sigma)$ is lower semi-continuous on $\mathcal{S(H)}$ for all $\alpha\in(0,1)$. The argument can be extended to the range $\alpha\in[0,1]$ by the same method in \cite[Lemma 3.26, Corollary 3.27]{MO14b}.}.

Let $\mathcal{X} = \{1,2,\ldots, |\mathcal{X}| \}$ be a finite alphabet, and  let $\mathscr{P}(\mathcal{X})$ be the set of probability distributions on $\mathcal{X}$. A classical-quantum (c-q) channel $\mathscr{W}$ maps elements of the finite set $\mathcal{X}$ to density operators in $\mathcal{S}(\mathcal{H})$, i.e.~$\mathscr{W}:x\mapsto W_x$. For a c-q channel $\mathscr{W}: \mathcal{X}\to \mathcal{S(H)}$ and $P\in\mathscr{P}(\mathcal{X})$, it is convenient to denote the corresponding c-q state:
\begin{align}
P\circ \mathscr{W} := \sum_{x\in\mathcal{X}} P(x) |x\rangle\langle x| \otimes W_x.
\end{align}
We also express the input distribution $P\in\mathscr{P}(\mathcal{X})$ as a diagonal matrix with respect to the computational basis $\{|x\rangle\}_{x\in\mathcal{X}}$, i.e.~$P = \sum_{x\in\mathcal{X}} P(x) |x\rangle\langle x|$.
Denote the {conditional relative entropy} of two c-q channels $\mathscr{V}, \mathscr{W}: \mathcal{X}\to \mathcal{S(H)}$ with a prior distribution $P\in\mathscr{P}(\mathcal{X})$ by
	\begin{align}
	{D}\left( \mathscr{V} \| \mathscr{W} | P \right) :=  \sum_{x\in\mathcal{X}} P(x)  {D}\left( V_x \| W_x \right).
	\end{align}
	Similarly, we define the following conditional entropic quantities for $\mathscr{W}: \mathcal{X}\to \mathcal{S(H)}$, $\sigma\in\mathcal{S(H)}$ and $P\in\mathscr{P}(\mathcal{X})$:
	\begin{align}
	{D}\left( \mathscr{W} \| \sigma | P \right) &:=  \sum_{x\in\mathcal{X}} P(x) {D}\left( W_x \| \sigma \right), \label{eq:D|P}\\
	{D}_\alpha \left( \mathscr{W} \| \sigma | P \right) &:=  \sum_{x\in\mathcal{X}} P(x) {D}_\alpha\left( W_x \| \sigma \right), \label{eq:Petz_P} \\
	{D}_\alpha^\flat \left( \mathscr{W} \| \sigma | P \right) &:=  \sum_{x\in\mathcal{X}} P(x) {D}_\alpha^\flat\left( W_x \| \sigma \right).
	\end{align}
	The \emph{mutual information} of the prior distribution $P\in\mathscr{P}(\mathcal{X})$ and the c-q channel $\mathscr{W}: \mathcal{X}\to \mathcal{S(H)}$ is defined as 
	\begin{align}
	I(P,\mathscr{W}) := \inf_{\sigma\in\mathcal{S(H)}} {D} \left( \mathscr{W} \| \sigma | P \right)
		= {D} \left( \mathscr{W} \| P\mathscr{W} | P \right), \label{eq:mutual2}
	\end{align}
	where $P\mathscr{W} := \sum_{x\in\mathcal{X}} P(x) W_x$ and the second equality can be found in Ref.~\cite{SW01}. The (classical) \emph{capacity} of the channel $\mathscr{W}: \mathcal{X}\to \mathcal{S(H)}$ is denoted by \cite{SW97, Hol98}:
	\begin{align} \label{eq:capacity}
	C_\mathscr{W} := \max_{P\in\mathscr{P}(\mathcal{X})} I(P,\mathscr{W}).
	\end{align}
	
We define two related information quantities: for every $\alpha\in[0,1]$,
\begin{align}
I_\alpha^{(1)}(P,\mathscr{W}) &:= \inf_{\sigma\in\mathcal{S(H)}} D_\alpha \left( P\circ \mathscr{W} \| P\otimes \sigma \right); \\
I_\alpha^{(2)}(P,\mathscr{W}) &:= \inf_{\sigma\in\mathcal{S(H)}} D_\alpha \left( \mathscr{W} \|  \sigma | P \right). \label{eq:I2}
\end{align}
The term $I_\alpha^{(1)}(P,\mathscr{W})$ is called \emph{the order $\alpha$ R\'enyi information} \cite{HT14, WWY14, Nak16a} or the \emph{generalized Holevo quantity}. The second term $I_\alpha^{(2)}(P,\mathscr{W})$ can be viewed as a variant of the $\alpha$-R\'enyi mutual information, and we call it the \emph{the order $\alpha$ Augustin information} \cite{Nak18a,Nak18b, CLH18}.	It can be verified that these two functions are related by Jensen's inequality:
\begin{align}
I_\alpha^{(1)}(P,\mathscr{W}) \leq I_\alpha^{(2)}(P,\mathscr{W}). \label{eq:Jen}
\end{align}
For the case of $\alpha = 1$, they both equal conventional mutual information, i.e. $I_1^{(1)}(P,\mathscr{W}) = I_1^{(2)}(P,\mathscr{W}) = I(P,\mathscr{W})$. Mosonyi and Ogawa \cite[Proposition 4.2]{MO14b} showed that for all $\alpha\in[0,1]$,
\begin{align} \label{eq:radius}
C_{\alpha,\mathscr{W}} := \sup_{ P \in \mathscr{P}(\mathcal{X})} I_\alpha^{(1)}(P,\mathscr{W})
= \sup_{ P \in \mathscr{P}(\mathcal{X})} I_\alpha^{(2)}(P,\mathscr{W}),
\end{align}
and it is termed the \emph{R\'enyi radius} or the \emph{R\'enyi capacity} of order $\alpha$. 
It is not hard to verify that $C_{\alpha,\mathscr{W}}$ equals the usual channel capacity $C_{\mathscr{W}}$ as $\alpha = 1$.
Moreover, Proposition \ref{prop:I2} below and the compactness of $\mathscr{P}(\mathcal{X})$ show that the suprema in Eq.~\eqref{eq:radius} can be replaced with maxima. The following proposition presents important properties of $\alpha$-Augustin mutual information and radius. The proof is given in Appendix \ref{app:I2}.
	
\begin{prop}[Properties of order $\alpha$ Augustin Information and Radius] \label{prop:I2}
Given any classical-quantum channel $\mathscr{W}:\mathcal{X}\to \mathcal{S(H)}$ with $|\mathcal{X}|< \infty$, the following hold:
\begin{enumerate}[(a)]
		\item\label{I2-mono} For every $P\in\mathscr{P}(\mathcal{X})$, $\alpha \mapsto I_\alpha^{(2)}(P,\mathscr{W})$ is monotone increasing on $[0,1]$, and $I_\alpha^{(2)}(P,\mathscr{W}) \leq \log |\mathcal{X}|$ for all $\alpha\in[0,1]$.
		
		\item\label{I2-Augustin_mean} For every $(\alpha,P)\in(0,1]\times \mathscr{P}(\mathcal{X})$, there exists a unique $\sigma_{\alpha,P} \in \mathcal{S(H)}$, termed Augustin mean, such that
		\begin{align}
		I_\alpha^{(2)}(P,\mathscr{W}) =  D_\alpha\left(\mathscr{W}\| \sigma_{\alpha,P} | P\right),
		\end{align}
		and
		\begin{align} \label{eq:fixed-point}
		\mathsf{T}_{\alpha,P}(\sigma) = \sigma \text{ and }
		\sigma \gg P\mathscr{W}
		\text{ if and only if } \sigma = \sigma_{\alpha,P},
		\end{align}	
		where the map $\mathsf{T}_{\alpha,P}:\mathcal{S}_{P,\mathscr{W}}(\mathcal{H}) \to \mathcal{S(H)}$ is defined as
		\begin{align}
		\mathsf{T}_{\alpha,P}(\sigma) = \sum_{x\in\mathcal{X}} P(x) \frac{\sigma^{\frac{1-\alpha}{2}} W_x^\alpha \sigma^{\frac{1-\alpha}{2}} }{\Tr\left[W_x^\alpha \sigma^{1-\alpha}\right]}.
		\end{align}		

		\item\label{I2-conc_P} For every $\alpha\in[0,1]$, the map $P\mapsto I_\alpha^{(2)}(P,\mathscr{W})$ is concave on $\mathscr{P}(\mathcal{X})$.
		
		\item\label{I2-conc_alpha} For every $P\in\mathscr{P}(\mathcal{X})$, $\alpha \mapsto \frac{1-\alpha}{\alpha} I_\alpha^{(2)}(P,\mathscr{W})$ is concave on $(0,1]$.
		
		\item\label{I2-cont_alpha} For every $P\in\mathscr{P}(\mathcal{X})$, $\alpha \mapsto I_\alpha^{(2)}(P,\mathscr{W})$ is continuous on $[0,1]$.
		
		\item\label{I2-cont_equi} The family of functions $\{ I_\alpha^{(2)}(P,\mathscr{W})  \}_{ \alpha\in[0,1]  }$ is uniformly equicontinuous in $P\in\mathscr{P}(\mathcal{X})$.
		Moreover,
		The map $(\alpha, P) \mapsto I_\alpha^{(2)}(P,\mathscr{W}) \text{ is jointly continuous on } [0,1]\times \mathscr{P}(\mathcal{X})$. 
		
		\item\label{I2-cont_mean} The map $(\alpha, P) \mapsto \sigma_{\alpha,P} \text{ is jointly continuous on } (0,1]\times \mathscr{P}(\mathcal{X})$. 
		
		\item\label{I2-cont_C}
		The map $\alpha \mapsto C_{\alpha,\mathscr{W}}$ is continuous and monotone increasing on $[0,1]$. 
\end{enumerate}
\end{prop}

The \emph{strong sphere-packing exponent} \cite{Dal13} of a c-q channel $\mathscr{W} : \mathcal{X}\to \mathcal{S(H)}$ and a rate $R\geq 0$ is defined by
\begin{align} \label{eq:sp_1}
E_\text{sp}(R) :=  \max_{P\in\mathscr{P}(\mathcal{X})} E_\text{sp}(R,P),
\end{align}
where
\begin{align} \label{eq:sp_1P}
E_\text{sp}(R,P) := \sup_{s\geq 0}\left\{ E_0(s,P) - sR\right\},
\end{align}
and $E_0$ is the \emph{auxiliary function} of the c-q channel $\mathscr{W}$ (see \cite{BH98, Hol00, HM16}):
\begin{align} \label{eq:E0}
E_0(s,P) :=  -\log \Tr \left[
\left( \sum_{x\in\mathcal{X}} P(x) \cdot W_x^{1/(1+s)}\right)^{1+s}
\right]
\end{align}
for all $P\in\mathscr{P}(\mathcal{X})$ and $s\geq0$.

The \emph{weak sphere-packing exponent} \cite{Win99} is defined as
\begin{align} \label{eq:sp_2}
\widetilde{E}_\text{sp}(R) :=
\max_{P\in\mathscr{P}(\mathcal{X})} \widetilde{E}_\text{sp}(R,P),
\end{align}
where
\begin{align} \label{eq:sp_2P}
\widetilde{E}_\text{sp}(R,P) :=
\min_{\mathscr{V}: \mathcal{X}\to \mathcal{S(H)}} \left\{
{D}\left(\mathscr{V}\|\mathscr{W}|P\right): I(P,\mathscr{V}) \leq R \right\}.
\end{align}
	
We also need the following definitions: for any $R\geq 0$ and $P\in\mathscr{P}(\mathcal{X})$,
\begin{align}
E_\text{sp}^{(1)} (R,P) &:=  \sup_{0<\alpha\leq 1}  \frac{1-\alpha}{\alpha}   \left(  I_\alpha^{(1)}(P,\mathscr{W}) -R \right); \label{eq:Esp1}\\
E_\text{sp}^{(2)} (R,P) &:= \sup_{0<\alpha\leq 1}   \frac{1-\alpha}{\alpha}   \left( I_\alpha^{(2)}(P,\mathscr{W}) -R \right), \label{eq:Esp2} 
\end{align}
	
Eq.~\eqref{eq:Jen} implies that (see also Theorem \ref{theo:dual_sp}) $E_\text{sp}^{(1)} (R,P)\leq E_\text{sp}^{(2)} (R,P)$.	By quantum Sibson's identity \cite{SW12}\footnote{
For joint state $\rho_{AB} \in\mathcal{S}(AB)$ with marginal states denoted by $\rho_A$ and $\rho_B$, quantum Sibson's identity \cite{SW12} states that the minimizer of $\min_{\sigma\in\mathcal{S(H)}} D_\alpha(\rho_{AB}\|\rho_A\otimes \sigma_B)$ is
	$ \left( \Tr_A\left[ \rho_{AB}^\alpha \right] \right)^{ 1/\alpha} / \Tr\left[  \left( \Tr_A\left[ \rho_{AB}^\alpha \right] \right)^{1/\alpha} \right] $.
For the case of classical-quantum channels, the minimizer of $\min_{\sigma\in\mathcal{H}} D_\alpha(P\circ\mathscr{W}\|P\otimes \sigma) $ is then $ \left( \sum_{x\in\mathcal{X}} P(x) W_x^{\alpha}  \right)^{1/\alpha} / \Tr\left[ \left( \sum_{x\in\mathcal{X}} P(x) W_x^{\alpha} \right)^{1/\alpha} \right] $.
}, 	one finds
\begin{align}
E_\text{sp}^{(1)} (R,P) = E_\text{sp} (R,P). \label{eq:Esp_Sibson}
\end{align}
Proposition \ref{prop:I2} and Eq.~\eqref{eq:radius} imply that the two quantities given in Eqs.~\eqref{eq:Esp1} and \eqref{eq:Esp2} are equal to the strong sphere-packing exponent by maximizing over the input distributions:
\begin{align}
E_\text{sp}(R) =  \max_{ P \in \mathscr{P}(\mathcal{X})} E_\text{sp}^{(1)} (R,P) = \max_{ P \in \mathscr{P}(\mathcal{X})} E_\text{sp}^{(2)} (R,P). \label{eq:Esp_max}
\end{align}
%Further, we define \cite[p.~152]{CK11}, \cite[Theorem 6]{Dal13}:
%\begin{align} \label{eq:R_inf1}
%C_{0,\mathscr{W}} := C_{0,\mathscr{W}}.
%%= \max_{P\in\mathscr{P}(\mathcal{X})} \min_{\sigma\in\mathcal{S(H)}} -\log \sum_{x\in\mathcal{X}} P(x) \Tr\left[ W_x^0 \sigma\right].
%\end{align}
%From the definitions in Eqs.~\eqref{eq:capacity} and \eqref{eq:R_inf1}, it can be verified that $C_{0,\mathscr{W}} \leq C_\mathscr{W}$ for all c-q channels $\mathscr{W}$.	
In Proposition \ref{prop:Esp} below, one has $E_\text{sp}(R) = +\infty$ for $R<C_{0,\mathscr{W}}$, and $E_\text{sp}(R) = 0$ as $R>C_\mathscr{W}$. Throughout this paper, we further assume that the considered c-q channel $\mathscr{W}$ satisfies $ C_{0,\mathscr{W}}  < C_\mathscr{W}.$ 
We note that $C_{0,\mathscr{W}}$ gives an upper bound to the \emph{zero-error capacity} of the channel. For details, we refer the readers to Ref.~\cite{Dal13}.
As we will show in Section \ref{sec:main}, the quantity $E_\text{sp}^{(2)} (R,P)$ plays a significant role in the connection between hypothesis testing and channel coding. Moreover, Proposition \ref{prop:saddle} below shows that the  the optimizer in Eqs.~\eqref{eq:I2}  and \eqref{eq:Esp2} forms a saddle-point. The proof closely follows Altu{\u{g}} and  Wagner \cite[Proposition 1]{AW14}, and is given in Appendix \ref{app:saddle}.

\begin{prop}[Saddle-Point] \label{prop:saddle}
Consider a classical-quantum channel $\mathscr{W} : \mathcal{X}\to \mathcal{S(H)}$, any $R\in(C_{0,\mathscr{W}}, C_\mathscr{W}   )$, and $P\in \mathscr{P}(\mathcal{X})$. Let\footnote{For $\alpha\in(0,1]$, $\mathcal{S}_{P,\mathscr{W}}(\mathcal{H})$ is the \emph{effective domain} of $F_{R,P}(\alpha,\cdot)$ \cite{Roc70}. Namely, $F_{R,P}(\alpha, \sigma)$ is finite for all  $\sigma \in \mathcal{S}_{P,\mathscr{W}}(\mathcal{H})$. Proposition~\ref{prop:saddle} then aims to find the saddle-points within its domain $(0,1]\times \mathcal{S}_{P,\mathscr{W}}(\mathcal{H})$.
Otherwise, the saddle-point makes no sense.}
\begin{align} 
\mathcal{S}_{P,\mathscr{W}}(\mathcal{H}) &:= \left\{ \sigma \in \mathcal{S(H)}: \forall x \in \textnormal{\texttt{supp}}(P), \, W_x \not\perp  \sigma   \right\}.
\end{align}
Define
\begin{align} \label{eq:F}
F_{R,P} (\alpha, \sigma) := 
\begin{dcases}
\frac{1-\alpha}{\alpha} \left(   D_\alpha\left( \mathscr{W} \| \sigma | P  \right) -R \right), & \alpha \in (0,1) \\
0, & \alpha = 1
\end{dcases},
\end{align}
on $(0,1]\times \mathcal{S}(\mathcal{H})$, and denote by
\begin{align} \label{eq:PR}
\mathscr{P}_R(\mathcal{X}) := \left\{ P\in\mathscr{P}(X) : \sup_{0<\alpha\leq 1} \inf_{\sigma \in \mathcal{S(H)}}  F_{R,P} (\alpha, \sigma) \in \mathbb{R}_{>0}    \right\}.
\end{align}
The following holds
\begin{enumerate}[(a)]
\item\label{saddle-a} For any $P\in\mathscr{P}(\mathcal{X})$, $F_{R,P}(\cdot,\cdot)$ has a saddle-point on $(0,1]\times \mathcal{S}_{P,\mathscr{W}}(\mathcal{H})$ with  the saddle-value:
\begin{align}
\min_{\sigma \in \mathcal{S(H)}}\sup_{0<\alpha\leq 1}   F_{R,P} (\alpha,\sigma) = \sup_{0<\alpha\leq 1} \min_{\sigma \in \mathcal{S(H)}}  F_{R,P} (\alpha,\sigma)= E_\textnormal{sp}^{(2)}(R,P).
\end{align}
			
\item\label{saddle-b} Fix $P\in\mathscr{P}_R(\mathcal{X})$. Any saddle-point $(\alpha_{R,P}^\star, \sigma_{R,P}^\star)$ of $F_{R,P}(\cdot,\cdot)$ satisfies $\alpha_{R,P}^\star\in(0,1)$ and 
\begin{align}
\sigma_{R,P}^\star 
\gg W_x, \quad \forall x \in \textnormal{\texttt{supp}}(P).
\end{align}
			
\item\label{saddle-c} For $P\in\mathscr{P}_R(\mathcal{X})$, the saddle-point is unique.

\item\label{saddle-d}  For any $\underline{R} \in (C_{0,\mathscr{W}}, R]$, both $\alpha_{r,P}^\star $ and $\sigma_{r,P}^\star$ are jointly continuous functions of $(r,P)$ on $[\underline{R},R]\times \mathscr{P}(\mathcal{X})$.
			
\end{enumerate}
\end{prop}
	
The following proposition discusses the continuity and differentiability of the error-exponent functions. The proof is shown in Appendix \ref{app:Esp}. 
	
\begin{prop}[Properties of Error-Exponent Functions] \label{prop:Esp}
Consider a classical-quantum channel $\mathscr{W} : \mathcal{X} \to \mathcal{S(H)}$ with $C_{0,\mathscr{W}} < C_\mathscr{W}$.  We have
\begin{enumerate}[(a)]%[label=(\alph*)]
			
\item\label{Esp-a}  
Given every $P\in\mathscr{P}(\mathcal{X})$, $E_\textnormal{sp}^{(2)}(\cdot,P)$ is convex and non-increasing on $[0,+\infty]$, and continuous  on $\left[ I_0^{(2)}(P,\mathscr{W}) ,  +\infty \right]$. For every $R>C_{0,\mathscr{W}}$, $E_\textnormal{sp}^{(2)}(R,\cdot)$ is continuous on $\mathscr{P}(\mathcal{X})$. Further,
\begin{align}
&E_\textnormal{sp}^{(2)} (R,P) =
\begin{dcases}
+\infty, & R< I_0^{(2)}(P,\mathscr{W}) \\
0, & R\geq I_1^{(2)}(P,\mathscr{W})	
\end{dcases}.
\end{align}
			
\item\label{Esp-b}   $E_\textnormal{sp}(\cdot)$ is convex and non-increasing  on $ [0,+\infty]$,  and continuous on $ [ C_{0,\mathscr{W}} ,  +\infty ]$. Further,
\begin{align}
&E_\textnormal{sp} (R) =
\begin{dcases}
+\infty, & R< C_{0,\mathscr{W}} \\
0, & R\geq C_{1,\mathscr{W}}
\end{dcases}.
\end{align}
			
\item\label{Esp-c} Consider any $R\in(C_{0,\mathscr{W}}, C_\mathscr{W})$ and $P\in\mathscr{P}_R(\mathcal{X})$ (see Eq.~\eqref{eq:PR}). The function $E_\textnormal{sp}^{(2)}(\cdot, P)$ is differentiable with
\begin{align} \label{eq:s2}
s_{R,P}^{\star} := \frac{1-\alpha_{R,P}^\star}{\alpha_{R,P}^\star} = - \left.\frac{\partial E_\textnormal{sp}^{(2)}(r,P)}{\partial r}\right|_{r=R} \in \mathbb{R}_{>0},
\end{align}
where $ \alpha_{R,P}^{\star} $ is the optimizer in Eq.~\eqref{eq:Esp2}.
Moreover, 
\begin{align}
I_{\alpha_{R,P}^\star} (P,\mathcal{W}) > R.
\end{align}
			
%\item\label{Esp-d} $s_{R,(\cdot)}^\star$ in Eq.~\eqref{eq:s2} is continuous on $\mathscr{P}_R(\mathcal{X})$.
\end{enumerate}
\end{prop}

Given any $R\in(C_{0,\mathscr{W}}, C_\mathscr{W})$ and $P\in\mathscr{P}_R(\mathcal{X})$, we define $E_\text{sp}'(R)$ as the \emph{left derivative} of $E_\text{sp}(R)$:
\begin{align}
E_\text{sp}'(R)  := \lim_{\delta \downarrow 0} \frac{ E_\text{sp}(R-\delta) - E_\text{sp}(R) }{ -\delta}.
\end{align}
Since $E_\text{sp}(R)$ is continuous in $R$ by Proposition~\ref{prop:Esp}-\ref{Esp-b}, the above definition is well-defined.
Moreover, Eq.~\eqref{eq:Esp_max} and  Proposition~\ref{prop:Esp}-\ref{Esp-c} imply that the absolute value of $E_\text{sp}'(R)$ means the maximum absolute value of the slope of the sphere-packing exponent at $R$ by
\begin{align} \label{eq:s_star}
\left| E_\text{sp}'(R) \right| = 
\max_{P: E_\text{sp}^{(2)}(R,P) = E_\text{sp}(R)   } s^\star_{R,P}.
\end{align}
%where the last equality follows from Eq.~\eqref{eq:radius} and item \ref{Esp-c} in Proposition~\ref{prop:Esp}. 
Note that the term $\left| E_\text{sp}'(R) \right|$ in Eq.~\eqref{eq:s_star} is well-defined and finite by item \ref{saddle-d} in Proposition \ref{prop:saddle}.

\subsection{Quantum Hypothesis Testing and Channel Coding} \label{ssec:binay}
	Consider a binary hypothesis whose null and alternative hypotheses are $\rho\in\mathcal{S(H)}$ and $\sigma\in\mathcal{S(H)}$, respectively. The \emph{type-I error} and \emph{type-II error} of the hypothesis testing, for an operator $0\leq Q\leq \mathds{1}$, are defined as:
	\begin{align}
	\alpha\left(Q;\rho\right)&:= \Tr\left[ (\mathds{1}-Q) \rho \right],\\
	\beta\left(Q;\sigma\right)&:= \Tr\left[ Q \sigma \right].
	\end{align}
	There is a trade-off relation between these two errors. Thus we can define the minimum Type-I error when the type-II error is below $\mu\in(0,1)$ as
	\begin{align} \label{eq:alpha}
	\widehat{\alpha}_{\mu}\left(\rho\|\sigma\right)
	:= \min_{0\leq Q\leq \mathds{1} } \big\{ \alpha\left(Q;\rho\right) : \beta\left(Q;\sigma\right) \leq \mu  \big\}.
	\end{align}
	
	We define an error-exponent function \cite{Hay06, NH07, ANS+08} for two sequences of states
	\begin{align}
	&\mathsf{H}_0: \rho^n = \rho_1\otimes \rho_2\otimes \cdots \otimes \rho_n,\\
	&\mathsf{H}_1: \sigma^n = \sigma_1\otimes \sigma_2\otimes \cdots \otimes \sigma_n,
	\end{align}
by 
\begin{align} \label{eq:exponent_HT}
\phi_n\left( r| \rho^n \| \sigma^n  \right)
:= \sup_{\alpha\in(0,1]} \left\{ \frac{1-\alpha}{\alpha} \left( \frac1n D_\alpha\left(\rho^n\|\sigma^n\right) - r \right)  \right\},\quad r\geq 0.
\end{align}
It is known that \cite[Lemma 4]{ANS+08}
\begin{align} \label{eq:phi}
\phi_n\left( r| \rho^n \| \sigma^n  \right) = +\infty, \quad \forall r\in \left[0, - \frac1n D_0\left(\rho^n\|\sigma^n\right) \right).
\end{align}
		
Let $\mathcal{M}$ be a finite alphabetical set with size $M=|\mathcal{M}|$. An ($n$-block) \emph{encoder} is a map $f_n:\mathcal{M}\to \mathcal{X}^n$ that encodes each message $m\in\mathcal{M}$ to a codeword $\mathbf{x}^n(m) :=   x_1(m) x_2(m) \ldots x_n(m) \in\mathcal{X}^n$. The codeword $\mathbf{x}^n(m)$ is then mapped to a state
\begin{align}
W_{\mathbf{x}^n(m)}^{\otimes n} = W_{x_1(m)} \otimes W_{x_2(m)} \otimes \cdots \otimes W_{x_n(m)} \in \mathcal{S}(\mathcal{H}^{\otimes n}).
\end{align}
The \emph{decoder} is described by a positive operator-valued measurement (POVM) $\Pi_n = \{\Pi_{n,1},\ldots, \Pi_{n,M} \}$ on $\mathcal{H}^{\otimes n}$, where $\Pi_{n,i} \geq 0$ and $\sum_{i=1}^{M} \Pi_{n,i} = \mathds{1}$. The pair $(f_n, \Pi_n) =: \mathcal{C}_n$ is called an $(n,R)$-\emph{code} with \emph{rate} $R = \frac1n \log |\mathcal{C}_n| = \frac1n \log M$.  The error probability of  sending a message $m$ with the code $ \mathcal{C}_n$ is $\epsilon_m(\mathcal{C}_n) :=  1- \Tr\left(\Pi_{n,m} W_{\mathbf{x}^n(m)}^{\otimes n}\right)$. We use $\epsilon_\text{max}(\mathcal{C}_n) = \max_{m\in\mathcal{M}} \epsilon_m(\mathcal{C}_n) $ and $\bar{\epsilon}(\mathcal{C}_n) = \frac1M \sum_{m\in\mathcal{M}} \epsilon_m(\mathcal{C}_n)$ to denote the \emph{maximal} error probability and the \emph{average} error probability, respectively.  	
Denote by $\epsilon^*\left(n, R \right)$ the {smallest} average probability of error among all the coding strategies with a blocklength $n$ and coding rate $R$, i.e.
\begin{align}
\epsilon^*(n,R) := \inf \{ \bar{\epsilon}(\mathcal{C}_n) : \mathcal{C}_n \text{ is an }(n,R)\text{-code}\}.
\end{align}
The reliability function of the channel $\mathscr{W}$ and the coding rate $R$ is defined by\footnote{Throughout this paper, we skip the dependence of the channel $\mathscr{W}$ in the reliability function and error-exponent functions.}
\begin{align} \label{eq:reliability}
E(R) := \limsup_{n\to +\infty} -\frac1n \log \epsilon^*\left(n,R\right).
\end{align}
Winter \cite{Win99} and Dalai \cite{Dal13} showed that the reliability function of a c-q channel can be upper bounded by $E(R)\leq \widetilde{E}_\text{sp}(R)$ and $E(R)\leq E_\text{sp}(R)$, respectively.
Given a sequence ${\mathbf{x}}^n \in \mathcal{X}^n$, we denote by 
\begin{align} \label{eq:empirical}
P_{\mathbf{x}^n} (x) := \frac1n \sum_{i=1}^n \mathbf{1}\left\{ x = x_i \right\}, \quad \forall x\in\mathcal{X}
\end{align}
the empirical distribution of $\mathbf{x}^n$, where $x_i$ is the $i$-th position of $\mathbf{x}^n$; and $\mathbf{1}\{A\} = 1$ if the event $A$ is true; otherwise $\mathbf{1}\{A\} = 0$.
A constant composition code with a composition $P_{\mathbf{x}^n}$ refers to a codebook whose codewords all have the same distribution $P_{\mathbf{x}^n}$.

\subsection{Nussbaum-Szko{\l}a Distributions} \label{ssec:NS}
Assume the dimension of the Hilbert space $\mathcal{H}$ is $d$. Given density operators $\rho,\sigma \in\mathcal{S(H)}$ with spectral decompositions
\begin{align}
\rho = \sum_{i\in[d]} \lambda_i |e_i\rangle\langle e_i |, \quad \text{and} \quad
\sigma = \sum_{j\in[d]} \mu_j |f_j\rangle\langle f_j |,
\end{align}
we define the \emph{Nussbaum-Szko{\l}a distributions} \cite{NS09} $p^{\rho,\sigma}, q^{\rho,\sigma}$ as
\begin{align} \label{eq:NS_idx}
p^{\rho,\sigma}(i,j) := \lambda_i |\langle e_i | f_j \rangle |^2, \quad
q^{\rho,\sigma}(i,j) := \mu_j |\langle e_i | f_j \rangle |^2.
\end{align}
The distributions $p^{\rho,\sigma},q^{\rho,\sigma}$ have the same mathematical properties as the density operators $\rho,\sigma$ in some cases, and thus are useful in the sequel. First, one can verify that \cite{NS09,TH13},
\begin{align} \label{eq:NS1}
D_\alpha\left(\rho\|\sigma \right) = D_\alpha \left( p^{\rho,\sigma} \| q^{\rho,\sigma} \right), \quad \forall \alpha\in[0,1].
\end{align}
Second, for product states $\rho_1\otimes \rho_2$ and $\sigma_1\otimes\sigma_2$, we have
\begin{align} \label{eq:NS2}
p^{\rho_1\otimes \rho_2, \sigma_1\otimes \sigma_2} = p^{\rho_1,\sigma_1} \otimes p^{\rho_2,\sigma_2}, \quad \text{and} \quad
q^{\rho_1\otimes \rho_2, \sigma_1\otimes \sigma_2} = q^{\rho_1,\sigma_1} \otimes q^{\rho_2,\sigma_2}.
\end{align}
Third, $\rho\ll\sigma$ if and only if $p^{\rho,\sigma} \ll q^{\rho,\sigma}$. Moreover, we will use $\omega$ to represent the pair of indices $(i,j)$ in Eq.~\eqref{eq:NS_idx}, and view the distributions $p^{\rho,\sigma},q^{\rho,\sigma}$ as diagonal matrices, e.g.~$\Tr\left[ p^{\rho,\sigma} \right] = \sum_{\omega\in [d]\times [d]} p^{\rho,\sigma}(\omega) $.

\section{Relation between the Strong and Weak Sphere-Packing Exponents} \label{sec:relation}
	
This section derives alternative formulations of the strong and weak sphere-packing exponents of Eqs.~\eqref{eq:sp}-\eqref{eq:sp2}, and provides a relation between these two exponents. As we will show later, the derived formulations are essentially optimization problems in the primal domain, while the expressions in Eqs.~\eqref{eq:sp} and \eqref{eq:sp2} are corresponding dual representations.
	
We first consider the following convex optimization problem and then exploit it to establish variational formulations of the sphere-packing exponents. Let $\rho, \tau \in \mathcal{S(H)}$ be two density operators. Consider the following convex optimization problem:
\begin{align} \label{eq:Primal}
\begin{split}
\mathrm{(P)} \quad & e(r) := \inf_{\sigma \in\mathcal{S}(\mathcal{H})} {D}\left( \sigma\|\rho\right),\\
&\text{subject to} \quad
{D}\left( \sigma\|\tau\right) \leq r.
\end{split}
\end{align}
The above primal problem is interpreted as finding the optimal operator $\sigma^\star$ that achieves the minimum  relative entropy $e(r)$ to $\rho$, within $r$-radius to $\tau$. The following result shows the dual representation of problem $\mathrm{(P)}$ via Lagrangian duality.
	
	\begin{lemm}[{\cite[Section 3.7]{Hay06}, \cite{Nag06}, \cite[Theorem~3.6]{MO14b}}] \label{lemm:dual}
		The dual problem of $\mathrm{(P)}$ is given by
		\begin{align}
		\mathrm{(D)} \quad \sup_{ s\geq 0 } \left\{ - (1+s) \log Q^\flat_{\frac1{1+s}}(\rho\|\tau)  -sr 
		\right\}.
		\end{align}
	\end{lemm}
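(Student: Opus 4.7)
The plan is to derive the dual via Lagrangian optimisation, exploiting the fact that relative entropy is a Bregman divergence whose minimiser against a fixed ``log-linear'' combination is the log-Euclidean (chaotic) mean. First, I would form the Lagrangian
\[ L(\sigma,s) := {D}(\sigma\|\rho) + s\bigl({D}(\sigma\|\tau) - r\bigr), \qquad s\geq 0, \]
and collect entropy terms to rewrite
\[ L(\sigma,s) = (1+s)\bigl( \Tr[\sigma\log\sigma] - \Tr[\sigma M_s] \bigr) - sr, \qquad M_s := \tfrac{1}{1+s}\log\rho + \tfrac{s}{1+s}\log\tau. \]
The bracketed term is the (unnormalised) relative entropy of $\sigma$ against $\mathrm{e}^{M_s}$, and by definition \eqref{eq:chaotic} its normalising constant $Z_s := \Tr[\mathrm{e}^{M_s}]$ equals $Q^\flat_{1/(1+s)}(\rho\|\tau)$.

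Next I would minimise $L(\sigma,s)$ over $\sigma\in\mathcal{S}(\mathcal{H})$. Introducing $\sigma_s := \mathrm{e}^{M_s}/Z_s$, Klein's inequality (equivalently $D(\sigma\|\sigma_s)\geq 0$, saturated only at $\sigma=\sigma_s$) yields
\[ \inf_{\sigma\in\mathcal{S}(\mathcal{H})} L(\sigma,s) = -(1+s)\log Q^\flat_{1/(1+s)}(\rho\|\tau) - sr. \]
Weak duality is then immediate: for any primal-feasible $\sigma$, $L(\sigma,s)\leq D(\sigma\|\rho)$, so taking the infimum over $\sigma\in\mathcal{S}(\mathcal{H})$ and then the supremum over $s\geq 0$ gives $\sup_{s\geq 0}\{-(1+s)\log Q^\flat_{1/(1+s)}(\rho\|\tau) - sr\} \leq e(r)$.

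The matching upper bound (strong duality) follows from standard convex duality: the objective $\sigma\mapsto D(\sigma\|\rho)$ and the constraint function $\sigma\mapsto D(\sigma\|\tau)$ are both convex and lower semi-continuous on $\mathcal{S}(\mathcal{H})$ (Lemma~\ref{lemma:chaotic}), and Slater's condition holds whenever $r>0$ since $\sigma=\tau$ is strictly feasible with $D(\tau\|\tau)=0<r$. Hence there is no duality gap and $e(r)=\sup_{s\geq 0}\inf_{\sigma} L(\sigma,s)$, matching the stated dual expression.

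The main technical obstacle I foresee is the support issue: if $\rho$ or $\tau$ is not full-rank, then $\log\rho,\log\tau$ and hence $M_s$ are only defined on the respective supports, and one must read $\mathrm{e}^{M_s}$ through the $\delta\downarrow 0$ regularisation of \eqref{eq:delta_cont}. I would handle this by first proving the identity for $\rho,\tau\in\mathcal{S}_{>0}(\mathcal{H})$, where every step above is rigorous, and then passing to the general case by replacing $\rho,\tau$ with $\rho+\delta\mathds{1},\tau+\delta\mathds{1}$ and letting $\delta\downarrow 0$, using the continuity of $Q^\flat_\alpha$ provided by Lemma~\ref{lemma:chaotic} together with the lower semi-continuity of $\sigma\mapsto D(\sigma\|\rho)$ to exchange the limit with the supremum/infimum. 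Edge cases ($r=0$, or $\rho\not\ll\tau$) can be checked separately by direct inspection, noting that both sides are $+\infty$ whenever the feasible set is empty.
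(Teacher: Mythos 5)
Your approach is essentially the same as the paper's: both proceed by Lagrangian duality, forming $L(\sigma,s)=D(\sigma\|\rho)+s\bigl(D(\sigma\|\tau)-r\bigr)$ and collecting the two relative-entropy terms into a weighted combination. The difference is one of granularity. The paper treats the inner minimisation
\[
\inf_{\sigma\in\mathcal{S(H)}}\left\{\tfrac{1}{1+s}D(\sigma\|\rho)+\tfrac{s}{1+s}D(\sigma\|\tau)\right\}=-\log Q^\flat_{1/(1+s)}(\rho\|\tau)
\]
as a black box, simply citing \cite[Theorem~III.5]{MO14b}, and does not pause to justify that the Lagrangian reformulation is an equality (strong duality) rather than merely a lower bound. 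You instead unpack the citation: rewriting the bracketed term as $D(\sigma\|\sigma_s)-\log Z_s$ with $\sigma_s=\mathrm{e}^{M_s}/Z_s$ and invoking Klein's inequality gives the same minimum value and identifies the minimiser as the chaotic mean, which is exactly the content of the cited result. You also explicitly invoke Slater's condition to close the duality gap, and you flag the support/regularisation subtlety via Eq.~\eqref{eq:delta_cont} for non-full-rank $\rho,\tau$ --- both points the paper leaves implicit. So this is not a genuinely different route, but it is a more self-contained and more careful version of the same argument; the only thing you lose relative to the paper is brevity. One small caution: your Slater argument ($\sigma=\tau$ is strictly feasible) requires $r>0$, and the case $r=0$ together with $\rho\not\ll\tau$ should indeed be verified separately, as you note, since then both the primal value and the dual supremum can be $+\infty$ and the equality is by convention rather than by a duality theorem.
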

	\begin{proof}
By the method of Lagrange multipliers, the primal problem in Eq.~\eqref{eq:Primal} can be rewritten as
		\begin{align}
		&\sup_{s\geq 0} \inf_{\sigma \in \mathcal{S(H)} } \left\{  
		{D}(\sigma \| \rho) + s \left({D}(\sigma\|\tau) - r \right)
		\right\} \\
		&= \sup_{s\geq 0}  \left\{   (1+s)\inf_{\sigma \in \mathcal{S(H)} } \left\{
		\frac{1}{1+s}{D}(\sigma \| \rho) + \frac{s}{1+s} {D}(\sigma\|\tau) \right\} - sr
		\right\} \\
&= \sup_{ s\geq 0 } \left\{ - (1+s) \log Q^\flat_{\frac1{1+s}}(\rho\|\tau)  -sr 
\right\}, \label{eq:Lagrange}
\end{align}
where the last equality follows from \cite[Theorem~3.6]{MO14b}.
\end{proof}

\begin{theo}[Variational Representations of the Sphere-Packing Exponents] \label{theo:dual_sp}
Let $\mathscr{W}:\mathcal{X}\to \mathcal{S(H)}$ be a classical-quantum channel. For any $R> C_{0,\mathscr{W}}$, we have
\begin{align}
\widetilde{E}_\textnormal{sp}(R,P) 
&=  \sup_{0<\alpha\leq 1} \min_{\sigma\in\mathcal{S(H)}}  \left\{  \frac{1-\alpha}{\alpha} \left(    D_{\alpha}^\flat \left(\mathscr{W}\|\sigma|P\right)  - R \right) \right\}, \quad\text{and}  \label{eq:dual_sp02} \\	
E_\textnormal{sp}(R,P) 
&\leq \sup_{0<\alpha\leq 1} \min_{\sigma\in\mathcal{S(H)}}  \left\{  \frac{1-\alpha}{\alpha} \left(  D_{\alpha} \left(\mathscr{W}\|\sigma|P\right) -R \right) \right\}, \label{eq:dual_sp01} 
\end{align}
where $\widetilde{E}_\textnormal{sp}(R,P)$ and $E_\textnormal{sp}(R,P)$ are defined in Eqs.~\eqref{eq:sp_2P} and \eqref{eq:sp_1P}, respectively.
		
Moreover, equality in Eq.~\eqref{eq:dual_sp01} is attained when maximizing over all prior distributions, i.e.,
		\begin{align}
		E_\textnormal{sp}(R) = \max_{P\in\mathscr{P}(\mathcal{X})} E_\textnormal{sp}(R,P)
		&= \max_{P\in\mathscr{P}(\mathcal{X})} \sup_{0<\alpha\leq 1} \min_{\sigma\in\mathcal{S(H)}}  \left\{  \frac{1-\alpha}{\alpha} \left(  D_{\alpha} \left(\mathscr{W}\|\sigma|P\right) - R \right) \right\}. \label{eq:dual_sp03} 
		\end{align}
	\end{theo}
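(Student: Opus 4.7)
The plan is to derive both variational formulas by combining Lagrangian duality (Lemma~\ref{lemm:dual}) with a minimax swap for the weak identity (Eq.~\eqref{eq:dual_sp02}), and to then obtain the strong equality (Eq.~\eqref{eq:dual_sp03}) by chaining quantum Sibson's identity (Eq.~\eqref{eq:Esp_Sibson}), Jensen's inequality (Eq.~\eqref{eq:Jen}), and the R\'enyi capacity identity $\max_P I_\alpha^{(1)} = \max_P I_\alpha^{(2)}$ of Mosonyi--Ogawa (Eq.~\eqref{eq:radius}).

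For Eq.~\eqref{eq:dual_sp02}, I would first observe that $I(P,\mathscr{V}) = \inf_\sigma D(\mathscr{V}\|\sigma|P)$, so the constraint $I(P,\mathscr{V}) \leq R$ is equivalent to the existence of some $\sigma \in \mathcal{S(H)}$ with $D(\mathscr{V}\|\sigma|P) \leq R$. Interchanging the two infima gives
\[
\widetilde{E}_\textnormal{sp}(R,P) = \min_{\sigma\in\mathcal{S(H)}}\, \inf_{\mathscr{V}:\, D(\mathscr{V}\|\sigma|P)\leq R}\, D(\mathscr{V}\|\mathscr{W}|P).
\]
For fixed $\sigma$, introducing a single Lagrange multiplier $s \geq 0$ for the summed constraint $\sum_x P(x)D(V_x\|\sigma)\leq R$ decouples the inner infimum over the components $V_x$, and each coordinatewise subproblem $\inf_{V_x}\{D(V_x\|W_x) + s\,D(V_x\|\sigma)\}$ is handled by the argument in the proof of Lemma~\ref{lemm:dual} (with $\rho = W_x$, $\tau = \sigma$), yielding the value $-(1+s)\log Q^\flat_{1/(1+s)}(W_x\|\sigma)$. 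After reparametrizing $\alpha = 1/(1+s)$, the Lagrangian value becomes $\frac{1-\alpha}{\alpha}[D^\flat_\alpha(\mathscr{W}\|\sigma|P) - R]$, so
\[
\widetilde{E}_\textnormal{sp}(R,P) = \min_{\sigma}\, \sup_{0<\alpha\leq 1}\, \frac{1-\alpha}{\alpha}\bigl[D^\flat_\alpha(\mathscr{W}\|\sigma|P) - R\bigr].
\]
I would close the argument by invoking Sion's minimax theorem on the compact convex set $\mathcal{S(H)}$. Convexity of $\sigma \mapsto D^\flat_\alpha(\mathscr{W}\|\sigma|P)$ follows from the joint concavity of $Q^\flat_\alpha$ (Lemma~\ref{lemma:chaotic}, Eq.~\eqref{eq:chaotic3}) combined with the negative prefactor $(\alpha-1)^{-1}<0$ for $\alpha\in(0,1)$, while concavity in $s$ of the dual objective $\sum_x P(x)[-(1+s)\log Q^\flat_{1/(1+s)}(W_x\|\sigma)] - sR$ is automatic, since it is an infimum over $\mathscr{V}$ of functions affine in $s$.

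For Eq.~\eqref{eq:dual_sp01}, the inequality is a direct chain: $E_\textnormal{sp}(R,P) = E_\textnormal{sp}^{(1)}(R,P) \leq E_\textnormal{sp}^{(2)}(R,P)$ by Eqs.~\eqref{eq:Esp_Sibson} and~\eqref{eq:Jen}, and the right-hand side of Eq.~\eqref{eq:dual_sp01} coincides with $E_\textnormal{sp}^{(2)}(R,P)$ once the inner $\inf_\sigma$ is identified with a minimum by lower semi-continuity (Lemma~\ref{lemma:chaotic}, Eq.~\eqref{eq:chaotic4}) and compactness of $\mathcal{S(H)}$. For the equality Eq.~\eqref{eq:dual_sp03}, I would take $\max_P$ of both sides of Eq.~\eqref{eq:dual_sp01} and then swap the two outer suprema (always legitimate), giving
\[
\max_{P}\, E_\textnormal{sp}^{(k)}(R,P) = \sup_{0<\alpha\leq 1}\, \frac{1-\alpha}{\alpha}\bigl[\max_{P} I_\alpha^{(k)}(P,\mathscr{W}) - R\bigr], \quad k \in \{1,2\}.
\]
By the Mosonyi--Ogawa identity (Eq.~\eqref{eq:radius}), both expressions equal $\sup_\alpha \frac{1-\alpha}{\alpha}[C_{\alpha,\mathscr{W}}-R]$, collapsing the inequalities $\max_P E_\textnormal{sp} \leq \max_P E_\textnormal{sp}^{(1)} \leq \max_P E_\textnormal{sp}^{(2)}$ to equalities.

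The main obstacle is the justification of strong Lagrangian duality in the weak case: the primal is a convex program, but I would need to verify a Slater-type condition, namely that for each $\sigma$ of interest there exists $\mathscr{V}$ with $D(\mathscr{V}\|\sigma|P) < R$, which should follow from the standing hypothesis $R > R_\infty$ by choosing $\sigma$ sufficiently close to $P\mathscr{W}$ and $\mathscr{V}$ close to $\mathscr{W}$. Secondary technicalities are the careful boundary handling at $\alpha = 1$ (equivalently $s=0$), where the objective vanishes, and the lower semi-continuity of $\sigma \mapsto D^\flat_\alpha(\mathscr{W}\|\sigma|P)$ at non-full-rank $\sigma$, which I would inherit from the regularization in Eq.~\eqref{eq:delta_cont}.
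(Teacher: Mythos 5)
Your proposal is correct and follows essentially the same route as the paper: Lagrangian duality via Lemma~\ref{lemm:dual} plus a minimax swap for Eq.~\eqref{eq:dual_sp02}, Jensen's inequality and quantum Sibson's identity for Eq.~\eqref{eq:dual_sp01}, and the Mosonyi--Ogawa R\'enyi-capacity identity for Eq.~\eqref{eq:dual_sp03}. The only departures are expository: you commute the two infima over $\sigma$ and $\mathscr{V}$ before introducing the Lagrange multiplier (a clean variant of the paper's passage through Eq.~\eqref{eq:dual_sp2}), and you invoke Sion's theorem where the paper invokes Proposition~\ref{prop:minimax}; these are interchangeable once one notes that concavity of the dual objective in $s$ yields quasi-concavity of $\alpha\mapsto \frac{1-\alpha}{\alpha}\bigl[D_\alpha^\flat(\mathscr{W}\|\sigma|P)-R\bigr]$ after the substitution $\alpha=1/(1+s)$. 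One small correction: your proposed Slater fix (choosing $\sigma$ near $P\mathscr{W}$) is misdirected, since the Lagrangian step must hold for \emph{every} fixed $\sigma$ in the outer minimization; the right observation is that $V_x=\sigma$ for all $x$ always gives a strictly feasible point $D(\mathscr{V}\|\sigma|P)=0<R$, while for degenerate $\sigma$ with $\sigma\perp W_x$ for some $x\in\texttt{supp}(P)$ both primal and dual equal $+\infty$, so strong duality is automatic there --- the same delegation the paper implicitly makes to Lemma~\ref{lemm:dual}.
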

	\begin{proof}
We start with the proof of Eq.~\eqref{eq:dual_sp02}. Recall Eq.~\eqref{eq:mutual2}:
\begin{align}
 I(P,\mathscr{V})
 = \min_{\sigma\in\mathcal{S(H)}} {D}\left( \mathscr{V}\|\sigma|P\right).
\end{align}
We find
\begin{align}
\widetilde{E}_\text{sp}(R,P)&=  \min_{\mathscr{V}: \mathcal{X}\to \mathcal{S(H)}} \left\{ {D}\left(\mathscr{V}\|\mathscr{W}|P\right): I(P,\mathscr{V}) \leq R \right\} \\
&=  \min_{\mathscr{V}: \mathcal{X}\to \mathcal{S(H)}}  \left\{ {D}\left(\mathscr{V}\|\mathscr{W}|P\right): \min_{\sigma\in\mathcal{S(H)}} {D}\left( \mathscr{V}\|\sigma|P\right) \leq R \right\} \\
&=  \sup_{s\geq 0} \min_{\mathscr{V}: \mathcal{X}\to \mathcal{S(H)}}  \left\{{D}\left(\mathscr{V}\|\mathscr{W}|P\right) + s\left( \min_{\sigma\in\mathcal{S(H)}} {D}\left( \mathscr{V}\|\sigma|P\right) - R \right)\right\} \label{eq:dual_sp1}\\
&= \sup_{s\geq 0} \min_{\sigma\in\mathcal{S(H)}} \min_{\mathscr{V}: \mathcal{X}\to \mathcal{S(H)}}  \left\{ -sR +\sum_{x\in\mathcal{X}} P(x) \left[ {D}\left(V_x \|W_x \right) + s \cdot {D}\left( V_x \| \sigma\right) \right] \right\} \\
&= \sup_{s\geq 0} \min_{\sigma\in\mathcal{S(H)}}  \left\{ \sum_{x\in\mathcal{X}} P(x) \min_{V_x \in \mathcal{S(H)} }  \left[ {D}\left(V_x \|W_x \right) + s \cdot {D}\left( V_x \| \sigma\right) - sR \right] \right\} \label{eq:dual_sp2} \\
&= \sup_{s\geq 0}\min_{\sigma\in\mathcal{S(H)}}  \left\{ \sum_{x\in\mathcal{X}} P(x) 
\left[ - (1+s) \log Q_{\frac{1}{1+s}}^\flat (W_x\|\sigma) - s R
\right]
\right\}. \label{eq:dual_sp00}
\end{align}
In the third equality we introduced the constraint into the objective function via the Lagrange multiplier $s\geq 0$; The fifth equality follows from the linearity of the convex combination; and the last line  is due to Lemma~\ref{lemm:dual}.
Further, we use the substitution $\alpha = 1/(1+s)$ and recall the definition of the log-Euclidean $\alpha$-R\'enyi divergence given in Eq.~\eqref{eq:chaotic} 
to obtain:
\begin{align}
\widetilde{E}_\text{sp}(R,P)
&=  \sup_{0<\alpha\leq 1} \min_{\sigma\in\mathcal{S(H)}}  \left\{  \frac{1-\alpha}{\alpha} \left(    D_{\alpha}^\flat \left(\mathscr{W}\|\sigma|P\right) -R \right) \right\}. \label{eq:dual_sp5}
\end{align}
Hence, we prove the first claim in Eq.~\eqref{eq:dual_sp02}.
		
Next, we will prove Eq.~\eqref{eq:dual_sp01}.  From Jensen's inequality and the concavity of the logarithm, the right-hand side of Eq.~\eqref{eq:dual_sp01} implies that
\begin{align}
&\sup_{0<\alpha\leq 1} \min_{\sigma\in\mathcal{S(H)}}  \left\{  \frac{1-\alpha}{\alpha} \left(  \sum_{x\in\mathcal{X}}P(x) D_{\alpha} \left(W_x\|\sigma\right) -R \right) \right\}  \label{eq:dual_sp10}\\
&= \sup_{0< \alpha\leq 1} \min_{\sigma\in\mathcal{S(H)}} \left\{  -\frac1{\alpha} \sum_{x\in\mathcal{X}}P(x) \log\Tr\left[ W_x^\alpha \sigma^{1-\alpha} \right] -\frac{1-\alpha}{\alpha}R \right\} \label{eq:dual_sp8}\\
&\geq  \sup_{0< \alpha\leq 1} \min_{\sigma\in\mathcal{S(H)}} \left\{  -\frac1{\alpha} \log\Tr\left[ \sum_{x\in\mathcal{X}}P(x) \left[ W_x^\alpha \sigma^{1-\alpha}\right] \right] - \frac{1-\alpha}{\alpha} R \right\}	\label{eq:dual_sp9}\\
&= E_\text{sp}(R,P), \label{eq:dual_sp11}
\end{align}
where the last equality follows from Eq.~\eqref{eq:Esp_Sibson}.
		
Finally, we invoke the following identity proved by Mosonyi and Ogawa \cite[Proposition 4.2]{MO14b}:
\begin{align}
\max_{P\in\mathscr{P}(\mathcal{X})} \min_{\sigma\in\mathcal{S(H)}}    D_{\alpha} \left(\mathscr{W}\|\sigma|P\right) 
&= \max_{P\in\mathscr{P}(\mathcal{X})} \min_{\sigma\in\mathcal{S(H)}} D_\alpha \left( P\circ \mathscr{W} \| P \otimes \sigma \right). \label{eq:g_Hol}
\end{align}
Note that the above relation also holds for $D_\alpha^\flat$.
Then, combining Eqs.~\eqref{eq:sp_1P}, \eqref{eq:Esp_Sibson} and \eqref{eq:sp_1}, Eq.~\eqref{eq:dual_sp03} holds as follows:
\begin{align}
E_\text{sp}(R) &= \max_{P\in\mathscr{P}(\mathcal{X})} E_\text{sp}(R,P) \\
&=  \max_{P\in\mathscr{P}(\mathcal{X})} E_\text{sp}^{(1)}(R,P) \\
&= \max_{P\in\mathscr{P}(\mathcal{X})} \min_{\sigma\in\mathcal{S(H)}} D_\alpha \left( P\circ \mathscr{W} \| P \otimes \sigma \right) \\
&=  \max_{P\in\mathscr{P}(\mathcal{X})} \min_{\sigma\in\mathcal{S(H)}}    D_{\alpha} \left(\mathscr{W}\|\sigma|P\right).
\end{align}

%\begin{prop}[{\cite[Proposition 21]{HT14}}] \label{prop:minimax}
%Let $\mathcal{A}\subset \mathbb{R}_{\geq 0}$ be a convex set and let $\mathcal{B}$ be a compact Hausdorff space. Further, let $f:\mathcal{A}\times \mathcal{B} \to \mathbb{R}$ be concave on $\mathcal{A}$ as well as convex on $\mathcal{B}$. Then
%\begin{align}
%\sup_{x\in\mathcal{A}} \inf_{y\in\mathcal{B}} \frac{f(x,y)}{x} = \inf_{y\in\mathcal{B}} \sup_{x\in\mathcal{A}} \frac{f(x,y)}{x}.
%\end{align}
%\end{prop}		
		
\end{proof}
	
The following corollary is a simple consequence of the variational representations of the sphere-packing exponents in Theorem \ref{theo:dual_sp} and Eq.~\eqref{eq:compare} .
\begin{coro}
For any classical-quantum channel $\mathscr{W}:\mathcal{X}\to \mathcal{S(H)}$, $R> C_{0,\mathscr{W}}$, and $P\in\mathscr{P}(\mathcal{X})$, it holds that
\begin{align}
E_\textnormal{sp} (R,P) \leq \widetilde{E}_\textnormal{sp} (R,P).
\end{align}
\end{coro}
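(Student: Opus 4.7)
The plan is to chain together the two ingredients the corollary explicitly names. From Theorem~\ref{theo:dual_sp}, I already have the representations
\begin{align*}
\widetilde{E}_{\textnormal{sp}}(R,P) &= \sup_{0<\alpha\leq 1} \min_{\sigma\in\mathcal{S(H)}} \frac{1-\alpha}{\alpha}\left(D_\alpha^\flat(\mathscr{W}\|\sigma|P) - R\right), \\
E_{\textnormal{sp}}(R,P) &\leq \sup_{0<\alpha\leq 1} \min_{\sigma\in\mathcal{S(H)}} \frac{1-\alpha}{\alpha}\left(D_\alpha(\mathscr{W}\|\sigma|P) - R\right),
\end{align*}
so everything reduces to a pointwise comparison of the two inner expressions in $(\alpha,\sigma)$, followed by an appeal to monotonicity of $\min$ and $\sup$.

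For the pointwise comparison I would invoke the Golden--Thompson inequality Eq.~\eqref{eq:compare}, which gives $Q_\alpha^\flat(W_x\|\sigma)\leq Q_\alpha(W_x\|\sigma)$ for every $x\in\mathcal{X}$ and $\alpha\in(0,1)$. Since the factor $1/(\alpha-1)$ in the definition of the R\'enyi divergences (Eqs.~\eqref{eq:Petz}--\eqref{eq:chaotic}) is negative on $(0,1)$, taking logarithms and dividing flips the inequality to $D_\alpha^\flat(W_x\|\sigma)\geq D_\alpha(W_x\|\sigma)$. Averaging against $P$ preserves this, and multiplying by the nonnegative factor $(1-\alpha)/\alpha$ preserves it once more, yielding
\begin{align*}
\frac{1-\alpha}{\alpha}\left(D_\alpha^\flat(\mathscr{W}\|\sigma|P) - R\right) \geq \frac{1-\alpha}{\alpha}\left(D_\alpha(\mathscr{W}\|\sigma|P) - R\right)
\end{align*}
for every $\sigma\in\mathcal{S(H)}$ and every $\alpha\in(0,1)$. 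At the boundary $\alpha = 1$ both sides collapse to $0$, so the inequality is trivial there. Taking $\min_\sigma$ and then $\sup_\alpha$ on each side preserves the inequality and gives $\widetilde{E}_{\textnormal{sp}}(R,P)\geq E_{\textnormal{sp}}(R,P)$.

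There is no real obstacle; the only bookkeeping worth double-checking is the sign flip that comes from $\alpha-1<0$ on $(0,1)$, so that Golden--Thompson, an upper bound on the $Q$-quantities, becomes a \emph{lower} bound on the $D$-quantities and hence on the inner objective. Once that sign is tracked, the conclusion is immediate from the two variational formulas.
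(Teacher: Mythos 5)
Your proof is correct and is exactly the argument the paper intends: combine the variational representations in Theorem~\ref{theo:dual_sp} with the Golden--Thompson inequality~\eqref{eq:compare}, tracking the sign flip from the factor $\frac{1}{\alpha-1}<0$ so that $Q_\alpha^\flat\le Q_\alpha$ becomes $D_\alpha^\flat\ge D_\alpha$ on $(0,1)$. Note that your direction $D_\alpha^\flat\ge D_\alpha$ is the correct one (and the one needed for $E_\textnormal{sp}\le\widetilde{E}_\textnormal{sp}$), even though the paper's introduction writes the divergence inequality the other way around---that appears to be a typographical slip there.
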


\section{Finite Blocklength Sphere-Packing Bound} \label{sec:main}
The main result in the section is a finite blocklength strong sphere-packing bound for c-q channels with a polynomial prefactor (Theorem \ref{theo:refined}), improving upon a subexponential prefactor obtained in \cite{Dal13}. To establish this result, 
the key is to employ a hypothesis testing reduction and  a sharp concentration inequality ~\cite{BR60, AW14}.   
Our proof consists of three major steps: (i) reducing the channel coding problem to binary hypothesis testing (Proposition~\ref{lemm:hypothesis}); (ii) bounding its type-I error from below (Propositions \ref{prop:Chebyshev} and \ref{prop:sharp}); (iii) employing Theorem \ref{theo:dual_sp} to relate the derived bound to the strong sphere-packing exponent. 
We discuss Propositions~\ref{lemm:hypothesis}, \ref{prop:Chebyshev}, and \ref{prop:sharp} in Section~\ref{ssec:converse_HT}, and prove the main result Theorem \ref{theo:refined} in Section~\ref{ssec:Proof}.

\begin{theo}[Refined Strong Sphere-Packing Bound of Constant Composition Codes] \label{theo:refined}
Consider a classical-quantum channel $\mathscr{W}: \mathcal{X}\to \mathcal{S(H)}$ and  $R \in(C_{0,\mathscr{W}}, C_\mathscr{W}) $.  For every $\gamma > 0$, there exist an $N_0\in\mathbb{N}$ and a constant $A>0$ such that for all constant composition codes $\mathcal{C}_n$ of length $n\geq N_0$ with message size $|\mathcal{C}_n|\geq \exp\{nR\}$, we have
\begin{align} \label{eq:goal}
\bar{\epsilon}\left( \mathcal{C}_n \right) \geq  \frac{A}{n^{\frac12\left(1 + |E_\textnormal{sp}'(R)| + \gamma  \right)} }  \exp\left\{ -n E_\textnormal{sp}(R)\right\}.
\end{align}
\end{theo}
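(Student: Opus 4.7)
The plan follows the three-step scheme indicated just before the theorem: (i) reduce the channel converse to quantum binary hypothesis testing; (ii) apply a sharp Bahadur-Ranga Rao type converse to the induced test; (iii) use the variational representation in Theorem \ref{theo:dual_sp} and the saddle-point of Proposition \ref{prop:saddle} to identify the exponent. I begin by fixing a constant composition code $\mathcal{C}_n$ of composition $P$, cardinality $M = |\mathcal{C}_n| \geq \mathrm{e}^{nR}$, and decoding POVM $\{\Pi_{n,m}\}$, together with an auxiliary state $\sigma\in\mathcal{S(H)}$ to be chosen below. Each $\Pi_{n,m}$ is a valid binary test between the null $W_{\mathbf{x}^n(m)}=\bigotimes_i W_{x_i(m)}$ and alternative $\sigma^{\otimes n}$; since $\sum_m \Tr[\Pi_{n,m}\sigma^{\otimes n}]\leq 1$, a Markov-type pigeonhole produces an index $m^\star$ for which simultaneously $\epsilon_{m^\star}(\mathcal{C}_n)\leq 2\bar{\epsilon}(\mathcal{C}_n)$ and $\Tr[\Pi_{n,m^\star}\sigma^{\otimes n}] \leq 2/M$, hence
\begin{equation*}
\widehat{\alpha}_{2/M}\bigl(W_{\mathbf{x}^n(m^\star)}\,\big\|\,\sigma^{\otimes n}\bigr) \;\leq\; 2\,\bar{\epsilon}(\mathcal{C}_n).
\end{equation*}

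For Step (ii), I apply Proposition \ref{prop:sharp} to the product pair above. The underlying mechanism is to pass to the Nussbaum-Szko{\l}a distributions (Section \ref{ssec:NS}), which preserve $D_\alpha$ and tensor products (Eqs.~\eqref{eq:NS1}--\eqref{eq:NS2}); the quantum test thereby reduces to a classical test on independent sequences to which Bahadur-Ranga Rao's strong large-deviation inequality (Appendix \ref{app:tight}) furnishes a sharp bound with prefactor $1/\sqrt{n}$. Constant composition $P$ collects the cumulants into $n D_\alpha(\mathscr{W}\|\sigma|P)$. I then choose $\sigma=\sigma_{R,P}^{\star}$, the unique saddle-point guaranteed by Proposition \ref{prop:saddle}, which makes the Chernoff exponent equal to $F_{R,P}(\alpha_{R,P}^{\star},\sigma_{R,P}^{\star}) = E_\textnormal{sp}^{(2)}(R,P)$; item \ref{saddle-c} of that proposition simultaneously secures the absolute-continuity condition $\sigma_{R,P}^\star \gg W_x$ on $\textnormal{\texttt{supp}}(P)$ required by the Nussbaum-Szko{\l}a reduction. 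Combining with Step (i) and writing the bound at the slightly perturbed rate $R'=R+\tfrac{\log n}{2n}$ (needed to cleanly match the threshold $\mu=2/M$ to the tilted measure), I obtain
\begin{equation*}
\bar{\epsilon}(\mathcal{C}_n) \;\geq\; \frac{A'}{\sqrt{n}}\,\mathrm{e}^{-n\,E_\textnormal{sp}^{(2)}(R',P)}.
\end{equation*}
A first-order Taylor expansion via differentiability of $E_\textnormal{sp}^{(2)}(\cdot,P)$ (Proposition \ref{prop:Esp}\ref{Esp-c}) converts the rate shift into $E_\textnormal{sp}^{(2)}(R',P)=E_\textnormal{sp}^{(2)}(R,P)+\tfrac{s_{R,P}^\star}{2n}\log n + o\!\bigl(\tfrac{\log n}{n}\bigr)$, so multiplication through by $\mathrm{e}^{-n\cdot(\cdot)}$ produces the polynomial factor $n^{-s_{R,P}^{\star}/2}$. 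Finally $E_\textnormal{sp}^{(2)}(R,P)\leq E_\textnormal{sp}(R)$ (Eq.~\eqref{eq:Esp_max}), and the continuity of $s^\star_{R,\cdot}$ (Proposition \ref{prop:Esp}\ref{Esp-d}) lets me bound $s_{R,P}^\star \leq |E_\textnormal{sp}'(R)|+\gamma$ uniformly for every $P$ sufficiently close to a composition attaining $E_\textnormal{sp}(R)$, which yields the announced inequality once $n\geq N_0$.

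The main difficulty lies in Step (ii): establishing Proposition \ref{prop:sharp} as a quantum Bahadur-Ranga Rao estimate that is both \emph{sharp} (prefactor $1/\sqrt{n}$ rather than subexponential) and \emph{uniform} as the composition $P$ ranges over a neighborhood of an optimising composition. Uniformity demands lower-bounding the relative variance $V(W_x\|\sigma_{R,P}^\star)$ away from zero and controlling a third-cumulant remainder on that neighborhood; this is where the $\gamma$-slack in the exponent of $n$ is ultimately absorbed. A secondary subtlety is that Proposition \ref{prop:saddle}\ref{saddle-c} only guarantees $\sigma_{R,P}^\star \gg W_x$ on $\textnormal{\texttt{supp}}(P)$, so the Nussbaum-Szko{\l}a reduction has to be handled with care on letters $x$ where $W_x$ is nearly orthogonal to $\sigma_{R,P}^\star$; restricting the tilting family to the joint support of $\{W_x\}_{x\in\textnormal{\texttt{supp}}(P)}$ and $\sigma_{R,P}^\star$ is the remedy, and the resulting boundary estimate is the most technical piece of the argument.
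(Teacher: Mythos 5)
Your Steps (i) and (iii) track the paper's argument closely — the Markov/pigeonhole selection of a low-$\beta$ codeword is a valid stand-in for the paper's expurgation (cf.\ Eq.~\eqref{eq:sketch3}), the choice $\sigma = \sigma_{R,P}^\star$ is correct, and the use of Proposition~\ref{prop:Esp}\ref{Esp-d} to push $s^\star_{R,P}$ down to $|E_\textnormal{sp}'(R)|+\gamma$ on a neighborhood of the optimizer mirrors Eqs.~\eqref{eq:Tay2}--\eqref{eq:caseII2}. However, there is a genuine gap in Step (ii): you apply the sharp Bahadur--Ranga Rao bound (Proposition~\ref{prop:sharp}) to \emph{every} composition, but that proposition carries the hypothesis~\eqref{eq:sharp_cond2}, namely $E_\textnormal{sp}^{(2)}(R,P_{\mathbf{x}^n}) \geq \nu > 0$ with constants depending on $\nu$. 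For compositions $P_{\mathbf{x}^n} \notin \mathscr{P}_{R,\nu}(\mathcal{X})$ the saddle-point degenerates, the variance lower bound $V_{\min}(R,\nu)$ and the uniformity you flag as ``the main difficulty'' are simply unavailable, and the sharp bound cannot be invoked. The theorem claims a bound for \emph{all} constant composition codes, so these compositions must be handled. The paper covers them with the coarser Chebyshev-type converse (Proposition~\ref{coro:Chebyshev}), which loses only a factor $\mathrm{e}^{-\kappa_2\sqrt{n}}$; this is absorbed because such compositions satisfy $E_\textnormal{sp}^{(2)}(R,P) \leq E_\textnormal{sp}(R)-\delta$, so the exponential gain $\mathrm{e}^{n\delta}$ dominates. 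Relatedly, even among $P\in\mathscr{P}_{R,\nu}(\mathcal{X})$, you need a second case for $P$ bounded away from every optimizer of $E_\textnormal{sp}(R)$, where $s^\star_{R,P}$ may exceed $|E_\textnormal{sp}'(R)|+\gamma$ — there again the paper exploits an exponential gap $E_\textnormal{sp}(R)-E_\textnormal{sp}^{(2)}(R,P)\geq \delta'>0$ to swallow the worse polynomial. Without this three-way case split your argument does not close.

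Two smaller remarks. First, the sign convention in your rate perturbation is off: to make the BRR type-II bound meet the threshold $c\,\mathrm{e}^{-nR}$, one must move to $R_n = R - \gamma_n$ (a \emph{downward} shift, as in the paper's Proposition~\ref{prop:sharp}); with your stated $R' = R + \tfrac{\log n}{2n}$ the Taylor expansion $E_\textnormal{sp}^{(2)}(R',P) = E_\textnormal{sp}^{(2)}(R,P) - s^\star_{R,P}\cdot\tfrac{\log n}{2n}+o(\cdot)$ has the opposite sign from what you wrote, and the two errors happen to cancel in the final polynomial. Second, your closing worry about letters $x$ where $W_x$ is ``nearly orthogonal'' to $\sigma_{R,P}^\star$ is not actually an issue: Proposition~\ref{prop:saddle}\ref{saddle-c} gives strict dominance $\sigma_{R,P}^\star \gg W_x$ on $\textnormal{\texttt{supp}}(P)$ whenever $P\in\mathscr{P}_R(\mathcal{X})$, and the Nussbaum--Szko{\l}a reduction then poses no support difficulties; it is the non-uniformity as $E_\textnormal{sp}^{(2)}(R,P)\downarrow 0$ that bites, not support.
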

	
The following corollary generalizes the refined sphere-packing bound for constant composition codes  to arbitrary codes via a standard argument \cite[p.~95]{SGB67}. 

\begin{coro}[Refined Strong Sphere-Packing Bound for General Codes] \label{coro:refined}
Consider a classical-quantum channel $\mathscr{W}: \mathcal{X}\to \mathcal{S(H)}$ and  $R\in(C_{0,\mathscr{W}}, C_\mathscr{W}) $.  There exist some $t> 1/2$ and $N_0\in\mathbb{N}$ such that for all codes of length $n\geq N_0$, we have
\begin{align} \label{eq:goal_general}
\epsilon^*\left(n, R\right) \geq  n^{-t} \exp\left\{-n E_\textnormal{sp}(R)\right\}.
\end{align}
\end{coro}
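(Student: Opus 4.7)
The plan is to reduce the bound for an arbitrary code at rate $R$ to the constant-composition case by a standard pigeonhole argument over types, apply Theorem~\ref{theo:refined} to the resulting sub-code, and then use the local Lipschitz regularity of $E_\textnormal{sp}$ near $R$ to absorb the small rate back-off into a polynomial factor.

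Fix any code $\mathcal{C}_n = (f_n, \Pi_n)$ with $|\mathcal{C}_n| \geq \exp\{n R\}$ and partition its codewords by empirical distribution $P_{\mathbf{x}^n(m)}$ (cf.\ Eq.~\eqref{eq:empirical}). The number of types of length-$n$ sequences over $\mathcal{X}$ is at most $(n+1)^{|\mathcal{X}|}$, so the pigeonhole principle yields a sub-code $\mathcal{C}'_n \subseteq \mathcal{C}_n$ of common composition with
\begin{align*}
|\mathcal{C}'_n| \;\geq\; \frac{|\mathcal{C}_n|}{(n+1)^{|\mathcal{X}|}}, \qquad R'_n \,:=\, \tfrac{1}{n} \log |\mathcal{C}'_n| \;\geq\; R - \tfrac{|\mathcal{X}| \log (n+1)}{n}.
\end{align*}
Retain the POVM elements $\{\Pi_{n,m}\}_{m \in \mathcal{C}'_n}$ and complete them to a valid POVM on $\mathcal{H}^{\otimes n}$ by declaring $\mathds{1} - \sum_{m \in \mathcal{C}'_n} \Pi_{n,m} \geq 0$ to be an erasure outcome. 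Each conditional error $\epsilon_m$ with $m \in \mathcal{C}'_n$ is unchanged, and dropping the complementary messages can only decrease the summed error, giving
\begin{align*}
\bar\epsilon(\mathcal{C}_n) \;=\; \tfrac{1}{|\mathcal{C}_n|} \sum_{m \in \mathcal{C}_n} \epsilon_m \;\geq\; \tfrac{|\mathcal{C}'_n|}{|\mathcal{C}_n|} \bar\epsilon(\mathcal{C}'_n) \;\geq\; \tfrac{1}{(n+1)^{|\mathcal{X}|}} \bar\epsilon(\mathcal{C}'_n).
\end{align*}

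Since $R \in (R_\infty, C_\mathscr{W})$ and $R'_n \to R$, for all $n$ sufficiently large $R'_n$ is contained in a fixed compact sub-interval of $(R_\infty, C_\mathscr{W})$. Applying Theorem~\ref{theo:refined} to $\mathcal{C}'_n$ at rate $R'_n$ with $\gamma = 1/4$, and using that the constants $A, N_0$ in that theorem may be taken uniform over such a compact sub-interval (a direct consequence of inspecting its proof), yields
\begin{align*}
\bar\epsilon(\mathcal{C}'_n) \;\geq\; A \cdot n^{-\frac{1}{2}(1 + |E'_\textnormal{sp}(R'_n)| + 1/4)} \exp\{-n E_\textnormal{sp}(R'_n)\}
\end{align*}
for every $n \geq N_0$. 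By Proposition~\ref{prop:Esp}(b), $E_\textnormal{sp}$ is convex and finite on $(R_\infty, +\infty)$, hence locally Lipschitz at $R$: there exist $\eta > 0$ and constants $L_1, L_2 < \infty$ with $E_\textnormal{sp}(r) \leq E_\textnormal{sp}(R) + L_1(R - r)$ and $|E'_\textnormal{sp}(r)| \leq L_2$ for every $r \in (R - \eta, R]$. For $n$ large, $R'_n \in (R - \eta, R]$, so
\begin{align*}
\exp\{-n E_\textnormal{sp}(R'_n)\} \;\geq\; \exp\{-n E_\textnormal{sp}(R)\} \cdot (n+1)^{-|\mathcal{X}| L_1}, \qquad n^{-\frac{1}{2}(1 + |E'_\textnormal{sp}(R'_n)| + 1/4)} \;\geq\; n^{-\frac{1}{2}(1 + L_2 + 1/4)}.
\end{align*}
Combining all three displays gives $\bar\epsilon(\mathcal{C}_n) \geq n^{-t} \exp\{-n E_\textnormal{sp}(R)\}$ for a constant $t > 1/2$ depending only on $|\mathcal{X}|, L_1, L_2$, uniformly in $n \geq N_0$. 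Since $\mathcal{C}_n$ was arbitrary, this passes to $\epsilon^*(n, R)$.

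The main obstacle is primarily technical rather than conceptual: one must verify that Theorem~\ref{theo:refined} can be invoked at the drifting rates $R'_n$ with constants chosen uniformly over a compact sub-interval of $(R_\infty, C_\mathscr{W})$ about $R$, and that the local Lipschitz estimate for $E_\textnormal{sp}$ (convexity plus finiteness of the maximum subgradient from Eq.~\eqref{eq:s_star}) is strong enough to turn the $O(\log n / n)$ rate back-off into only a $\mathrm{poly}(n)$ loss in the pre-factor. The remaining steps -- completing the induced sub-POVM without changing retained error probabilities, and checking $R'_n \in (R_\infty, C_\mathscr{W})$ for $n \geq N_0$ -- are immediate.
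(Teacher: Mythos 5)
Your proposal follows the same essential route as the paper's proof: pigeonhole over the at-most-polynomially-many compositions to extract a constant-composition sub-code, invoke the constant-composition result, and absorb the $O(\log n/n)$ rate loss into a polynomial pre-factor using the boundedness of $\partial E_{\textnormal{sp}}/\partial r$ near $R$. Two points are worth flagging where the details diverge from the paper's treatment. First, you pass through the average error: $\bar\epsilon(\mathcal{C}_n)\geq (M'/M)\,\bar\epsilon(\mathcal{C}'_n)\geq (n+1)^{-|\mathcal{X}|}\bar\epsilon(\mathcal{C}'_n)$, which is correct and explicit (the paper's line ``$\epsilon^*(n,R)\geq\bar\epsilon(\mathcal{C}'_n)$'' is at best sloppy notation). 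The paper instead routes the reduction through $\epsilon_{\max}$ --- expurgate first, then restrict to the dominant composition, using that $\epsilon_{\max}$ only decreases under passing to a sub-code --- so the $n^{|\mathcal{X}|}$ penalty shows up inside the threshold $\mu$ of $\widehat{\alpha}_\mu$ rather than as an explicit pre-factor. Both give a polynomial, so this is cosmetic. Second, and more substantively, you invoke Theorem~\ref{theo:refined} as a black box at the drifting rate $R'_n$ and assert that its constants $A,N_0$ may be taken ``uniform over a compact sub-interval.'' This is the step where the paper does something different and, arguably, cleaner: it does not re-apply Theorem~\ref{theo:refined} at a shifted rate, but rather re-traces the hypothesis-testing machinery (Proposition~\ref{prop:sharp}) with the dummy output $\sigma^\star_{R,P_{\mathbf{x}^n}}$ and all of $A(R,\nu)$, $V_{\max}(R,\nu)$, $K_{\max}(R,\nu)$, $H$ evaluated at the \emph{fixed} target rate $R$, while the $n^{|\mathcal{X}|}$ factor is absorbed into the internal rate back-off $\gamma_n$ that the proof already carries. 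This sidesteps the question of whether the constants vary continuously in $R$.

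That uniformity claim is the one genuine gap in your argument. Inspecting the proof of Proposition~\ref{prop:sharp}, the constants $A(R,\nu)$, $V_{\min}(R,\nu)$, $V_{\max}(R,\nu)$, $K_{\max}(R,\nu)$ are defined as extrema over $t\in H$ and $P\in\mathscr{P}_{R,\nu}(\mathcal{X})$, where both the interval $H$ (through $\nu/\Psi(R,\nu)$) and the constraint set $\mathscr{P}_{R,\nu}(\mathcal{X})$ depend on $R$. It is plausible that these depend continuously on $R$ over a compact sub-interval of $(R_\infty,C_\mathscr{W})$ --- continuity of $(\alpha,P)\mapsto I_\alpha^{(2)}$ and of $E_\textnormal{sp}^{(2)}(R,\cdot)$ from Propositions~\ref{prop:I2} and~\ref{prop:Esp} makes this believable --- but it is not a one-line consequence and would need to be established if you want a self-contained proof. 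If you prefer to avoid that verification, the cleaner route is the paper's: work with $\epsilon_{\max}$, keep $\sigma^\star$ and all prefactor constants at the fixed rate $R$, let the $n^{|\mathcal{X}|}$ factor enter through the threshold in Lemma~\ref{lemm:hypothesis}, and finish with a Taylor bound on $E_\textnormal{sp}^{(2)}(R-\tfrac{|\mathcal{X}|}{n}\log n, P_{\mathbf{x}^n})$ around $R$ exactly as you do for $E_\textnormal{sp}$. Your Lipschitz bound on $E_\textnormal{sp}$ itself is fine: convexity and non-increase (Proposition~\ref{prop:Esp}\ref{Esp-b}) together with finiteness of the subgradient at any $R_0\in(R_\infty,R)$ give the required $L_1,L_2$.
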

\noindent Proofs for Theorem \ref{theo:refined} and Corollary \ref{coro:refined} are provided in Section \ref{ssec:Proof}.
\medskip

Theorem~\ref{theo:refined} yields
%for all constant composition codes $\mathcal{C}_n$ and rate $R\in(C_{0,\mathscr{W}}, C_\mathscr{W})$,
\begin{align} \label{eq:converse}
\log \frac{1}{\bar{\epsilon}(\mathcal{C}_n)} \leq n E_\text{sp}(R) + \frac12\left(1+ \left|E_\text{sp}'(R)\right| \right) \log n + o(\log n),
\end{align}
where the term $\frac12\left(1+ \left|E_\text{sp}'(R)\right| \right)$ can be viewed as a second-order term (see the discussions in \cite[Section 4.4]{Tan14}). On the other hand, for the case of classical \emph{non-singular} channels\footnote{For classical \emph{singular} channels, one has 
$\log \frac{1}{\bar{\epsilon}(\mathcal{C}_n)} \geq n E_\text{r}(R) + \frac12 \log n + \Omega(1)$ \cite{Sca14}. Further, it was conjectured that \cite{AW14c} that  $\log \frac{1}{\bar{\epsilon}(\mathcal{C}_n)} \leq n E_\text{sp}(R) + \frac12 \log n + o(\log n),$ for all asymmetric classical singular channels and constant composition codes. However, such a result remains open.}, it was shown that \cite[Theorem 3.6]{Sca14}, for all constant composition codes $\mathcal{C}_n$ and rate $R\in(R_\text{crit}, C_\mathscr{W})$,
\begin{align} \label{eq:achieve}
\log \frac{1}{\bar{\epsilon}(\mathcal{C}_n)} \geq n E_\text{r}(R) + \frac12\left(1+ \left|E_\text{r}'(R)\right| \right) \log n + \Omega(1),
\end{align}
where $E_\text{r}(R)$ is the \emph{random coding exponent}, and $R_\text{crit}$ is the critical rate such that $E_\text{r}(R) = E_\text{sp}(R)$ for all $R\geq R_\text{crit}$ \cite[p.~160]{Gal68}, \cite{HM16}.	Hence our result, Theorem~\ref{theo:refined}, matches the achievability up to the logarithmic order. We note that whether the third order $o(\log n)$ in Eq.~(\ref{eq:converse}) can be improved to $O(1)$ is still unknown even for the classical case.

\subsection{Converse Bounds for Quantum Hypothesis Testing} \label{ssec:converse_HT}
This section contains the hypothesis testing reduction method (Proposition~\ref{lemm:hypothesis}) and two converse bounds for the optimal type-I error of a composition (Propositions~\ref{prop:Chebyshev} and \ref{prop:sharp}).
First of all, Proposition~\ref{lemm:hypothesis} allows us to lower bound the optimal error probability of a code to the optimal type-I error in quantum hypothesis testing.
%We first present Proposition~\ref{lemm:hypothesis}. Next, 
To further lower bound the type-I error, we prove a converse Hoeffding bound for a binary hypothesis testing of product states in Theorem~\ref{theo:sharp_Hoeffding}.
Choosing the hypotheses in Theorem~\ref{theo:sharp_Hoeffding} as  a channel output state of a codeword against an $n$-copy state and employing a uniform continuity (Proposition~\ref{prop:UC} in Appendix~\ref{app:UC}), we establish a finite blocklength Chebyshev-type converse bound (Proposition~\ref{prop:Chebyshev}), and a sharp converse bound (Proposition~\ref{prop:sharp}), respectively. In Section~\ref{ssec:Proof}, we will combine these two bounds to prove the desired sphere-packing bound (Theorem~\ref{theo:refined}).

We remark that the converse bounds (Propositions~\ref{prop:Chebyshev} and \ref{prop:sharp}) for fixed composition of codewords hold for all sufficiently large $n$. Moreover, we call such bounds \emph{finite blocklength bounds} because the blocklength $n$ only depends on the channel $\mathscr{W}$, coding rate $R$, and $|\mathcal{X}|$. The independence of the composition of codes allows us to establish finite blocklength sphere-packing bounds. The key ingredient here is to prove a non-trivial uniform continuity property in the large deviation regime (Appendix~\ref{app:UC}).

In the following, we show Proposition~\ref{lemm:hypothesis} and its proof.
We note that Proposition~\ref{lemm:hypothesis} below is similar to the meta-converse in Ref.~\cite{PPV10}. However, the idea dates back to Fano \cite{Fan61}, Shannon-Gallager-Berlekamp \cite{SGB67}, and Blahut \cite{Bla74,Bla87}.
\begin{prop} \label{lemm:hypothesis}
For any classical-quantum channel $\mathscr{W}: \mathcal{X}\to \mathcal{S(H)}$ and any code $\mathcal{C}$ with message size $M$, it follows that
\begin{align}
\epsilon_{\max}\left(\mathcal{C}\right) \geq \max_{  \sigma \in \mathcal{S(H)} } \min_{x\in\mathcal{C}}\widehat{\alpha}_{\frac1M} \left( W_x \| \sigma \right). 
\end{align}		
\end{prop}
\begin{proof}
Let $x(m)$ be the codeword encoding the message $m \in \{1,\ldots,M\}$. Define a binary hypothesis testing problem:
\begin{align}
&\mathsf{H}_0: W_x, \\ 
&\mathsf{H}_1: \sigma,
\end{align}
where $\sigma \in \mathcal{S}\left(\mathcal{H}\right)$ can be viewed as a dummy channel output.	Since $\sum_{m=1}^M \beta\left( \Pi_{m}; \sigma \right) = 1$ for any POVM $\Pi = \{\Pi_{1},\ldots,\Pi_{M}  \}$, and $\beta\left( \Pi_{m} ; \sigma \right) \geq 0$ for every $m \in\mathcal{M}$, there must exist a message $m \in\mathcal{M}$ for any  code $\mathcal{C}$ such that $\beta\left( \Pi_{m} ; \sigma \right)\leq \frac1M $. Fix such $x := x \left(m\right)$. 	Then 
\begin{align} \label{eq:sketch7}
\epsilon_{\max}\left(\mathcal{C}\right) \geq \epsilon_{m}\left( \mathcal{C} \right) = \alpha\left(\Pi_{m};  W_{x}  \right) \geq \widehat{\alpha}_{\frac1M} \left( W_{x} \| \sigma \right) \geq \min_{x\in\mathcal{C}} \widehat{\alpha}_{\frac1M} \left( W_x \| \sigma \right).
\end{align}
Since the above inequality \eqref{eq:sketch7} holds for every $\sigma \in \mathcal{S}\left(\mathcal{H}\right)$, it follows that
\begin{align}
\epsilon_{\max}\left(\mathcal{C}\right) \geq \max_{  \sigma \in \mathcal{S(H)} } \min_{x\in\mathcal{C}} \widehat{\alpha}_{\frac1M} \left( W_x \| \sigma \right).
\end{align}	
\end{proof}

Before showing the following converse Hoeffding bound, Theorem~\ref{theo:sharp_Hoeffding}, we first introduce some notation.
Let 
\begin{align} \label{eq:product}
&\mathsf{H}_0: \rho^n = \rho_1\otimes \cdots \otimes \rho_n;\\
&\mathsf{H}_1: \sigma^n = \sigma_1\otimes \cdots \otimes \sigma_n,
\end{align}
where $\rho_x,\sigma_x\in\mathcal{S(H)}$ for $x\in[n]$. Further, denote by $(p_i,q_i)$ be the Nussbaum-Szko{\l}a distribution of $(\rho_i,\sigma_i)$ given in Section~\ref{ssec:NS}.
For $\alpha\in[0,1]$, define
\begin{align}
B_\alpha(\rho^n\|\sigma^n) &:= \frac1n \sum_{x\in[n]} \mathbb{E}_{ v_{x,\alpha}} \left[  \log \frac{p_x}{q_x} \right]; \\
V_\alpha(\rho^n\|\sigma^n) &:= \frac1n \sum_{x\in[n]} \mathbb{E}_{ v_{x,\alpha}} \left[  \left|\log \frac{p_x}{q_x} - \mathbb{E}_{ v_{x,\alpha}} \left[  \log \frac{p_x}{q_x} \right]\right|^2 \right];
\\
T_\alpha(\rho^n\|\sigma^n) &:= \frac1n \sum_{x\in[n]} \mathbb{E}_{ v_{x,\alpha}} \left[  \left|\log \frac{p_x}{q_x} - \mathbb{E}_{ v_{x,\alpha}} \left[  \log \frac{p_x}{q_x} \right] \right|^3 \right],
\end{align}
where $(p_x, q_x)$ is the Nussbaum-Szko{\l}a distribution of $(\rho_x, \sigma_x)$ for $x\in[n]$, and the \emph{tilted distribution} is 
\begin{align}
v_{x,\alpha}(i,j) := \frac{ p_x^\alpha(i,j) q_x^{1-\alpha}(i,j) }{ \sum_{\imath,\jmath} p_x^\alpha(\imath,\jmath) q_x^{1-\alpha}(\imath,\jmath)  }, \quad \alpha\in[0,1].
\end{align}
With the above notation, we have the following converse bound.
\begin{theo}[Sharp Converse Hoeffding Bounds for Quantum Hypothesis Testing] \label{theo:sharp_Hoeffding}
	Consider a binary hypothesis testing: $\mathsf{H}_0: \rho^n = \bigotimes_{i=1}^n \rho_i $ and $\mathsf{H}_1: \sigma^n = \bigotimes_{i=1}^n \sigma_i$ given in Eq.~\eqref{eq:product} with $\rho^n \ll \sigma^n$.
	Let $r\in\mathbb{R}$ be such that there exists an $ \alpha^\star \in (0,1]$ such that
	\begin{align} \label{eq:alpha_star}
	\phi_n \left( r | \rho^n\|\sigma^n \right) = 
	\frac{ 1 - \alpha^\star }{ \alpha^\star} \left( \frac{1}{n} D_{\alpha^\star}\left( \rho^n \| \sigma^n \right) - r \right).
	\end{align}

	Then, we have the following: (i) for any test $Q_n$,
	either
	\begin{align}
	\alpha\left(Q^n; \rho^n \right) \geq  \frac18 \exp\left\{-n \phi_n( r |\rho^n\|\sigma^n)  - \alpha^\star \sqrt{2n V_{\alpha^\star}(\rho^n\|\sigma^n) } \right\}, \label{eq:lower_alpha_ch}
	\end{align}
	or 
	\begin{align}
	\beta\left(Q^n; \sigma^n\right) \geq \frac18 \exp\left\{-n r  - (1-\alpha^\star) \sqrt{2n V_{\alpha^\star}(\rho^n\|\sigma^n) } \right\} \label{eq:lower_beta_ch}
	\end{align}
	holds;
	(ii) if $\alpha^\star \in (0,1)$, then for any test $Q_n$,
	either
	\begin{align}
	\alpha\left(Q^n; \rho^n \right) \geq  \mathrm{e}^{-n \phi_n( r |\rho^n\|\sigma^n)}
	\frac{  \mathrm{e}^{- K_n(\alpha^\star)} }{ 2\sqrt{2n\pi   V_{\alpha^\star}(\rho^n\|\sigma^n) } } \left( 1 - \frac{ 1 + (1+K_n(\alpha^\star)^2) }{ 2 \sqrt{ n V_{\alpha^\star}(\rho^n\|\sigma^n)  } } \right), \label{eq:sharp0}
	\end{align}
	or
	\begin{align}
	\beta\left(Q^n; \sigma^n\right) \geq \mathrm{e}^{-n r }
	\frac{  \mathrm{e}^{- K_n(\alpha^\star )} }{ 2\sqrt{2n\pi  V_{\alpha^\star}(\rho^n\|\sigma^n) } } \left( 1 - \frac{ 1 + (1+K_n(\alpha^\star)^2) }{ 2 \sqrt{ n V_{\alpha^\star}(\rho^n\|\sigma^n)  } } \right) \label{eq:sharp01}
	\end{align}
	holds. Here, $K_n (\alpha) := \frac{ 15\sqrt{2\pi} T_\alpha(\rho^n\|\sigma^n)}{  V_\alpha(\rho^n\|\sigma^n)  } \in\mathbb{R}_{>0}$ and $V_{\alpha^\star}(\rho^n\|\sigma^n) \in\mathbb{R}_{>0}$.
\end{theo}
\begin{proof}
	The first claim directly follows from Dailai's result in \cite[Theorem 4]{Dal13}.
	Before proceeding, we need to introduce some notation.
	Let $\tilde{p}^n := \bigotimes_{i=1}^n \tilde{p}_{i}$ and $\tilde{q}^n := \bigotimes_{i=1}^n \tilde{q}_{i}$, where $(\tilde{p}_{i}, \tilde{q}_{i})$ are the Nussbaum-Szko{\l}a distributions \cite{NS09} of $(\rho_{i}, \sigma_i)$ for $i\in [n]$.
	Since $D_\alpha(\rho_{i}\|\sigma_{i}) = D_\alpha(\tilde{p}_i\|\tilde{q}_i)$, for all $\alpha\in(0,1)$, we shorthand 
	\begin{align}
	\phi_n(r) := \phi_n\left( r|\rho^n\| \sigma^n  \right)  = \phi_n(r|\tilde{p}^n\|\tilde{q}^n) \in \mathbb{R}_{>0}. \label{eq:sharp220}
	\end{align} 
	
	Applying Nagaoka's argument \cite{Nag06}, for any $0\leq Q_n\leq \mathds{1}$ with $\delta = \exp\{nr - n\phi_n(r)\}$, we have
	\begin{align}  \label{eq:sharp15}
	\alpha\left(Q_n; \rho^n \right) + \delta \beta\left(Q_n; \sigma^n \right) \geq  \frac12 \left( \alpha\left(\tilde{\mathscr{U}}; \tilde{p}^n \right) + \mathrm{e}^{nr-n\phi_n(r)} \beta\left( \tilde{\mathscr{U}};\tilde{q}^n\right) \right),
	\end{align}
	where $\alpha\left( \tilde{\mathscr{U}};  \tilde{p}^n \right) := \sum_{\omega \notin \tilde{\mathscr{U}} }  \tilde{p}^n(\omega)$, $\beta\left( \tilde{\mathscr{U}}; \tilde{q}^n\right) := \sum_{\omega\in \tilde{\mathscr{U}} }  \tilde{q}^n(\omega)$,
	and  
	\begin{align} \label{eq:tilde_U}
	\tilde{\mathscr{U}} &:= \left\{  \omega:  \tilde{p}^n(\omega)\mathrm{e}^{ n{\phi}_n\left( {r}\right)} >  \tilde{q}^n(\omega) \mathrm{e}^{ n{r}}  \right\}.
	\end{align}	
	Now, we further define the non-normalized distributions ${p}^n := \bigotimes_{i=1}^n {p}_{i}$ and ${q}^n := \bigotimes_{i=1}^n q_{i}$, where $p_i := \tilde{p}_i q_i^0$, $q_i := \tilde{q}_i p_i^0$, for every $i\in[n]$.
	Namely, we restrict $(p^n, q^n)$ to be in the joint support of $\tilde{p}^n$ and $\tilde{q}^n$.
	Letting 
	\begin{align} \label{eq:U}
	{\mathscr{U}} &:= \left\{  \omega:  {p}^n(\omega)\mathrm{e}^{ n{\phi}_n\left( {r}\right)} >  {q}^n(\omega) \mathrm{e}^{ n{r}}  \right\},
	\end{align}	
	it is not hard to see that
	\begin{align}
	\alpha\left( {\mathscr{U}};  {p}^n \right) &= \alpha\left( \tilde{\mathscr{U}};  \tilde{p}^n \right); \\
	\beta\left( {\mathscr{U}}; {q}^n\right) &= \beta\left( \tilde{\mathscr{U}}; \tilde{q}^n\right) \\
	\phi_n(r) &= \phi_n(r|p^n\|q^n).
	\end{align}
	Hence, we focus on the pair $(p^n, q^n)$ and the decision region $\mathscr{U}$ onwards.
	
	Let
	\begin{align}
	\begin{split}
	\Lambda_{0,n}(\alpha) := \frac1n \sum_{i\in[n]} \Lambda_{0,i}(\alpha), \quad &\Lambda_{0,i}(\alpha):=\log \mathbb{E}_{ {p}_i } \left[ \mathrm{e}^{(1-\alpha) \log \frac{ {q}_i }{ {p}_i } } \right]; \\
	\Lambda_{1,n}(\alpha) := \frac1n \sum_{i\in[n]} \Lambda_{0,i}(\alpha), \quad
	&\Lambda_{1,i}(\alpha) := \log \mathbb{E}_{ {q}_i } \left[ \mathrm{e}^{(1-\alpha) \log \frac{ {p}_i }{ {q}_i } } \right].
	\end{split}	
	\end{align}
	Since $p^n$ and $q^n$ have the same support, both $\Lambda_{0,n}(\alpha)$ and $\Lambda_{1,n}(\alpha)$ are smooth functions in $\alpha\in \mathbb{R}$. One can the calculate their derivatives as follows:
	\begin{align}
	\Lambda_{0,n}'(\alpha)  = \frac{1}{n} \sum_{i\in[n]} \mathbb{E}_{ v_{i,\alpha} } \left[  \log \frac{ {p}_i }{ {q}_i }  \right]; \quad &\Lambda_{1,n}'(\alpha)  = \frac{1}{n} \sum_{i\in[n]} \mathbb{E}_{ v_{i,1-\alpha} } \left[  \log \frac{ {q}_i }{ {p}_i }  \right] \\
	\Lambda_{0,n}''(\alpha)  = \frac{1}{n}  \sum_{i\in[n]} \text{Var}_{ v_{i,\alpha} } \left[  \log \frac{ {p}_i }{ {q}_i }  \right]; \quad &\Lambda_{1,n}''(\alpha)  = \frac{1}{n}  \sum_{i\in[n]} \text{Var}_{ v_{i,1-\alpha} } \left[  \log \frac{ {q}_i }{ {p}_i }  \right], \\
	%	\end{align}
	%	\begin{align}
	\begin{split}
	T_{0,n }(\alpha) :=  \frac1n  \sum_{i\in[n]}\mathbb{E}_{v_{i,\alpha}} &\left[ \left| \log \frac{ {p}_i}{ {q}_i} - \Lambda'_{0,n}(\alpha) \right|^3 \right]; \\ 
	T_{1,n }(\alpha) :=  \frac1n  \sum_{i\in[n]} \mathbb{E}_{v_{i,1-\alpha}} &\left[ \left| \log \frac{ {q}_i}{ {p}_i} - \Lambda'_{1,n}(\alpha) \right|^3 \right],
	\end{split}
	\end{align}
	where we denote the tilted distribution by
	\begin{align}
	v_{i,\alpha}(\omega) := \frac{ p_i^\alpha(\omega) q_i^{1-\alpha}(\omega) }{ \sum_{\bar{\omega}} p_i^\alpha( \bar{\omega}) q_i^{1-\alpha} (\bar{\omega})  }, \quad \alpha\in[0,1].
	\end{align}
	Further, it is not hard to verify that for all $\alpha\in[0,1]$.
	\begin{align} 
	\begin{split}\label{eq:sharp_sym}
	&\Lambda_{0,n}(\alpha) = \Lambda_{1,n}(1-\alpha); \quad 
	\Lambda_{0,n}'(\alpha) = -\Lambda_{1,n}'(1-\alpha); \\
	&\Lambda_{0,n}''(\alpha) = \Lambda_{1,n}''(1-\alpha);\quad
	T_{0,n}(\alpha) = T_{1,n}(1-\alpha).	
	\end{split}
	\end{align}
	Next, we define the \emph{Lengendre-Fenchel transform}:
	\begin{align}
	&\Lambda_{j,n}^*(z) := \sup_{\alpha \in\mathbb{R}} \left\{ (1-\alpha) z - {\Lambda}_{j,n}(\alpha)  \right\}, \quad j\in\{0,1\}. \label{eq:FL}
	\end{align}
	The quantities $\Lambda_{j,n }^*(z)$ would appear in the lower bounds of $\alpha\left( Q^n;  \rho^n \right)$ and $\beta\left( Q^n; \sigma^n \right)$ as shown later.
	
	Now, we are ready to show the first claim. Ref.~\cite[Theorem~4]{Dal13} states that for any test $Q^n$, either
	\begin{align}
	\alpha\left(Q^n; \rho^n \right) \geq  \frac18 \exp\left\{-n \left[ \alpha^\star \Lambda_{0,n}'\left(\alpha^\star\right) - \Lambda_{0,n}\left( \alpha^\star \right)  \right] 
	- \alpha^\star \sqrt{ 2 n V_{\alpha^\star}\left( \rho^n\|\sigma^n \right) }
	\right\},
	\end{align}
	or 
	\begin{align}
	\beta\left(Q^n; \sigma^n\right) \geq \frac18 \exp\left\{-n \left[ -(1-\alpha^\star) \Lambda_{0,n}'\left(\alpha^\star\right) + \Lambda_{0,n}\left( \alpha^\star \right)  \right] 
	- (1-\alpha^\star) \sqrt{ 2 n V_{\alpha^\star}\left( \rho^n\|\sigma^n \right) }
	\right\},
	\end{align}
	holds. By Eqs.~\eqref{eq:sharp_sym}, and \eqref{eq:regularity9}, \eqref{eq:regularity2} in Appendix~\ref{app:LF}, we have 
	\begin{align}
	\phi_n(r) &= \alpha^\star \Lambda_{0,n}'\left(\alpha^\star\right) - \Lambda_{0,n}\left( \alpha^\star \right); \\
	r &=  -(1-\alpha^\star) \Lambda_{0,n}'\left(\alpha^\star\right) + \Lambda_{0,n}\left( \alpha^\star \right),
	\end{align}
	which proves the first claim.
	
	To show the second claim, we will employ Bahadur-Ranga Rao's concentration inequality, Theorem \ref{theo:Rao}, in Appendix \ref{app:tight}, to further lower bound $\alpha\left( {\mathscr{U}};  {p}^n \right)$ and $\beta\left( {\mathscr{U}}; {q}^n\right)$. 
	Letting $Z_i = \log  \frac{ {q}_i }{ {p}_i }$ with probability measure $\mu_i =  {p}_i$, and $z =  {r} -  {\phi}_n( {r}) $ in Theorem \ref{theo:Rao}, the Bahadur-Randga Rao's inequality gives
	\begin{align}
	\alpha\left( {\mathscr{U}};  {p}^n \right) &:= \sum_{\omega\notin {\mathscr{U}}}  {p}^n(\omega)  \\
	&= \Pr\left\{ \frac1n\sum_{i=1}^n Z_i \geq   {\phi}_n( {r}) - r \right\} \\	
	&\geq  \exp\left\{  -n \Lambda^*_{0,n } \left(  {\phi}_n( {r}) -  {r} \right)  \right\}  
	\frac{  \mathrm{e}^{- K_n(\alpha^\star)} }{ \sqrt{2\pi \Lambda_n''(\alpha^\star) } } \left( 1 - \frac{ 1 + (1+K_n(\alpha^\star)^2) }{ 2 \sqrt{ \Lambda_{0,n}''(\alpha^\star)  } } \right), 
	%	\\
	%	&=  \exp\left\{  -n   {\phi}_n( {r})   \right\}  
	%	\frac{  \mathrm{e}^{- K_n(\alpha)} }{ \sqrt{2\pi \Lambda_n''(\alpha) } } \left( 1 - \frac{ 1 + (1+K_n(\alpha)^2) }{ 2 \sqrt{ \Lambda_n''(\alpha)  } } \right), \label{eq:sharp12mh}
	\end{align}
	where 
	\begin{align}
	K_n(\alpha) := 15\sqrt{2\pi}\frac{  T_\alpha(\rho^n\|\sigma^n)}{  V_\alpha(\rho^n\|\sigma^n)  } = 15\sqrt{2} \frac{ T_{0,n}(\alpha) }{  \Lambda_{0,n}''(\alpha) }.
	\end{align}
	Moreover, Lemma \ref{lemm:regularity} in Appendix \ref{app:LF} relates the Legendre-Fenchel transform $\Lambda_{j,P_{\mathbf{x}^n}}^* (z)$ to the desired error exponent function $\phi_n(r)$: 	\begin{align}
	&\Lambda_{0,n} '' (\alpha^\star) >0; \label{eq:reg1}\\
	&\Lambda^*_{0,n}  \left(  {\phi}_n({r}) - {r} \right) =  {\phi}_n(r); \label{eq:reg2}\\
	&\Lambda^*_{1,n}  \left( r -  {\phi}_n({r}) \right) = r. \label{eq:reg3}
	\end{align}
	Hence, we have 
	\begin{align}
	\alpha\left( {\mathscr{U}};  {p}^n \right) \geq \exp\left\{  -n   {\phi}_n( {r})   \right\}  
	\frac{  \mathrm{e}^{- K_n(\alpha^\star)} }{ \sqrt{2\pi \Lambda_{0,n}''(\alpha^\star) } } \left( 1 - \frac{ 1 + (1+K_n(\alpha^\star)^2) }{ 2 \sqrt{ \Lambda_{0,n}''(\alpha^\star)  } } \right). \label{eq:sharp12mh}
	\end{align}
	
	Similarly, applying Theorem \ref{theo:Rao} with $Z_i = \log  \frac{ {p}_i }{ {q}_i }$, $\mu_i =  {q}_i$, and $z =  {\phi}_n( {r}) -  {r}$ yields
	\begin{align}
	\beta\left( {\mathscr{U}}; {q}^n\right) &:= \sum_{\omega\in {\mathscr{U}}}  {q}^n(\omega) \\
	&= \Pr\left\{ \frac1n\sum_{i=1}^n Z_i \geq  {\phi}_n( {r}) -  {r}  \right\} \\
	&\geq   \exp\left\{  -n \Lambda^*_{1,n } \left(  r - {\phi}_n( {r})  \right)  \right\}  
		\frac{  \mathrm{e}^{- K_n(1-\alpha^\star)} }{ \sqrt{2\pi \Lambda_{1,n}''(1-\alpha^\star) } } \left( 1 - \frac{ 1 + (1+K_n(1-\alpha^\star)^2) }{ 2 \sqrt{ \Lambda_{1,n}''(1-\alpha^\star)  } } \right) \\
	&=  \exp\left\{  -n r\right\}  
	\frac{  \mathrm{e}^{- K_n(1-\alpha^\star)} }{ \sqrt{2\pi \Lambda_{1,n}''(1-\alpha^\star) } } \left( 1 - \frac{ 1 + (1+K_n(1-\alpha^\star)^2) }{ 2 \sqrt{ \Lambda_{1,n}''(1-\alpha^\star)  } } \right) \\ 
	&=  \exp\left\{  -n r\right\}  
	\frac{  \mathrm{e}^{- K_n(\alpha^\star)} }{ \sqrt{2\pi \Lambda_{0,n}''(\alpha^\star) } } \left( 1 - \frac{ 1 + (1+K_n(\alpha^\star)^2) }{ 2 \sqrt{ \Lambda_{0,n}''(\alpha^\star)  } } \right),
	\label{eq:sharp13mh}.
	\end{align}
	where the last equality follows from Eq.~\eqref{eq:sharp_sym}.
	Hence, by Eqs.~\eqref{eq:sharp15}, \eqref{eq:sharp12mh}, and \eqref{eq:sharp13mh}, we conclude our claim.
\end{proof}
	
In the following Proposition, we will prove a finite blocklength  bound of a composition via Eqs.~\eqref{eq:lower_alpha_ch} and \eqref{eq:lower_beta_ch} in the above Theorem~\ref{theo:sharp_Hoeffding}. 
The difficulty of deriving a finite blocklength result is that one needs to obtain some universal coefficients independent of all possible compositions. Our core technique here is a uniform continuity property, Proposition~\ref{prop:UC}, which will be presented in Appendix~\ref{app:UC}.

The following result is essentially a Chebyshev-type bound with prefactor $\exp\{O(\sqrt{n})\}$. We will employ it to lower bound the error of ``bad sequences" that yield a inferior error exponent in Section \ref{ssec:Proof}.

\begin{prop}[Chebyshev-Type Bound for a Fixed Composition]\label{prop:Chebyshev}
	Let $\mathscr{W}:\mathcal{X}\to\mathcal{S(H)}$ be a classical-quantum channel. Fix $R\in(C_{0,\mathscr{W}}, C_\mathscr{W})$. 
	Consider a sequence $\mathbf{x}^n \in \mathcal{X}^n$ 
	Then, for every $c>0$, 
	there exist a state $\sigma^\star\in\mathcal{S(H)}$, an integer $N_0\in\mathbb{N}$, independent of the sequences $\mathbf{x}^n$ and $\sigma$, such that for all $n\geq N_0$ we have
	\begin{align} \label{eq:che-type}
	\widehat{\alpha}_{ c\exp\{ -nR \}  } \left( W_{\mathbf{x}^n}^{\otimes n}\| (\sigma^\star)^{\otimes n} \right) \geq \exp\left\{ - A\sqrt{n} -n E_\textnormal{sp}^{(2)}\left(  R , P_{\mathbf{x}^n}\right) 	
	\right\},
	\end{align}	
	where  $A\in\mathbb{R}_{>0}$ is a finite positive constant depending on $R$ and $\mathscr{W}$.
\end{prop}
\begin{proof}
	Fix an arbitrary $\underline{R} \in \left( C_{0,\mathscr{W}}, R \right)$. Let $\gamma_n := \frac{a\sqrt{n}}{2n} + \frac{\log 8 - \log c}{n}$ and $R_n := R - \gamma_n$ for some $a\in\mathbb{R}$. The choice of $a$ and the rate back-off term $\gamma_n$ will become evident later. Let $N_1 \in\mathbb{N}$ such that $R_n \in [ \underline{R} , R ]$ for all $n\geq N_1$. Subsequently, we choose such $n\geq N_1$ onwards.
	
	We choose the optimal output state as
	\begin{align}
	\sigma^\star = \arg \min_{\sigma \in\mathcal{S(H)}} \sup_{0<\alpha\leq 1} \frac{1-\alpha}{\alpha}\left( D_\alpha\left(\mathscr{W}\|\sigma|P_{\mathbf{x}^n} \right) - R_n \right).
	\end{align}
	Let ${p}^n := \bigotimes_{i=1}^n p_{x_i}$ and ${q}^n := \bigotimes_{i=1}^n q_{x_i}$, where $(p_{x_i},q_{x_i})$ are Nussbaum-Szko{\l}a distributions \cite{NS09} of $(W_{x_i},\sigma_{R,P}^\star)$ for every $i\in [n]$. Since $D_\alpha( W_{x_i}\|\sigma^\star) = D_\alpha(p_{x_i}\|q_{x_i})$, for $\alpha\in(0,1]$, again we shorthand for all $R_n\in[\underline{R},R]$,
	\begin{align}
	\phi_n(R_n) := \phi_n\left( R_n|W_{\mathbf{x}^n}^{\otimes n}\| (\sigma^\star)^{\otimes n}  \right)  = \phi_n(R_n|p^n\|q^n) = E_\textnormal{sp}^{(2)}\left( R_n , P_{\mathbf{x}^n} \right), \label{eq:sharp22_ch}
	\end{align} 
	where the last equality in Eq.~\eqref{eq:sharp22_ch} follows from the saddle-point property, item \ref{saddle-a} in Proposition~\ref{prop:saddle}. Moreover, item \ref{saddle-b} in Proposition \ref{prop:saddle} implies that the state $\sigma^\star$ dominants all the states: $\sigma^\star \gg W_x$, for all $x\in \texttt{supp}(P_{\mathbf{x}^n})$, Hence, we have $p^n\ll q^n$. This guarantees that $V_\alpha\left( W_x\| \sigma^\star \right)$ is finite for all $\alpha\in[0,1]$ and all $x\in \texttt{supp}(P_{\mathbf{x}^n})$.
	
%	Denote by 
%	\begin{align}
%	V_{R_n}(P_{\mathbf{x}^n}, \mathscr{W}) := \sum_{x\in\mathcal{X}} V_{R_n} \left( W_x\| \sigma_{R_n,P_{\mathbf{x}^n}}^\star \right).
%	\end{align}
	Theorem~\ref{theo:sharp_Hoeffding} implies that for any test $Q^n$ 
	either
	\begin{align}
	\alpha\left(Q^n; W_{\mathbf{x}^n}^{\otimes n} \right) \geq  \frac18 \exp\left\{-n \phi_n( R_n )  - \alpha_{R_n,P_{\mathbf{x}^n}}^\star \sqrt{2n V_{\alpha_{R_n,P_{\mathbf{x}^n}}^\star}( P_{\mathbf{x}^n}, \mathscr{W} ) } \right\}, \label{eq:ch1}
	\end{align}
	or 
	\begin{align}
	\beta\left(Q^n; (\sigma^\star)^{\otimes n}\right) \geq \frac18 \exp\left\{-n R_n  - (1-\alpha_{R_n,P_{\mathbf{x}^n}}^\star) \sqrt{2n V_{\alpha_{R_n,P}^\star}( P_{\mathbf{x}^n}, \mathscr{W} ) } \right\},
	\end{align}
	where $\alpha_{r,P_{\mathbf{x}^n} }^\star \in(0,1)$ satisfies, for all $r\in[\underline{R},R]$,
	\begin{align}
	\phi_n(r) = \frac{ 1- \alpha_{r,P_{\mathbf{x}^n} }^\star }{ \alpha_{r,P_{\mathbf{x}^n} }^\star } \left( I_{\alpha_{r,P_{\mathbf{x}^n} }^\star}^{(2)}\left(P_{\mathbf{x}^n}, \mathscr{W} \right) - r \right).
	\end{align}
	and we define
	\begin{align}
	V_\alpha(P,\mathscr{W}) &:= \sum_{x\in\mathcal{X}} P(x) \mathbb{E}_{ v_{x,\alpha}} \left[  \left|\log \frac{p_x}{q_x} - \mathbb{E}_{ v_{ x,\alpha }} \left[  \log \frac{p_x}{q_x} \right]\right|^2 \right], \\
	v_{x,\alpha}(\omega) &:= \frac{ p_{x}^\alpha(\omega) q_{x}^{1-\alpha}(\omega) }{ \sum_{\bar{\omega}} p_{x}^\alpha( \bar{\omega}) q_{x}^{1-\alpha} (\bar{\omega})  }.
	\end{align}	
	Now, we will introduce a constant independent of $\mathbf{x}^n$ to obtain a finite blocklength lower bound for $\widehat{\alpha}_{c \exp\{-nR\}}(W_{\mathbf{x}^n}^{\otimes n}\|(\sigma^\star)^{\otimes n})$.
	Let
	\begin{align} \label{eq:Vmax_ch}
	V_{\max}(R) := \max_{ (r,P)\in [\underline{R},R]\times \mathscr{P}(\mathcal{X})} V_{\alpha_{r,P}^\star} (P , \mathscr{W}).
	\end{align}
	Recall that $V_\alpha\left( P_{\mathbf{x}^n}, \mathscr{W} \right)$ is finite for all $\alpha\in[0,1]$. Proposition~\ref{prop:UC} in Appendix~\ref{app:UC} implies that $V_{\alpha_{r,P}^\star}(P,\mathscr{W})$ is joint continuous on $[\underline{R},R]\times \mathscr{P}(\mathcal{X})$.
	Further, since $\mathscr{P}(\mathcal{X})$ is compact, the quantity in Eq.~\eqref{eq:Vmax_ch} is well-defined and finite.
	Therefore,
	\begin{align}
	\beta\left(Q^n; (\sigma^\star)^{\otimes n}\right) &\geq \frac18 \exp\left\{-n R_n  - (1-\alpha_{R_n,P_{\mathbf{x}^n}}^\star) \sqrt{2n V_{\alpha_{R_n,P_{\mathbf{x}^n}}^\star}( P_{\mathbf{x}^n}, \mathscr{W} ) } \right\} \\
	&\geq \frac18 \exp\left\{-n R_n  - \sqrt{2n V_{\max}(R) } \right\} \\
	&= c\exp\left\{ -nR \right\}, \label{eq:ch2}
	\end{align}
	where we choose $a:=  \sqrt{2 V_{\max}(R) } $ in the rate back-off term $\gamma_n := \frac{a\sqrt{n}}{2n} + \frac{\log 8 - \log c}{n}$.
	
	Next, Eqs.~\eqref{eq:ch1} and \eqref{eq:ch2} yield
	\begin{align}
	\widehat{\alpha}_{c \exp\{-nR\}}(W_{\mathbf{x}^n}^{\otimes n}\|(\sigma^\star)^{\otimes n}) &\geq \frac18 \exp\left\{-n \phi_n( R_n)  - \alpha_{R_n,P_{\mathbf{x}^n}}^\star \sqrt{2n V_{\alpha_{R_n,P_{\mathbf{x}^n}}^\star}( P_{\mathbf{x}^n}, \mathscr{W} ) } \right\} \\
	&\geq \frac18 \exp\left\{ - \sqrt{2n V_{\max}(R) -n E_\text{sp}^{(2)}( R - \gamma_n, P_{\mathbf{x}^n} )   } \right\}. \label{eq:Cheby1}
	\end{align}
	Further, the convexity and the monotone decreases of $r\mapsto E_\text{sp}^{(2)}(r,P)$ given in Proposition~\ref{prop:Esp}-\ref{Esp-a}
	shows that 
	\begin{align}
	E_\text{sp}^{(2)}( R - \gamma_n, P_{\mathbf{x}^n} ) &\leq E_\text{sp}^{(2)} (R, P_{\mathbf{x}^n} ) - \gamma_n \left.\frac{\partial E_\text{sp}^{(2)}( r, P_{\mathbf{x}^n} )}{\partial r}\right|_{r=R} , \\
	&\leq E_\text{sp}^{(2)} (R, P_{\mathbf{x}^n} ) - \gamma_n  \left.\frac{\partial E_\text{sp}^{(2)}( r, P_{\mathbf{x}^n} )}{\partial r}\right|_{r=\underline{R}}.
	\label{eq:Cheby2}
	\end{align}
%	where the last inequality \eqref{eq:Cheby2} follows from the monotone decreases of $r\mapsto E_\text{sp}$.
	Next, we denote
	\begin{align}
	\Upsilon := \max_{P\in\mathscr{P}(\mathcal{X})} \left|  \left.\frac{\partial E_\text{sp}^{(2)}( r, P )}{\partial r}\right|_{r=\underline{R}}  \right|. \label{eq:Cheby3}
	\end{align}
	Observe that $\Upsilon \in \mathbb{R}_{\geq0}$ due to $\underline{R}> C_{0,\mathscr{W}}$ and item \ref{saddle-d} of Proposition \ref{prop:saddle}.
	Then, Eqs.~\eqref{eq:Cheby1}, \eqref{eq:Cheby2}, and \eqref{eq:Cheby3} lead to
	\begin{align}
	\widehat{\alpha}_{c\exp\{-n R\}} \left( W_{\mathbf{x}^n}^{\otimes n}\|(\sigma^\star)^{\otimes n} \right) \geq   \exp\left\{ -\log 8 -\sqrt{2n V_{\max}(R)} - \gamma_n \Upsilon - n E_\text{sp}^{(2)} \left( R, P_{\mathbf{x}^n}   \right) \right\}.
	\end{align}
	Since $\gamma_n = O(1/\sqrt{n})$, for any $A>\sqrt{2V_{\max}(R)}$, there exists a sufficiently large $N_2\in\mathbb{N}$ such that for all $n\geq N_2$.
	\begin{align}
	\log 8 + \sqrt{2n V_{\max}(R)} + \gamma_n \Upsilon \leq A\sqrt{n}.
	\end{align}
	By letting $N_0 := \max\{N_1, N_2\}$ completes the proof.
\end{proof}

The following Proposition \ref{prop:sharp} is a sharp converse bound with polynomial prefactors obtained from Eqs.~\eqref{eq:sharp0} and \eqref{eq:sharp01} in Theorem~\ref{theo:sharp_Hoeffding}, which in turn were proved by employing Bahadur-Ranga Rao's inequality (see Appendix \ref{app:tight}).  Similar to Proposition~\ref{prop:Chebyshev} presented before, we will employ the uniform continuity, Proposition~\ref{prop:UC}, given in Appendix~\ref{app:UC} to prove Proposition~\ref{prop:sharp}.
In Section \ref{ssec:Proof}, we will exploit this result to bound the error of ``good sequences" with a polynomial  prefactor.
	
\begin{prop}[Sharp Converse Bound for a Fixed Composition]\label{prop:sharp}
	Let $\mathscr{W}:\mathcal{X}\to\mathcal{S(H)}$ be a classical-quantum channel Fix $R\in(C_{0,\mathscr{W}}, C_\mathscr{W})$. 
	Consider a sequence $\mathbf{x}^n \in \mathcal{X}^n$ satisfying 	\begin{align} \label{eq:sharp_cond2}
	E_\textnormal{sp}^{(2)}\left( R , P_{\mathbf{x}^n} \right) \in [\nu, + \infty)
	\end{align}
	for some constant $\nu>0$.  
	Then, for every $c>0$, 
	there exist a state $\sigma^\star\in\mathcal{S(H)}$, an integer $N_0\in\mathbb{N}$, independent of the sequences $\mathbf{x}^n$ and $\sigma$, such that for all $n\geq N_0$ we have
	\begin{align} \label{eq:refine0}
	\widehat{\alpha}_{ c\exp\{ -nR \}  } \left( W_{\mathbf{x}^n}^{\otimes n}\| (\sigma^\star)^{\otimes n} \right) \geq \frac{A}{n^{\frac12\left( 1 + s^\star_{ R, P_{\mathbf{x}^n} }  \right)}} \exp\left\{ -n E_\textnormal{sp}^{(2)}\left(  R , P_{\mathbf{x}^n}\right)  \right\},
	\end{align}	
	where $s^\star_{R,P} := - \left.\frac{\partial E_\textnormal{sp}^{(2)} (r,P)  }{\partial r}\right|_{r=R}$, and $A\in\mathbb{R}_{>0}$ is a finite positive constant depending on $R, \nu$ and $\mathscr{W}$.	
	
%	Consider the following binary hypothesis testing problem with sequences
%	\begin{align}
%	&\mathsf{H}_0: \rho^n = W_{\mathbf{x}^n}^{\otimes n};\\
%	&\mathsf{H}_1: \sigma^n = \left( \sigma^\star_{R, P_{\mathbf{x}^n} } \right)^{\otimes n},
%	\end{align}
%	where $\mathbf{x}^n \in\mathcal{X}^n$, and $\sigma_{R,P}^\star := \arg \min_{\sigma \in\mathcal{S(H)}} \sup_{0<\alpha\leq 1} \frac{1-\alpha}{\alpha}\left( D_\alpha\left(\mathscr{W}\|\sigma|P_{\mathbf{x}^n} \right) - R \right)$ satisfy 

\end{prop}

\begin{proof}
	Fix an arbitrary $\underline{R} \in \left( C_{0,\mathscr{W}}, R \right)$. Let $\gamma_n := \frac{\log n}{2n} + \frac{x}{n}$ and $R_n := R - \gamma_n$ for some $x\in\mathbb{R}$. The choice of $x$ and the rate back-off term $\gamma_n$ will become evident later. Let $N_1 \in\mathbb{N}$ such that $R_n \in [ \underline{R} , R ]$ for all $n\geq N_1$. Subsequently, we choose such $n\geq N_1$ onwards.
	
	We choose the optimal output state as
	\begin{align}
	\sigma^\star = \arg \min_{\sigma \in\mathcal{S(H)}} \sup_{0<\alpha\leq 1} \frac{1-\alpha}{\alpha}\left( D_\alpha\left(\mathscr{W}\|\sigma|P_{\mathbf{x}^n} \right) - R_n \right).
	\end{align}
	as in the proof of Proposition~\ref{prop:Chebyshev}.
	Let ${p}^n := \bigotimes_{i=1}^n p_{x_i}$ and ${q}^n := \bigotimes_{i=1}^n q_{x_i}$, where $(p_{x_i},q_{x_i})$ are Nussbaum-Szko{\l}a distributions \cite{NS09} of $(W_{x_i},\sigma^\star)$ for every $i\in [n]$. Since $D_\alpha( W_{x_i}\|\sigma^\star) = D_\alpha(p_{x_i}\|q_{x_i})$, for $\alpha\in(0,1]$, again we shorthand for all $R_n\in[\underline{R},R]$,
	\begin{align}
	\phi_n(R_n) := \phi_n\left( R_n| W_{\mathbf{x}^n}^{\otimes n} \| (\sigma^\star)^{\otimes n}  \right)  = \phi_n(R_n|p^n\|q^n) = E_\textnormal{sp}^{(2)}\left( R_n , P_{\mathbf{x}^n} \right), \label{eq:sharp22}
	\end{align} 
	where the last equality in Eq.~\eqref{eq:sharp22} follows from the saddle-point property, item \ref{saddle-a} in Proposition~\ref{prop:saddle}. Moreover, item \ref{saddle-b} in Proposition \ref{prop:saddle} implies that the state $\sigma^\star$ dominants all the states: $\sigma^\star \gg W_x$, for all $x\in \texttt{supp}(P_{\mathbf{x}^n})$, Hence, we have $p^n\ll q^n$. Without loss of generality, we set zero all elements of $q_{x_i}$ that do not lie in the support of $p_{x_i}$, i.e.~ $q_{x_i}(\omega) = 0$, $\omega \not\in \texttt{supp}(p_{x_i})$, $i\in[n]$, because those elements do not contribute in $\phi_n(R_n)$.
	
	Next, we define
	\begin{align}
	V_\alpha(P,\mathscr{W}) &:= \sum_{x\in\mathcal{X}} P(x) \mathbb{E}_{ v_{x,\alpha}} \left[  \left|\log \frac{p_x}{q_x} - \mathbb{E}_{ v_{ x,\alpha }} \left[  \log \frac{p_x}{q_x} \right]\right|^2 \right]; %- B_\alpha(P,\mathscr{W})^2 \\
	\\
	T_\alpha(P,\mathscr{W}) &:= \sum_{x\in\mathcal{X}} P(x) \mathbb{E}_{ v_{x,\alpha}} \left[  \left|\log \frac{p_x}{q_x} - \mathbb{E}_{ v_{x,\alpha}} \left[  \log \frac{p_x}{q_x} \right] \right|^3 \right], \\
	v_{x,\alpha}(\omega) &:= \frac{ p_{x}^\alpha(\omega) q_{x}^{1-\alpha}(\omega) }{ \sum_{\bar{\omega}} p_{x}^\alpha( \bar{\omega}) q_{x}^{1-\alpha} (\bar{\omega})  },
	\end{align}	
	Applying Theorem~\ref{theo:sharp_Hoeffding}, we have for any test $Q^n$,
	\begin{align}
	&\alpha\left(Q^n; W_{\mathbf{x}^n}^{\otimes n} \right) \geq  \mathrm{e}^{-n \phi_n( R_n )}
	\frac{  \mathrm{e}^{- K_n(\alpha_{R_n,P_{\mathbf{x}^n}}^\star)} }{ 2\sqrt{2n\pi   V_{\alpha_{R_n,P_{\mathbf{x}^n}}^\star}( P_{\mathbf{x}^n},\mathscr{W} ) } } \left( 1 - \frac{ 1 + (1+K_n(\alpha_{R_n,P_{\mathbf{x}^n}}^\star)^2) }{ 2 \sqrt{ n V_{ \alpha_{R_n,P_{\mathbf{x}^n}}^\star }( P_{\mathbf{x}^n},\mathscr{W} )  } } \right) \label{eq:lower_alpha} \\
	&\beta\left(Q^n; (\sigma^\star)^{\otimes n}  \right) \geq \mathrm{e}^{-n R_n }
	\frac{  \mathrm{e}^{- K_n(\alpha_{R_n,P_{\mathbf{x}^n}}^\star)} }{ 2\sqrt{2n\pi   V_{\alpha_{R_n,P_{\mathbf{x}^n}}^\star}( P_{\mathbf{x}^n},\mathscr{W} ) } } \left( 1 - \frac{ 1 + (1+K_n(\alpha_{R_n,P_{\mathbf{x}^n}}^\star)^2) }{ 2 \sqrt{ n V_{\alpha_{R_n,P_{\mathbf{x}^n}}^\star}( P_{\mathbf{x}^n},\mathscr{W} )  } } \right), \label{eq:lower_beta}
	\end{align}
	where $K_n(\alpha) := 15\sqrt{2} \frac{ T_\alpha(P_{\mathbf{x}^n},\mathscr{W}) }{ V_\alpha(P_{\mathbf{x}^n},\mathscr{W})} $, and $\alpha_{r,P_{\mathbf{x}^n} }^\star \in(0,1)$ satisfies, for all $r\in[\underline{R},R]$,
	\begin{align}
	\phi_n(r) = \frac{ 1- \alpha_{r,P_{\mathbf{x}^n} }^\star }{ \alpha_{r,P_{\mathbf{x}^n} }^\star } \left( I_{\alpha_{r,P_{\mathbf{x}^n} }^\star}^{(2)}\left(P_{\mathbf{x}^n}, \mathscr{W} \right) - r \right).
	\end{align}
	
	In the following, we will remove the dependency of $R_n$ and $P_{\mathbf{x}^n}$ in $V_{(\cdot)}(\cdot)$ and $K_n(\cdot)$.
	Define  the following quantities:
	\begin{align}
	V_{\max} (R,\nu) &:= \max_{(r,P) \in [\underline{R},R]\times \mathscr{P}_{R,\nu} (\mathcal{X}) }  V_{\alpha_{r,P}^\star}\left( P, \mathscr{W} \right); \label{eq:Vmax} \\
	V_{\min} (R,\nu) &:= \min_{(r,P) \in [\underline{R},R]\times \mathscr{P}_{R,\nu} (\mathcal{X})  }  V_{\alpha_{r,P}^\star}\left( P, \mathscr{W} \right);\label{eq:Vmin} \\
	K_{\max} (R,\nu) &:= 15\sqrt{2\pi} \max_{(r,P) \in [\underline{R},R]\times \mathscr{P}_{R,\nu} (\mathcal{X})  } \frac{ T_{\alpha_{r,P}^\star}(P, \mathscr{W}) }{ V_{\alpha_{r,P}^\star} (P,\mathscr{W})}, \label{eq:Kmax}
	\end{align}
	where 
	\begin{align} \label{eq:PRnu}
	\mathscr{P}_{R,\nu}(\mathcal{X}) := \left\{ P \in\mathscr{P}(\mathcal{X}): \nu\leq E_\text{sp}^{(2)}(R,P_{\mathbf{x}^n}) \leq E_\text{sp}(R) < +\infty  \right\}
	\end{align}
	is a compact set owing to the continuity of $r\mapsto E_\text{sp}^{(2)}(r,P)$ given in Proposition~\ref{prop:Esp}. 
	Also, Proposition~\ref{prop:UC} in Appendix~\ref{app:UC} shows that the objective functions in Eqs.~\eqref{eq:Vmax}, \eqref{eq:Vmin}, and \eqref{eq:Kmax} are continuous functions on $ \mathscr{P}(\mathcal{X})$, which guarantees  the maximization and minimization in the above definitions are well-defined and finite.
	Further, the quantity $V_{\min}(R,\nu)$ is bounded away from zero because of the positivity given in Theorem~\ref{theo:sharp_Hoeffding}.

	Now, we are ready to derive the lower bounds for $\widehat{\alpha}_{c\exp\{-nR\}}\left( W_{\mathbf{x}^n}^{\otimes n}\|(\sigma^\star)^{\otimes n} \right)$. Let $N_2\in\mathbb{N}$ be sufficiently large such that for all $n\geq N_2$,
	\begin{align} \label{eq_largen}
	\sqrt{n} \geq \frac{ 1+ \left( 1 + K_{\max} (R,\nu) \right)^2 }{ \sqrt{ V_{\min} (R,\nu)  } }.
	\end{align}
	Then, Eqs.~\eqref{eq:lower_alpha} and \eqref{eq:lower_beta} give
	\begin{align}
	\alpha\left( Q^n;  W_{\mathbf{x}^n}^{\otimes n} \right) 
	&\geq \frac{A(R,\nu)}{\sqrt{n}} \exp\left\{  -n \phi_n(R_n) \right\};  \\
	\beta\left( Q^n;  (\sigma^\star)^{\otimes n} \right) &\geq \frac{A(R,\nu)}{\sqrt{n}} \exp\left\{  -n R_n \right\} ,
	\end{align}
	where
	\begin{equation}
	A (R,\nu) := \frac{ \mathrm{e}^{-K_{\max} (R,\nu)} }{ 4\sqrt{ 2\pi V_{\max} (R,\nu) }    }.
	\end{equation}
	Choosing $x = - \log A(R,\nu) + \log c$ in the rate back-off term $\gamma_n =  \frac{\log n}{2n} + \frac{x}{n}$, we have
	\begin{align}
	\beta\left( Q^n;  (\sigma^\star)^{\otimes n} \right) &\geq c\exp\{-nR\}. \label{eq:lower_beta2}
	\end{align}
	Combining Eqs.~\eqref{eq:lower_alpha} and \eqref{eq:lower_beta2} then yields
	\begin{align}
	\widehat{\alpha}_{ c\exp\{ -n R \}  } \left( W_{\mathbf{x}^n}^{\otimes n}\| (\sigma^\star)^{\otimes n} \right) &\geq \frac{A(R,\nu)}{\sqrt{n}} \exp\left\{ -n \phi_n\left( R_n \right)  \right\} = \frac{A(R,\nu)}{\sqrt{n}} \exp\left\{ -n E_\text{sp}^{(2)}\left(  R - \gamma_n , P_{\mathbf{x}^n}\right)  \right\}. \label{eq:sharp20} 
	\end{align}	
	
	It remains to remove the rate back-off term $\gamma_n$ in Eq.~\eqref{eq:sharp20}. By Taylor's theorem, one has
	\begin{align} \label{eq:sharp21}
	E_\text{sp}^{(2)} \left( R - \gamma_n, P_{\mathbf{x}^n} \right)= E_\text{sp}^{(2)} \left( R, P_{\mathbf{x}^n} \right) - \gamma_n  \left.\frac{\partial E_\text{sp}^{(2)} \left( r, P_{\mathbf{x}^n}\right) }{\partial r} \right|_{r=R} + \frac{\gamma_n^2}{2}  \left.\frac{\partial^2 E_\text{sp}^{(2)} \left( r, P_{\mathbf{x}^n}\right) }{\partial r^2} \right|_{r=\bar{R}},
	\end{align}
	for some $\bar{R} \in (\underline{R}, R)$. Recalling item \ref{regularity-d} in Lemma \ref{lemm:regularity}, one can show that
	\begin{align} \label{eq:sharp23}
	\begin{split}
	&-   \left.\frac{\partial E_\text{sp}^{(2)} \left( r, P_{\mathbf{x}^n}\right) }{\partial r} \right|_{r=R}  = s_{R, P_{\mathbf{x}^n}   }^\star = \frac{ 1 - \alpha_{R, P_{\mathbf{x}^n}   }^\star }{ \alpha_{R, P_{\mathbf{x}^n}   }^\star } \in\mathbb{R}_{>0}, \\
	&\left.\frac{\partial^2 E_\text{sp}^{(2)} \left( r, P_{\mathbf{x}^n}\right) }{\partial r^2} \right|_{r=\bar{R}}= \frac{ (1+\bar{s})^3 }{  V_{ \alpha^\star_{\bar{R},P_{\mathbf{x}^n } } }(P_{\mathbf{x}^n},\mathscr{W}) } \leq \frac{  (1+s_0)^3 }{  V_{\min}( {R}, \nu  )   } =: \Upsilon \in \mathbb{R}_{>0},
	\end{split}
	\end{align}
	where 
	\begin{align}\bar{s} := - \left. \frac{\partial E_\text{sp}^{(2)}(r, P_{\mathbf{x}^n})}{ \partial r} \right|_{r=\bar{R}}  \leq - \left.\frac{\partial E_\text{sp}^{(2)}(r, P_{\mathbf{x}^n})}{ \partial r} \right|_{r=\underline{R}} =: s_0  	\in\mathbb{R}_{>0}
	\end{align}
	by the monotone decreases of $r\mapsto E_\text{sp}^{(2)}(r,P)$.
	Then, Eqs.~\eqref{eq:sharp20}, \eqref{eq:sharp21} and \eqref{eq:sharp23} lead to
	\begin{align}
	\widehat{\alpha}_{ c\exp\{ -n R \}  } \left( W_{\mathbf{x}^n}^{\otimes n}\| (\sigma^\star)^{\otimes n} \right) &\geq \frac{A(R,\nu)}{\sqrt{n}}  \exp\left\{ -n E_\text{sp}^{(2)}\left(  R, P_{\mathbf{x}^n}\right) - n \left[ \gamma_n   \left( s^\star_{R, P_{\mathbf{x}^n} } +\frac{\gamma_n}{2}\Upsilon \right)  \right] \right\} \\
	&= \frac{ A(R,\nu) }{ n^{\frac12\left( 1 + s^\star_{ R, P_{\mathbf{x}^n}}       \right)    }    } \exp\left\{ -n E_\textnormal{sp}^{(2)}\left(  R , P_{\mathbf{x}^n}\right) - \ell_n \right\}, \label{eq:sharp24}
	\end{align}
	where we denote by 
	\begin{align}
	\ell_n :=  -\left( s^\star_{R, P_{\mathbf{x}^n}} + \frac{\gamma_n}{2} \Upsilon \right){\log A(R,\nu)} + \frac{\gamma_n\Upsilon}{4}\log n.
	\end{align}
	Since $s^\star_{ R, P_{\mathbf{x}^n} } \in \mathbb{R}_{>0}$ and $\gamma_n \log n = o(1)$, we choose a constant $L\in\mathbb{R}_{>0}$ and $N_3\in\mathbb{N}$ such that
	\begin{align} \label{eq:sharp25}
	\ell_n \leq L, \quad \forall N\geq N_3.
	\end{align} 
	Hence, Eqs.~\eqref{eq:sharp24} and \eqref{eq:sharp25} lead to
	\begin{align} \label{eq:sharp26}
	\widehat{\alpha}_{ c\exp\{ -n R \}  } \left( W_{\mathbf{x}^n}^{\otimes n}\| (\sigma^\star)^{\otimes n} \right) 
	&= \frac{ A(R,\nu)\exp\{-L\} }{ n^{\frac12\left( 1 + s^\star_{ R, P_{\mathbf{x}^n}}     \right)      }    } \exp\left\{ -n E_\textnormal{sp}^{(2)}\left(  R , P_{\mathbf{x}^n}\right)  \right\}.
	\end{align}
	By letting $N_0 := \max\left\{N_1, N_2, N_3  \right\}$ and $A' := A(R,\nu) \exp\{-L\}$, we conclude the proof.	
\end{proof}
	
\subsection{Proofs of Theorem \ref{theo:refined} and Corollary \ref{coro:refined}} \label{ssec:Proof}
	
We are ready to prove our main result---the refined strong sphere-packing bound in Theorem \ref{theo:refined} for constant composition codes and Corollary \ref{coro:refined} for general codes.
	
\begin{proof}[Proof of Theorem \ref{theo:refined}]
Fix any rate $ C_{0,\mathscr{W}} < R <C_\mathscr{W}$. First note that by Ref.~\cite[Proposition 10]{HM16}, we find
\begin{align} \label{eq:EspR>0}
E_\text{sp}(R) \in \mathbb{R}_{>0}.
\end{align}
By Proposition~\ref{lemm:hypothesis} and the standard expurgation method (see e.g.~\cite[p.~96]{SGB67}, \cite[Theorem 20]{Bla74}, \cite[p.~395]{Bla87}), it holds for every constant composition code $\mathcal{C}_n$ with a common composition $P_{\mathbf{x}^n}$ that
\begin{align}
\overline{\epsilon} \left( \mathcal{C}_n \right) \geq \frac12 {\epsilon}_{\max} \left(  \mathcal{C}_n' \right)
&\geq \max_{  \sigma \in \mathcal{S(H)} } \frac12 \widehat{\alpha}_{ {1}/{|\mathcal{C}_n'|} } \left( W_{\mathbf{x}^n}^{\otimes n} \| \sigma^{\otimes n} \right)  \\
&\geq \max_{  \sigma \in \mathcal{S(H)} } \frac12 \widehat{\alpha}_{2\exp\{-nR\}} \left( W_{\mathbf{x}^n}^{\otimes n} \| \sigma^{\otimes n} \right)  \label{eq:sketch5} \\
&\geq \frac12 \widehat{\alpha}_{ 2\exp\{-nR\}} \left( W_{\mathbf{x}^n}^{\otimes n} \| (\sigma^\star)^{\otimes n} \right), \label{eq:sketch3}
\end{align}
where $\mathcal{C}_n'$ is an expurgated code with message size $|\mathcal{C}_n'| = \ceil{|\mathcal{C}_n|/2} \geq \frac12\exp\{nR\}$.
Inequality \eqref{eq:sketch5} holds because the map $\mu \mapsto \widehat{\alpha}_\mu$ is monotone decreasing.
In the last line \eqref{eq:sketch3} we denote by
$\sigma^\star$
%\begin{align} \label{eq:opt_cc}
%\sigma^\star = \sigma_{R,P_{\mathbf{x}^n}}^\star := \argmin_{\sigma\in\mathcal{S(H)}} \sup_{0< \alpha \leq 1} \left\{ \frac{1-\alpha}{\alpha} \left(  D_\alpha \left( \mathscr{W} \| \sigma | P_{\mathbf{x}^n} \right) - R \right) \right\}
%\end{align}
a channel output state that depends on the coding rate $R$ and the composition $P_{\mathbf{x}^n}$, and $\sigma^\star$ will be chosen later.

In the following, we deal with sequences of inputs that will yield different lower bounds. Fix an arbitrary $\delta \in (0, E_\text{sp}(R))$. Let $\nu := E_\text{sp}(R) - \delta > 0$, and recall the definition in Eq.~\eqref{eq:PRnu}:
\begin{align}
\mathscr{P}_{R,\nu}(\mathcal{X}) := \left\{ P_{\mathbf{x}^n} \in\mathscr{P}(\mathcal{X}): \nu\leq E_\text{sp}^{(2)}(R,P_{\mathbf{x}^n}) \leq E_\text{sp}(R) < +\infty  \right\}.
\end{align}
The set $\mathscr{P}_{R,\nu}(\mathcal{X})$ ensures that the error exponents of the input sequences $\mathbf{x}^n$ with composition $P_{\mathbf{x}^n} \in \mathscr{P}_{R,\nu}(\mathcal{X})$ are close to the sphere-packing exponent $E_\text{sp}(R)$. 

For sequences $\mathbf{x}^n$ with $ P_{\mathbf{x}^n} \notin \mathscr{P}_{R,\nu} (\mathcal{X})$, we infer that
\begin{align}
E_\text{sp}(R) - E_\text{sp}^{(2)} \left(R, P_{\mathbf{x}^n}\right)  = \delta > 0.
\end{align}
We then apply the Chebyshev-type bound, Proposition \ref{prop:Chebyshev}, with $c=2$ to obtain, for some $\kappa\in\mathbb{R}_{>0}$ and $\forall P_{\mathbf{x}^n} \notin \mathscr{P}_{R,\nu}(\mathcal{X})$,
\begin{align} 
\widehat{\alpha}_{2\exp\{-nR\}} \left( W_{\mathbf{x}^n}^{\otimes n} \| (\sigma^\star)^{\otimes n}  \right) &\geq   \exp\left\{ -\kappa  \sqrt{n} -n E_\text{sp}^{(2)}\left(R, P_{\mathbf{x}^n}  \right)    \right\},  \label{eq:caseI1} \\
&\geq  \exp\left\{ -\kappa  \sqrt{n} -n \left[ E_\text{sp}\left(R \right) - \delta\right]     \right\}, \label{eq:caseI}
\end{align}	
for all sufficiently large $n$, say $n\geq N_1\in\mathbb{N}$. The equality in Eq.~\eqref{eq:caseI1} follows from the saddle-point property, item \ref{saddle-a} in Proposition \ref{prop:saddle}, and the constants $\kappa_1$, $\kappa_2$ are positive and finite constants.

Next, we consider sequences $\mathbf{x}^n$ with $P_{\mathbf{x}^n} \in \mathscr{P}_{R,\nu}(\mathcal{X})$. Since such sequences  satisfy  Eq.~\eqref{eq:sharp_cond2}, we apply the sharp lower bound, Proposition \ref{prop:sharp}, with $c=2$ to obtain, $\forall P_{\mathbf{x}^n} \in \mathscr{P}_{R,\nu}(\mathcal{X})$,
\begin{align}
\widehat{\alpha}_{2\exp\{-nR\}} \left( W_{\mathbf{x}^n}^{\otimes n} \| (\sigma^\star)^{\otimes n} \right) &\geq \frac{2A}{n^{\frac12\left( 1 + s^\star_{R, P_{\mathbf{x}^n}  }  \right)}} \exp\left\{ -n E_\text{sp}^{(2)}\left(R , P_{\mathbf{x}^n} \right)   \right\},  \label{eq:caseII1}
\end{align}
for all sufficiently large $n$, say $n\geq N_2\in\mathbb{N}$, and some $A\in\mathbb{R}_{>0}$. In the following, we will relate the term $s^\star_{R, P_{\mathbf{x}^n}}$ in Eq.~\eqref{eq:caseII1} to $\left|E_\text{sp}'(R) \right|$. 
The idea follows similar from \cite[Eqs.~(111)--(114)]{AW14}. Let 
\begin{align}
&\mathscr{P}_{R}^\star(\mathcal{X}) := \left\{  P\in\mathscr{P}(\mathcal{X}): 
E_\text{sp}^{(2)} (R,P) = E_\text{sp}(R)   \right\}, \\
&\mathscr{P}_{\theta} (\mathcal{X}):= \left\{ P\in\mathscr{P}_{R,\nu}(\mathcal{X}) : \min_{ Q \in \mathscr{P}_{R}^\star(\mathcal{X})   }  \left\| P - Q   \right\|_1   \geq \theta   \right\}.
\end{align}
Since $s^\star_{R, (\cdot)}$ is uniformly continuous on the compact set $P\in\mathscr{P}_{R,\nu}(\mathcal{X})$ (see item \ref{saddle-d} of Proposition \ref{prop:saddle}), one has
		\begin{align} \label{eq:Tay2}
		\forall \gamma\in\mathbb{R}_{>0}, \; 
		\exists f(\gamma) \in \mathbb{R}_{>0}, \; \text{such that} \;
		\forall P,Q\in\mathscr{P}_{R,\nu}(\mathcal{X}), \;
		\left\|  P-Q \right\|_1 < f(\gamma)
		\Rightarrow
		%\;\text{implies }\;
		\left| s_{R,P}^\star - s_{R,Q}^\star \right| < \gamma.
		\end{align}
By choosing $\gamma \in \mathbb{R}_{>0}$ that satisfies Eq.~\eqref{eq:Tay2}, it follows that
\begin{align} \label{eq:caseII2}
s^\star_{R, P_{\mathbf{x}^n}} \leq \left|E_\text{sp}'(R) \right| + \gamma, \quad \forall P_{\mathbf{x}^n}\in \mathscr{P}_{R,\nu}(\mathcal{X}) \backslash \mathscr{P}_{f(\gamma)}(\mathcal{X}). 
\end{align}
Hence, Eqs.~\eqref{eq:caseII1} and \eqref{eq:caseII2} further lead to, $\forall P_{\mathbf{x}^n}\in \mathscr{P}_{R,\nu}(\mathcal{X}) \backslash \mathscr{P}_{f(\gamma)}(\mathcal{X})$,
\begin{align}
\widehat{\alpha}_{2\exp\{-nR\}} \left( W_{\mathbf{x}^n}^{\otimes n} \| (\sigma^\star)^{\otimes n} \right)  &\geq \frac{ 2A }{n^{\frac12\left( 1 + |E_\text{sp}'(R)|  + \gamma    \right)}} \exp\left\{ -n E_\text{sp}\left(R  \right)   \right\}. \label{eq:caseII3}
\end{align}
For the case $P_{\mathbf{x}^n} \in\mathscr{P}_{R,\nu}(\mathcal{X}) \cap \mathscr{P}_{f(\gamma)} (\mathcal{X})$, we have  
\begin{align} \label{eq:caseII4}
E_\text{sp}(R) - \max_{ P \in  \mathscr{P}_{f(\gamma)}(\mathcal{X})    } E_\text{sp}^{(2)} (R,P_{\mathbf{x}^n}) =: \delta' > 0.
\end{align}
Then, Eqs.~\eqref{eq:caseII1} and \eqref{eq:caseII4} give, $\forall P_{\mathbf{x}^n}\in \mathscr{P}_{R,\nu}(\mathcal{X}) \cap \mathscr{P}_{f(\gamma)}(\mathcal{X})$,
\begin{align}
\widehat{\alpha}_{2\exp\{-nR\}} \left( W_{\mathbf{x}^n}^{\otimes n} \| (\sigma^\star)^{\otimes n} \right) 
&\geq \frac{ 2A }{n^{\frac12\left( 1 + s^\star_{R, P_{\mathbf{x}^n} }    \right)}} \exp\left\{ -n \left[ E_\text{sp}\left(R  \right) - \delta' \right]   \right\}. \label{eq:caseII5}
\end{align}
		
Finally, by comparing the bounds in Eqs.~\eqref{eq:caseI}, \eqref{eq:caseII3} and \eqref{eq:caseII5}, the first-order leading term in the right-hand side of Eq.~\eqref{eq:caseII3} decays faster than that of Eqs.~\eqref{eq:caseI} and \eqref{eq:caseII5}. Thus, for sufficiently large $n$, say $n\geq N_3\in\mathbb{N}$, we combine the bounds to obtain, for all compositions $P_{\mathbf{x}^n} \in \mathscr{P}(\mathcal{X})$,
\begin{align}
\widehat{\alpha}_{2\exp\{-nR\}} \left( W_{\mathbf{x}^n}^{\otimes n} \| (\sigma^\star)^{\otimes n} \right) &\geq \frac{ 2A }{n^{\frac12\left( 1 + |E_\text{sp}'(R)| + \gamma    \right)}} \exp\left\{ -n E_\text{sp}\left(R  \right)   \right\}. \label{eq:sketch_lb}
\end{align}
By combining Eqs.~\eqref{eq:sketch3}, \eqref{eq:sketch_lb}, we conclude our result: for any $\gamma>0$ and every $n$-blocklength constant composition code $\mathcal{C}_n$,
\begin{align}
\bar{\epsilon}\left( \mathcal{C}_n  \right) \geq \frac{ A }{n^{\frac12\left( 1 + |E_\text{sp}'(R)|  + \gamma   \right)}} \exp\left\{ -n E_\text{sp}\left(R  \right)   \right\},
\end{align}
for all sufficiently large $n\geq N_0:= \max\left\{N_1,N_2,N_3  \right\}$.
		%\end{itemize}
\end{proof}

\begin{proof}[Proof of Corollary \ref{coro:refined}]
For an $n$-blocklength code, there are at most $\binom{n+|\mathcal{X}|-1}{|\mathcal{X}|-1} <  n^{|\mathcal{X}|}$ different compositions. Hence, for any code with $M = \exp\{ nR \}$ codewords, there exists some codewords $M'$ of the same composition such that $M' \geq M/n^{|\mathcal{X}|}$. Denote by $\mathcal{C}_n'$ such constant composition codes with composition $P_{\mathbf{x}^n}$.	
		
Fix an arbitrary $\underline{R} \in (C_{0,\mathscr{W}}, R)$, and choose $N_1$ be an integer such that $R - \frac{|\mathcal{X}|}{n} \log n \geq \underline{R}$ for all $n\geq N_1$. Consider such $n\geq N_1$ onwards. By following the similar steps in  Theorem \ref{theo:refined}, we obtain
\begin{align}
\epsilon^*\left(n, R\right) \geq \bar{\epsilon}\left(\mathcal{C}_n' \right) &\geq  \frac{ A }{n^{\frac12\left( 1 + s^\star_{R, P_\mathbf{x}^n }    \right)}}  \exp\left\{	-n E_\textnormal{sp}^{(2)}\left(R - \frac{|\mathcal{X}|}{n} \log n, P_{\mathbf{x}^n} \right)\right\}, \label{eq:coro1}
\end{align}
for all sufficiently large $n$, say $n\geq N_2 \in \mathbb{N}$, and some $s^\star_{R,P_{\mathbf{x}^n}} \in \mathbb{R}_{>0}$.	Let
\begin{align}
\Upsilon := \max_{ P \in \mathscr{P}(\mathcal{X}): E_\text{sp}^{(2)}(\bar{R},P) = E_\text{sp}(\bar{R})  } \left| \left.\frac{ \partial  E_\text{sp}^{(2)}(r, P) }{\partial r}\right|_{r = \underline{R}} \right|.
\end{align}
Then, item \ref{Esp-a} in Proposition \ref{prop:Esp} implies that
\begin{align}
E_\textnormal{sp}^{(2)}\left(R - \frac{|\mathcal{X}|}{n} \log n , P_{\mathbf{x}^n} \right) &\leq E_\text{sp}^{(2)}(R, P_{\mathbf{x}^n} ) + \Upsilon\cdot\frac{|\mathcal{X}|}{n} \log n \\
&\leq  E_\text{sp}(R ) + \Upsilon\cdot\frac{|\mathcal{X}|}{n} \log n, \quad \forall n\geq N_2 \label{eq:coro2}
\end{align}
		%	for all sufficiently large $n\in\mathbb{N}$.
Combining Eqs.~\eqref{eq:coro1} and \eqref{eq:coro2} gives
\begin{align}
\epsilon^*\left(n, R\right) \geq \frac{ A }{{n}^{\frac12\left( 1 + s^\star_{R, P_{\mathbf{x}^n} }    \right) + \Upsilon|\mathcal{X}|} }  \exp\left\{-n E_\textnormal{sp}(R)\right\}, \quad \forall n\geq \max\{N_1, N_2\}.
\end{align}
By choosing $t \in \mathbb{R}_{>0}$ such that $ n^{-t} \leq A n^{-\frac12\left( 1 + s^\star_{R, P_{\mathbf{x}^n}} \right) - \Upsilon|\mathcal{X}|} $, and letting $N_0 := \max\{N_1, N_2\}$, we conclude our claim.
\end{proof}
	
\section{Symmetric Classical-Quantum Channels} \label{sec:symm}
In this section, we consider a symmetric c-q channels. By using the symmetric property of the channels, we show that the uniform distribution, denoted by $U_{\mathcal{X}}$, achieves the maximum of $E_\textnormal{sp}^{(1)}(R,\cdot)$ and $E^{(2)}_\textnormal{sp}(R,\cdot)$. Then, by choosing the optimal output state $\sigma_{R}^\star = \sigma^\star_{R,U_\mathcal{X}}$, every input sequence in the codebook is a good codeword and attains the sphere-packing exponent $E_\text{sp}(R)$. Hence, we can remove the assumption of constant composition codes and apply Theorem \ref{theo:refined} in Section \ref{sec:main} to obtain the optimal prefactor for the sphere-packing bound (Theorem \ref{theo:symm}).

A c-q channel $\mathscr{W}:\mathcal{X}\to \mathcal{S(H)}$ is \emph{symmetric} if it satisfies
\begin{align} \label{eq:sym}
W_x := V^{x-1} W_1 ( V^{\dagger} )^{{x-1}}, \quad \forall x\in\mathcal{X},
\end{align}
where $W_1\in\mathcal{S(H)}$ is an arbitrary density operator, and $V$ is a unitary operator on $\mathcal{S(H)}$ that satisfies $V^{\dagger} V = V V^{\dagger} = V^{|\mathcal{X}|} = \mathds{1}_\mathcal{H}$.
		
\begin{theo}[Exact Sphere-packing Bound for Symmetric Classical-Quantum Channels] \label{theo:symm}
For any rate $R\in(C_{0,\mathscr{W}}, C_\mathscr{W})$, there exist $A>0$ and $N_0\in\mathbb{N}$ such that for all codes $\mathcal{C}_n$ of length $n\geq N_0$ with message size $|\mathcal{C}_n|\geq \exp\{nR\}$, we have
\begin{align} \label{eq:symm}
{\epsilon}_{\max} \left(\mathcal{C}_n \right) \geq  \frac{A}{ n^{\frac12 \left( 1+\left| E_\textnormal{sp}'(R)\right| \right) } }   \exp\left\{-n E_\textnormal{sp}(R)
\right\}.
\end{align}
\end{theo}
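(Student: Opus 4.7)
The plan is to exploit the cyclic symmetry of the channel to eliminate two sources of looseness present in the proof of Theorem \ref{theo:refined}: the expurgation (which costs a factor of $\tfrac12$ and is used to pass from maximal to average error) and the worst-case over codeword compositions (which introduces the slack $\gamma$ and forces the splitting into ``good'' and ``bad'' sequences). Concretely, I will show that every codeword, irrespective of its empirical type, achieves the full sphere-packing exponent $E_\textnormal{sp}(R)$ against one and the same dummy state. Lemma \ref{lemm:hypothesis} combined with a single application of Proposition \ref{prop:sharp} then delivers the claimed bound directly for the maximal error.

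The central step will be to prove that the uniform prior $U_\mathcal{X}$ attains $\max_P E_\textnormal{sp}^{(2)}(R,P)$ and that the associated optimal dummy state $\sigma^\star := \sigma^\star_{R, U_\mathcal{X}}$ is invariant under conjugation by $V$. For the first claim, since $D_\alpha(\mathscr{W}\|\sigma\mid P)$ is linear in $P$, the map $P \mapsto I_\alpha^{(2)}(P,\mathscr{W})$ is concave; combining this with the identity $I_\alpha^{(2)}(P \circ V^k, \mathscr{W}) = I_\alpha^{(2)}(P,\mathscr{W})$ (a direct consequence of the channel symmetry and the unitary invariance of $D_\alpha$) and averaging over $k \in \{0, \ldots, |\mathcal{X}|-1\}$ yields $I_\alpha^{(2)}(U_\mathcal{X}, \mathscr{W}) \geq I_\alpha^{(2)}(P,\mathscr{W})$ for every $P$ and every $\alpha$, whence $U_\mathcal{X}$ maximizes $E_\textnormal{sp}^{(2)}(R, \cdot)$ as well. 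For the invariance, the same unitary manipulation gives
\begin{align*}
D_\alpha(\mathscr{W} \| V\sigma V^\dagger \mid U_\mathcal{X}) = \frac{1}{|\mathcal{X}|}\sum_{x} D_\alpha(W_x \| V\sigma V^\dagger) = \frac{1}{|\mathcal{X}|}\sum_{x} D_\alpha(W_{x-1}\|\sigma) = D_\alpha(\mathscr{W}\|\sigma\mid U_\mathcal{X}),
\end{align*}
so $F_{R,U_\mathcal{X}}(\alpha, \cdot)$ is invariant under $\sigma \mapsto V\sigma V^\dagger$. Because $R < C_\mathscr{W}$ ensures $E_\textnormal{sp}^{(2)}(R, U_\mathcal{X}) = E_\textnormal{sp}(R) > 0$ and hence $U_\mathcal{X} \in \mathscr{P}_R(\mathcal{X})$, the uniqueness of the saddle-point in Proposition \ref{prop:saddle}\ref{saddle-b} forces $V\sigma^\star V^\dagger = \sigma^\star$.

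Once the invariance $V\sigma^\star V^\dagger = \sigma^\star$ is in hand, for every $x \in \mathcal{X}$ and every $\alpha \in (0,1]$,
\begin{align*}
D_\alpha(W_x \| \sigma^\star) = D_\alpha\!\left(V^{x-1} W_1 (V^\dagger)^{x-1} \,\big\|\, V^{x-1}\sigma^\star (V^\dagger)^{x-1}\right) = D_\alpha(W_1\|\sigma^\star),
\end{align*}
so that for any input sequence $\mathbf{x}^n \in \mathcal{X}^n$ the product relation yields $\tfrac{1}{n} D_\alpha(W_{\mathbf{x}^n}^{\otimes n} \| (\sigma^\star)^{\otimes n}) = D_\alpha(\mathscr{W}\|\sigma^\star \mid U_\mathcal{X})$ \emph{independently of the composition} $P_{\mathbf{x}^n}$. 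Invoking the saddle-point identity in Proposition \ref{prop:saddle}\ref{saddle-a} at the uniform prior, this is exactly what is needed to guarantee $E_\textnormal{sp}^{(2)}(R, P_{\mathbf{x}^n}) = E_\textnormal{sp}(R)$ uniformly in $\mathbf{x}^n$, which removes the need for both the expurgation step and the ``good/bad sequence'' dichotomy of Theorem \ref{theo:refined}.

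To finish, Lemma \ref{lemm:hypothesis} applied with the dummy state $\sigma^\star$, together with the monotonicity of $\mu \mapsto \widehat{\alpha}_\mu$ and the inequality $1/|\mathcal{C}_n| \leq \exp\{-nR\}$, yields
\begin{align*}
\epsilon_{\max}(\mathcal{C}_n) \geq \min_{\mathbf{x}^n \in \mathcal{C}_n} \widehat{\alpha}_{\exp\{-nR\}}\!\left(W_{\mathbf{x}^n}^{\otimes n} \,\big\|\, (\sigma^\star)^{\otimes n}\right).
\end{align*}
Because the hypothesis $E_\textnormal{sp}^{(2)}(R, P_{\mathbf{x}^n}) \geq \nu := E_\textnormal{sp}(R)$ of Proposition \ref{prop:sharp} now holds with the same positive $\nu$ for every codeword, the proposition (with $c=1$) delivers the uniform lower bound $A\, n^{-(1+s^\star_{R, U_\mathcal{X}})/2} \exp\{-n E_\textnormal{sp}(R)\}$ for all $n \geq N_0$. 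Since $U_\mathcal{X}$ belongs to the set over which the maximum in Eq.~\eqref{eq:s_star} is taken, one has $s^\star_{R, U_\mathcal{X}} \leq |E_\textnormal{sp}'(R)|$, and the claimed pre-factor follows. The main technical obstacle in this plan is the symmetry and uniqueness argument for $\sigma^\star$; all subsequent steps amount to a cleaner re-use of the machinery already developed in Section \ref{sec:main}.
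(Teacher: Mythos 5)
Your plan follows essentially the same route as the paper's proof of Theorem~\ref{theo:symm}: both show that $U_\mathcal{X}$ attains the maximum, extract a single dummy state $\sigma^\star := \sigma^\star_{R,U_\mathcal{X}}$ that serves every codeword, and feed this into Lemma~\ref{lemm:hypothesis} together with (an adaptation of) Proposition~\ref{prop:sharp}. Your route to the key symmetry fact is in fact a little cleaner: rather than manipulating the explicit formula Eq.~\eqref{eq:saddle6} and the optimality condition in Proposition~\ref{prop:Hol}, you observe that $F_{R,U_\mathcal{X}}(\alpha,\cdot)$ is invariant under $\sigma\mapsto V\sigma V^\dagger$, invoke uniqueness of the saddle-point (Proposition~\ref{prop:saddle}\ref{saddle-b}) to obtain $V\sigma^\star V^\dagger=\sigma^\star$, and conclude $D_\alpha(W_x\|\sigma^\star)=D_\alpha(W_1\|\sigma^\star)$ for \emph{every} $\alpha\in(0,1]$ by unitary invariance of $D_\alpha$. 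The paper's argument via Eq.~\eqref{eq:sym2} only gives the $x$-independence at $\alpha=\alpha_R^\star$, so your version is more convenient downstream; your concavity-plus-averaging argument that $U_\mathcal{X}$ is the optimal prior is likewise a tidy replacement for the Kuhn--Tucker check.

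You do, however, share one incorrect intermediate claim with the paper's Eq.~\eqref{eq:symm2}: the assertion that $E_\text{sp}^{(2)}(R,P_{\mathbf{x}^n})=E_\text{sp}(R)$ for every codeword cannot hold. For a point-mass composition $P_{\mathbf{x}^n}=\delta_{x_0}$ one has $I_\alpha^{(2)}(\delta_{x_0},\mathscr{W})=\min_{\sigma}D_\alpha(W_{x_0}\|\sigma)=0$, hence $E_\text{sp}^{(2)}(R,\delta_{x_0})=0<E_\text{sp}(R)$, and such a codeword is not excluded by the theorem. What your computation actually establishes is the composition-\emph{independent} identity
\begin{align*}
\phi_n\left(R\,\big|\,W_{\mathbf{x}^n}^{\otimes n}\,\big\|\,(\sigma^\star)^{\otimes n}\right)
=\sup_{0<\alpha\leq 1}F_{R,U_\mathcal{X}}(\alpha,\sigma^\star)=E_\text{sp}(R),
\end{align*}
the hypothesis-testing exponent against the \emph{fixed} dummy $(\sigma^\star)^{\otimes n}$. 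This quantity is not $E_\text{sp}^{(2)}(R,P_{\mathbf{x}^n})$, because $\sigma^\star$ is not the $F_{R,P_{\mathbf{x}^n}}$-optimal state when $P_{\mathbf{x}^n}\neq U_\mathcal{X}$. In consequence Proposition~\ref{prop:sharp} cannot be cited verbatim: it is stated for $\sigma^n=(\sigma^\star_{R,P_{\mathbf{x}^n}})^{\otimes n}$, and its hypothesis $E_\text{sp}^{(2)}(R,P_{\mathbf{x}^n})\geq\nu$ can fail. The fix is routine but should be made explicit: re-run the proof of Proposition~\ref{prop:sharp} with the dummy $(\sigma^\star)^{\otimes n}$, replacing the saddle-point identification of Eq.~\eqref{eq:sharp22} by the identity above. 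The Bahadur--Ranga-Rao machinery is unaffected and delivers the uniform exponent $E_\text{sp}(R)$ with slope $s^\star_{R,U_\mathcal{X}}=|E_\text{sp}'(R)|$, giving exactly the bound of Theorem~\ref{theo:symm}.
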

	
\begin{proof}
The proof consists of the following steps.
First, we show that the distribution ${U}_{\mathcal{X}}$ satisfies $E_\text{sp}^{(1)}(R,{U}_{\mathcal{X}}) = E_\text{sp}^{(2)}(R,{U}_{\mathcal{X}}) = E_\text{sp}(R)$.
Second, we show that $E_\text{sp}^{(2)}(R,P) = E_\text{sp}(R)$ for all $P\in\mathscr{P}(\mathcal{X})$, which means that any codeword attains the sphere-packing exponent.
Finally, we follow Theorem \ref{theo:refined} to complete the proof.
	
Fix any $R\in(C_{0,\mathscr{W}}, C_\mathscr{W})$. From the definition of the symmetric channels in Eq.~\eqref{eq:sym}, it is not hard to verify that $U_{\mathcal{X}} \mathscr{W}^{\alpha} = V U_{\mathcal{X}} \mathscr{W}^\alpha V^\dagger$ for all $\alpha\in (0,1]$, where we denote by $P\mathscr{W}^\alpha := \sum_{x\in\mathcal{X}} P(x) W_x^\alpha$ for all $\alpha\in(0,1]$. Hence,  it follows that 
\begin{align}
\Tr[ W_x^{\alpha}  ( U_{\mathcal{X}} \mathscr{W}^{\alpha} )^{\frac{1-\alpha}{\alpha}} ] 
&= \Tr[ V^{x-1} W_1^{\alpha} V^{\dagger\, x-1}  ( U_{\mathcal{X}} \mathscr{W}^{\alpha} )^{\frac{1-\alpha}{\alpha}} ]  \\
&= \Tr[  W_1^{\alpha}  ( U_{\mathcal{X}} \mathscr{W}^{\alpha} )^{\frac{1-\alpha}{\alpha}} ]  \label{eq:sym1}
%&=\Tr[ ( U_{\mathcal{X}} \mathscr{W}^{\alpha} )^{\frac{1}{\alpha}}]
\end{align} 
for all $x\in\mathcal{X}$ and $\alpha\in (0,1]$. 
Summing Eq.~\eqref{eq:sym1} over all $x\in\mathcal{X}$ and dividing by $M$ yields that
\begin{align}
\Tr[ W_x^{\alpha}  ( U_{\mathcal{X}} \mathscr{W}^{\alpha} )^{\frac{1-\alpha}{\alpha}} ] 
= \Tr[ ( U_{\mathcal{X}} \mathscr{W}^{\alpha} )^{\frac{1}{\alpha}}], \label{eq:sym2}
\end{align}
for all $x\in\mathcal{X}$ and $\alpha\in (0,1]$. 
Recalling Proposition \ref{prop:Hol} below, the above equation shows that  	the distribution $U_\mathcal{X}$ indeed maximizes $E_0(s,P)$, $\forall s\in\mathbb{R}_{\geq 0}$. Then we have
\begin{align*}
E_\text{sp}^{(1)}(R, U_\mathcal{X}) = \sup_{s\geq 0} \left\{ \max_{P\in\mathscr{P}(\mathcal{X})} E_0(s,P) -sR  \right\} = E_\text{sp}(R).
\end{align*}
Further, Jensen's inequality shows that $E_\text{sp}^{(2)} (R,U_\mathcal{X}) \geq E_\text{sp}^{(1)} (R,U_\mathcal{X}) = E_\text{sp}(R)$, and thus, $E_\text{sp}^{(2)} (R,U_\mathcal{X}) = E_\text{sp}(R)$.
		
Next, let $(\alpha_R^\star, \sigma_R^\star)$ be the saddle-point of $F_{R,U_\mathcal{X}}(\cdot,\cdot)$ (see Eq.~\eqref{eq:F}). One can observe from the definition of $E_\text{sp}^{(2)}$ and Eq.~\eqref{eq:sym2} that all the quantities $D_{\alpha_R^\star  }(W_x\|\sigma_R^\star)$, $x\in\mathcal{X}$, are equal. 	By Proposition \ref{prop:I2}-\ref{I2-Augustin_mean}, we obtain 
\begin{align} \label{eq:symm1}
\sigma_R^\star =\frac{ \left(U_\mathcal{X} \mathscr{W}^{\alpha_R^\star}\right)^{1/\alpha_R^\star} }{ \Tr \left[ \left(U_\mathcal{X} \mathscr{W}^{\alpha_R^\star}\right)^{1/\alpha_R^\star} \right] },
\end{align}
which, in turn, implies that
\begin{align} \label{eq:symm2}
E_\text{sp}^{(2)} ( R, P ) =\sup_{\alpha\in(0,1]} F_{R,P}(\alpha, \sigma_R^\star) = \sup_{s\geq 0} \left\{ E_0(s,U_\mathcal{X}) - sR  \right\} = E_\text{sp}(R), \quad \forall P\in\mathscr{P}(\mathcal{X}).
\end{align}
Further, we have 
\begin{align}\label{eq:symm3}
\left|E_\text{sp}'(R)\right| = \frac{1-\alpha_R^\star}{\alpha_R^\star} = \left| \frac{ \partial E_\text{sp}^{(2)}(R,P)  }{\partial R}  \right|, \quad \forall P\in\mathscr{P}(\mathcal{X}).
\end{align} 
Since Eqs.~\eqref{eq:symm2} and \eqref{eq:symm3} indicates that every input sequence attains the sphere-packing exponent, we apply the same arguments in the proof of Theorem \ref{theo:refined} to conclude this theorem.

\begin{prop}[{\cite[Eq.~(38)]{Hol00}}] \label{prop:Hol}
	Let $s\in \mathbb{R}_{\geq 0}$ be arbitrary. The necessary and sufficient condition for the distribution $P^\star$ to maximize $E_0(s,P)$ is
	\begin{align} \label{eq:max_p0}
	\Tr\left[ W_x^{1/(1+s)} \cdot \left( \sum_{\bar{x}\in\mathcal{X}} P^\star(\bar{x}) W_{\bar{x}}^{1/(1+s)} \right)^s \right]
	\geq \Tr\left[ \left( \sum_{\bar{x}\in\mathcal{X}} P^\star(\bar{x}) W_{\bar{x}}^{1/(1+s)} \right)^{1+s} \right], \;\forall x \in \mathcal{X}
	\end{align}
	with equality if $P^\star(x)\neq 0$.
\end{prop}
\end{proof}

\section{Conclusions} \label{sec:conclusion}
	
In this paper, we provided an exposition of sphere-packing bounds in classical and quantum channel coding. Unlike classical results, there are two different quantum sphere-packing exponents, one being stronger than the other. We provided variational representations for these two exponents, and showed that they are ordered by the Golden-Thompson inequality. Our proof strategy was inspired by Blahut's approach of hypothesis testing reduction \cite{Bla74} and Altu{\u{g}}-Wagner's technique in strong large deviation theory \cite{AW14}. Specifically, the prefactor of the bound, that is akin to the converse Hoeffding bound in quantum hypothesis testing, can be improved by Bahadur-Ranga Rao's sharp concentration inequality \cite{BR60,DZ98}. Consequently, we obtained a refined strong sphere-packing bound for c-q channels and constant composition codes with a polynomial prefactor $f(n) = n^{-\frac12\left(1+|E_\text{sp}'(R)|+o(1)\right)}$. Moreover, the established result matches the best known random coding bound (i.e.~achievability) up to the logarithmic order \cite{AW14, SMF14, Sca14, Hon15}. For the case of general codes, the derived prefactor is of the polynomial order, i.e.~$f( n ) = O(n^{-t})$ for some $t> 1/2$. We are able to obtain the exact prefactor without the assumption of constant composition codes for a class of symmetric c-q channels. We note that the exact prefactor for general codes is still open even in the classical case. 	Finally, our refinement enables a moderate deviation analysis in c-q channels \cite{CH17} (see also \cite{CCT+16b}).

\section*{Acknowledgements}
M.-H.~Hsieh was supported by an ARC Future Fellowship under Grant FT140100574 and by US Army Research Office for Basic Scientific Research Grant W911NF-17-1-0401.
H.-C.~Cheng was supported by Ministry of Science and Technology Overseas Project for Post Graduate Research (Taiwan) under Grant 105-2917-I-002-028, 108-2917-I-564-042, and 104-2221-E-002-072.
The author would like to thank two reviewers and the editor for their useful comments.
The author would like to thank Bar\i\c{s} Nakibo\u{g}lu for the suggestion on the manuscript.
	
\appendix
	
\section{Lengendre-Fenchel Transform and Error-Exponent Functions} \label{app:LF}

In this section, we will see that the Lengendre-Fenchel transform is closely related to the error-exponent function of hypothesis testing and channel coding. 
Let $Z$ be a random variable, and denote by $\Lambda(\alpha) := \log \mathbb{E}[\e^{(1-\alpha)Z}]$ the cumulant-genearating function. We call the following quantity the \emph{Lengendre-Fenchel transform} of $\Lambda(\alpha)$:
\begin{align}
\Lambda^*(z) :=
\sup_{\alpha\in\mathbb{R}} \left\{z(1-\alpha) - \Lambda(\alpha)
\right\}.
\end{align}
This quantity arises in the exponent of various large deviation concentration inequalities \cite{DZ98}.
By choosing $Z$ to be the log-likelihood of two distributions as will be shown later, the  Lengendre-Fenchel transform will determine how fast the errors in a hypothesis testing exponentially decay.

Consider the following binary hypotheses:
\begin{align}
\begin{split} \label{eq:pn_qn}
&\mathsf{H}_0: p^n := p_{x_1}\otimes p_{x_2}\otimes \cdots p_{x_n}, \\
&\mathsf{H}_1: q^n := q_{x_1}\otimes q_{x_2}\otimes \cdots q_{x_n},
\end{split}
\end{align}
where $p_{x_i}, q_{x_i}$ for $x_i\in\mathcal{X}$, $i\in[n]$ are probability mass functions with $p^n\ll q^n$.  Given any $r\geq 0$, recall the definition of the error-exponent function in Eq.~(\ref{eq:exponent_HT}):
\begin{align}
\phi_n(r) = \phi_n(r|p^n\|q^n) = 
\sup_{\alpha\in(0,1]} \left\{ \frac{1-\alpha}{\alpha}\left(  \frac1n D_\alpha\left( p^n\| q^n  \right) - r \right)  \right\}. \label{eq:phi_n}
\end{align}
%We further assume $p^n$ and $q^n$ have the same support for simplicity.
Without loss of generality, we set zero all elements of $q_{x_i}$ that do not lie in the support of $p_{x_i}$, i.e.~ $q_{x_i}(\omega) = 0$, $\omega \not\in \texttt{supp}(p_{x_i})$, $i\in[n]$, because those elements do not contribute in $\phi_n(r)$.

%Let $Z$ be a random variable with probability measure $\mu$. 
%%Further, we assume $Z$ is finite on $\texttt{supp}(\mu)$. 
%The cumulant generating function of $Z$ is defined as
%\begin{align}
%\Lambda(\alpha) := \log \mathbb{E}_\mu\left[ \mathrm{e}^{(1-\alpha) Z} \right], \quad \alpha\in\mathbb{R}.
%\end{align}
%The \emph{Lengendre-Fenchel transform} of $\Lambda(t)$ is 
%\begin{align}
%\Lambda^*(z) := \sup_{\alpha\in\mathbb{R}} \left\{ z(1-\alpha) - \Lambda(\alpha)   \right\}.
%\end{align}
%Such a transform plays a significant role in  concentration inequalities, convex analysis, and large deviation theory \cite{DZ98}.

%We set zero all elements of $q_{x_i}$ that do not lie in the support of $p_{x_i}$, i.e.~ $q_{x_i}(\omega) = 0$, $\omega \not\in \texttt{supp}(p_{x_i})$, $i\in[n]$, since those elements do not contribute in $\phi_n(R_n)$.

Let random variable $Z_0 = \log \frac{q^n}{p^n} $ with probability distribution $p^n$, and let $Z_1 = \log \frac{p^n}{q^n} $ with probability distribution $q^n$. We denote their cumulant generating fucntions by
\begin{align}
\begin{split}
\Lambda_{0,n}(\alpha) &:= \frac{1}{n}\log \mathbb{E}_{p^n}\left[ \mathrm{e}^{ (1-\alpha)Z_0} \right] = \frac1n \sum_{i\in[n]}  	\Lambda_{0,{i}} (\alpha), \\  
\Lambda_{1,n} (\alpha) &:=\frac{1}{n} \log \mathbb{E}_{q^n}\left[ \mathrm{e}^{ (1-\alpha) Z_1} \right] = \frac1n \sum_{i\in[i]} 	\Lambda_{1,{i}} (\alpha);
\end{split}
\label{eq:zero_derivative}	
\end{align}
where 
\begin{align}
\Lambda_{0,{i}} (\alpha) &:= \log \mathbb{E}_{ {p}_{i}} \left[ \mathrm{e}^{ (1-\alpha) \log \frac{ {q}_{i}}{ {p}_{i}} } \right],\quad 
\Lambda_{1,{i}} (\alpha) := \log \mathbb{E}_{ {q}_{i}} \left[ \mathrm{e}^{ (1-\alpha) \log \frac{ {p}_{i}}{ {q}_{i}} } \right]. 
\end{align}
%Since $p^n$ and $q^n$ have the same support, the quantities in Eq.~\eqref{eq:zero_derivative}	are infinitely differentiable with respect to $\alpha$.
Rewrite the right-hand side of Eq.~\eqref{eq:phi_n} with $\alpha = \frac{1}{1+s}$, and observe that 
\begin{align}
\sum_{x\in\mathcal{X}} P_{\mathbf{x}^n} (x) sD_{\frac{1}{1+s}}(p_x\|q_x) \label{eq:E021} 
&= - (1+s) {\Lambda}_{0,P_{\mathbf{x}^n}} \left( \frac{1}{1+s} \right) \\
&=: E_0^{(2)}(s,P_{\mathbf{x}^n}). \label{eq:regularity8}
\end{align}
Then the error-exponent function in Eq.~\eqref{eq:phi_n} can also be viewed as a Lengendre-Fenchel transform of $E_0^{(2)}(s,P_{\mathbf{x}^n})$:
\begin{align}
\phi_n(r) &= \sup_{s\geq 0} \left\{ E_0^{(2)}(s,P_{\mathbf{x}^n})  -sr \right\}.\label{eq:objective}
\end{align}

The following lemma relates $\phi_n(r)$ to $\Lambda^*_{j,n}(z)$, the Lengendre-Fenchel transform of Eq.~(\ref{eq:zero_derivative}):
\begin{align}
\Lambda_{j}^* (z) := \sup_{\alpha\in\mathbb{R}} \left\{ z(1-\alpha) - \Lambda_{j,n}(\alpha)  \right\}, \quad j\in\{0,1\}. \label{eq:FL2}
\end{align}
Such a transform plays a significant role in  concentration inequalities, convex analysis, and large deviation theory \cite{DZ98}.

\begin{lemm} \label{lemm:regularity}
	Let $p^n$ and $q^n$  be described as above. % and let $r> 0$.
	Assume $r> \frac1n D_0\left( p^n\|q^n \right)$ and $\phi_n(r) >0$.  
	%	Assume $\phi_n(r) \in \mathbb{R}_{>0}$.
	The following hold:
	\begin{enumerate}[(a)]			
		\item\label{regularity-a} 
		%		If $r> \frac1n D_0\left( p^n\|q^n \right)$ and $\phi_n(r) >0$, then
		$\Lambda_{0,n} '' (\alpha) >0$ for all $\alpha\in[0,1]$.
		\item\label{regularity-b} $\Lambda^*_{0,n} \left(  {\phi}_n({r}) - {r} \right) =  {\phi}_n(r)$.
		
		\item\label{regularity-c} $\Lambda^*_{1,n} \left( r -  {\phi}_n({r}) \right) = r$.
		
		\item\label{regularity-d} 
		The optimizer $\alpha^\star  \in (0,1)$ of $\Lambda^*_{0 }(z)$ in Eq.~\eqref{eq:FL2} is unique, and satisfies ${\Lambda}'_{0,P_{\mathbf{x}^n} }(\alpha^\star) =  {\phi}_n(r) - r$.  In particular, one has $\alpha^\star = \frac{s^\star }{1+s^\star }$ and
		\begin{align}
		s^\star &= - \frac{ \partial  {\phi}_n (r) }{\partial r}; \\ 
		\frac{ \partial^2  {\phi}_n (r) }{\partial r^2} &=  -\left( \left.\frac{\partial^2 E_0^{(2)}(s,P_{\mathbf{x}^n})   }{\partial s^2}\right|_{s=s^\star} \right)^{-1} = 	\frac{\left(1+s^\star \right)^3}{ \Lambda_{0, P_{\mathbf{x}^n} }'' (\alpha^\star)} > 0.
		\end{align}		
	\end{enumerate}
\end{lemm}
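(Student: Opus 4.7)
\medskip

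The plan is to exploit the change of variables $\alpha = 1/(1+s)$, $t = s/(1+s)$ which, as already recorded in the paper, converts the error-exponent function into a Legendre-type transform:
$$\phi_n(r) \;=\; \sup_{s\geq 0}\bigl\{E_0^{(2)}(s,P_{\mathbf{x}^n}) - s r\bigr\}, \qquad E_0^{(2)}(s,P_{\mathbf{x}^n}) = -(1+s)\,\Lambda_{0,P_{\mathbf{x}^n}}\!\left(\tfrac{s}{1+s}\right).$$
All four items will follow from the dictionary between this optimization and the Legendre--Fenchel transforms $\Lambda^*_{j,P_{\mathbf{x}^n}}$. Also useful is the elementary identity $\Lambda_{1,P_{\mathbf{x}^n}}(t)=\Lambda_{0,P_{\mathbf{x}^n}}(1-t)$, verified by computing both sides as $\sum_{x}P_{\mathbf{x}^n}(x)\log\sum_\omega p_x(\omega)^{t}q_x(\omega)^{1-t}$.

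For \emph{item \ref{regularity-a}}, recall that $\Lambda_{0,x}''(t)$ equals the variance of $\log(q_x/p_x)$ under the tilted measure proportional to $p_x^{1-t}q_x^t$; hence it is strictly positive unless $q_x$ is proportional to $p_x$ on $\textnormal{\texttt{supp}}(p_x)$, i.e.\ $p_x=q_x$ there. The assumption $\phi_n(r)>0$ together with $r>\tfrac{1}{n}D_0(p^n\|q^n)$ forces the Rényi divergence $D_\alpha(p^n\|q^n)$ to be nonconstant in $\alpha$, so at least one $x\in\textnormal{\texttt{supp}}(P_{\mathbf{x}^n})$ contributes a strictly positive term to the (convex combination) $\Lambda_{0,P_{\mathbf{x}^n}}''(t)$, yielding strict positivity on $[0,1]$.

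For \emph{item \ref{regularity-d}}, I will first locate the optimizer $s^\star$ of $\phi_n(r)$. The hypothesis $r>\tfrac{1}{n}D_0(p^n\|q^n)$ together with $\phi_n(r)>0$ rules out the boundary cases (the first excludes $s^\star=\infty$, i.e.\ $t^\star=1$, via \eqref{eq:phi}; the second excludes $s^\star=0$ since $E_0^{(2)}(0)=0$). Thus $s^\star\in(0,\infty)$, hence $t^\star\in(0,1)$. Direct differentiation using $E_0^{(2)}(s)=-(1+s)\Lambda_{0,P_{\mathbf{x}^n}}(t)$ and $\tfrac{dt}{ds}=(1-t)^2$ yields
$$\frac{\partial E_0^{(2)}}{\partial s} \;=\; -(1-t)\Lambda'_{0,P_{\mathbf{x}^n}}(t)-\Lambda_{0,P_{\mathbf{x}^n}}(t),$$
so the first-order condition at $s^\star$ reads $-(1-t^\star)\Lambda'_{0,P_{\mathbf{x}^n}}(t^\star)-\Lambda_{0,P_{\mathbf{x}^n}}(t^\star)=r$. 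Substituting this into $\phi_n(r)=E_0^{(2)}(s^\star)-s^\star r$ and using $\tfrac{1}{1-t^\star}=1+s^\star$ produces
$$\phi_n(r)-r \;=\; \Lambda'_{0,P_{\mathbf{x}^n}}(t^\star),$$
which is exactly the first-order condition characterizing the unique maximizer of $t\mapsto t(\phi_n(r)-r)-\Lambda_{0,P_{\mathbf{x}^n}}(t)$; uniqueness is guaranteed by item~\ref{regularity-a}. The envelope theorem applied to the $\sup_s$ representation gives $s^\star=-\partial \phi_n(r)/\partial r$. Implicit differentiation of the FOC together with the chain-rule identity $\partial_{ss}^2 E_0^{(2)}(s)=-\Lambda_{0,P_{\mathbf{x}^n}}''(t)/(1+s)^3$ delivers
$$\frac{\partial^2\phi_n(r)}{\partial r^2}\;=\;-\Bigl(\partial_{ss}^2 E_0^{(2)}(s^\star)\Bigr)^{-1}\;=\;\frac{(1+s^\star)^3}{\Lambda_{0,P_{\mathbf{x}^n}}''(t^\star)}\,,$$
positive by item~\ref{regularity-a}.

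For \emph{item \ref{regularity-b}}, plug $z=\phi_n(r)-r$ and the optimizer $t^\star$ from item~\ref{regularity-d} into $\Lambda^*_{0,P_{\mathbf{x}^n}}(z)$:
$$\Lambda^*_{0,P_{\mathbf{x}^n}}(\phi_n(r)-r)\;=\;t^\star\Lambda'_{0,P_{\mathbf{x}^n}}(t^\star)-\Lambda_{0,P_{\mathbf{x}^n}}(t^\star),$$
which a one-line algebraic manipulation of $\phi_n(r)=-\tfrac{1}{1-t^\star}\Lambda_{0,P_{\mathbf{x}^n}}(t^\star)-\tfrac{t^\star}{1-t^\star}r$ (inserting the FOC expression for $r$) identifies with $\phi_n(r)$. \emph{Item \ref{regularity-c}} then follows from the symmetry identity: substituting $u=1-t$ in the definition of $\Lambda^*_{1,P_{\mathbf{x}^n}}$ gives $\Lambda^*_{1,P_{\mathbf{x}^n}}(z)=z+\Lambda^*_{0,P_{\mathbf{x}^n}}(-z)$, and taking $z=r-\phi_n(r)$ together with item~\ref{regularity-b} produces $r$.

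The only delicate point, and hence the main obstacle, is pinning down $t^\star\in(0,1)$: one must carefully separate the two hypotheses ($r>\tfrac{1}{n}D_0(p^n\|q^n)$ excluding $t^\star=1$, and $\phi_n(r)>0$ excluding $t^\star=0$). The rest of the lemma is a mechanical bookkeeping exercise in convex duality and the chain rule under the reparametrization $\alpha\leftrightarrow s\leftrightarrow t$.
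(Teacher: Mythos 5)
Your proof is correct and follows essentially the same route as the paper's: a contradiction argument for item~\ref{regularity-a}, and the first-order-condition / chain-rule computation in the $t=s/(1+s)$ parametrization for items~\ref{regularity-b}--\ref{regularity-d}, with your route to~\ref{regularity-c} via the identity $\Lambda^*_{1,P_{\mathbf{x}^n}}(z)=z+\Lambda^*_{0,P_{\mathbf{x}^n}}(-z)$ being a slightly cleaner packaging of the paper's direct substitution. One small slip in~\ref{regularity-a}: ``$q_x$ proportional to $p_x$ on $\textnormal{\texttt{supp}}(p_x)$'' does not imply $p_x=q_x$ there (the constant of proportionality is $\Tr[p_x^0 q_x]$, which can be strictly less than one since the paper restricts $q_x$ to $\textnormal{\texttt{supp}}(p_x)$ without renormalizing), but the point you actually use---that proportionality makes $D_\alpha(p_x\|q_x)$ independent of $\alpha$, hence $\phi_n(r)=0$ once $r>\tfrac1n D_0(p^n\|q^n)$---is correct and is exactly the paper's contradiction.
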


Before proving Lemma~\ref{lemm:regularity}, we will need the following partial derivatives with respect to $t$:
\begin{align}
\Lambda'_{0,{i}} (\alpha) = \mathbb{E}_{ {v}_{{i},\alpha} } \left[ \log \frac{ {p}_{i}}{ {q}_{i}}  \right], &\quad \Lambda'_{1,{i}} (\alpha)  = \mathbb{E}_{ {v}_{{i},1-\alpha}} \left[ \log \frac{ {q}_{i}}{ {p}_{i}}  \right]; \label{eq:first_derivative}\\
\Lambda''_{0,{i}} (\alpha) = \Var_{ {v}_{{i},\alpha}} \left[ \log \frac{ {p}_{i}}{ {q}_{i}}  \right], &\quad \Lambda''_{1,{i}} (\alpha) = \Var_{ {v}_{{i},\alpha}} \left[ \log \frac{ {q}_{i}}{ {p}_{i}}  \right], \label{eq:second_derivatie} 
\end{align}
where we denote the \emph{tilted distributions} for every $i\in[n]$ and $t\in[0,1]$ by
%\begin{align}
%{v}_{{i},\alpha}(\omega) := \frac{  {p}_{i}^\alpha(\omega)  {q}_{i}^{1-\alpha}(\omega) }{ \sum_{\bar{\omega} }  {p}_{i}^{\alpha}(\bar{\omega})  {q}_{i}^{1-\alpha}(\bar{\omega}) }, \quad \omega \in  \texttt{supp}(p_i).
%\end{align}
\begin{align}
{v}_{{i},\alpha}(\omega) := \frac{  {p}_{i}^\alpha(\omega)  {q}_{i}^{1-\alpha}(\omega) }{ \sum_{\bar{\omega} }  {p}_{i}^{\alpha}(\bar{\omega})  {q}_{i}^{1-\alpha}(\bar{\omega}) }, \quad \bar{\omega} \in  \texttt{supp}(p_i).
\end{align}
Since $p^n$ and $q^n$ share the same support, the above derivatives are well-defined. Further, it is not hard to verify that %Eqs.~\eqref{eq:zero_derivative}, \eqref{eq:first_derivative}, and \eqref{eq:second_derivatie} ensure that 
\begin{align} %\label{eq:sym}
\Lambda_{0,i}(\alpha) &= \Lambda_{1,i} (1-\alpha), \quad
\Lambda_{0,i}'(\alpha) = -\Lambda_{1,i}' (1-\alpha),\quad
\Lambda_{0,i}''(\alpha) = \Lambda_{1,i}'' (1-\alpha) \label{eq:sym0}.
\end{align}

This lemma closely follows Ref.~\cite[Lemma 9]{AW14}; however, the major difference is that we prove the claim using $\phi_n(r|\rho^n\|\sigma^n)$ in Eq.~\eqref{eq:exponent_HT} instead of the discrimination function: $\min\left\{{D}\left(\tau\|\rho\right): {D}\left(\tau\|\sigma\right) \leq r\right\}$ in Eq.~\eqref{eq:Primal}. This expression is crucial to obtaining the sphere-packing bound in Theorem \ref{theo:refined} in the strong from, cf.~Eq.~\eqref{eq:sp}, instead of the weak form, cf.~Eq.~\eqref{eq:sp2}.

%\begin{proof}[Proof of Lemma~\ref{lemm:regularity}]
%~\\

%\begin{enumerate}
%\item[(\ref{lemm:regularity}-\ref{regularity-a})]
%\noindent\textit{Proof of Lemma~\ref{lemm:regularity}-\ref{regularity-a}.}
\begin{proof}[Proof of Lemma~\ref{lemm:regularity}-\ref{regularity-a}]
	We will prove this statement by contradiction. Let $\alpha\in[0,1]$, Assuming that
	${\Lambda}''_{0,n}(\alpha) =  0$, implies  $\Lambda''_{0,{i}} (\alpha) =0$, $ \forall i\in[n]$.
	Recall from Eq.~(\ref{eq:second_derivatie})
	\begin{equation}
	0 = \Lambda''_{0,{i}} (\alpha) = \Var_{v_{{i},\alpha}} \left[ \log \frac{ {p}_{i}}{ {q}_{i}}  \right],
	\end{equation}
	which is equivalent to 
	\begin{align} \label{eq:sharp3}
	{p}_i (\omega) =  {q}_i (\omega) \cdot \mathrm{e}^{ \Lambda'_{0,i}(\alpha)},  \quad \forall \omega \in \texttt{supp}(p_i).
	%	{p}_i  =  {q}_i  \cdot \mathrm{e}^{\Lambda'_{0,i}(t)}.
	\end{align}
	Summing both sides of Eq.~\eqref{eq:sharp3} over $\omega\in \texttt{supp}(p_i)$ gives
	\begin{align} \label{eq:regularity12}
	1 = \Tr\left[ p_i^0 q_i \right]  \mathrm{e}^{ \Lambda'_{0,i}(\alpha)}.
	\end{align}
	Then, Eqs.~\eqref{eq:sharp3} and \eqref{eq:regularity12} imply that
	\begin{align}
	\phi_n ( r ) &= \sup_{0<\alpha\leq 1} \frac{\alpha-1}{\alpha} \left( r - \frac1n D_\alpha\left( p^n \|q^n \right) \right) \\
	&= \sup_{0<\alpha\leq 1} \frac{\alpha-1}{\alpha} \left(r+ \frac1n \sum_{i\in[n]}\log \Tr\left[ p_i^0 q_i \right]    \right) \\
	&= 0, \label{eq:sharp4}
	\end{align}
	where Eq.~\eqref{eq:sharp4} follows since  $r> \frac1n D_0(p^n\|q^n) $ by assumption. However, this contradicts with the assumption $\phi_n(r)>0$. Hence, we conclude item \ref{regularity-a}.
\end{proof}

%\item[(\ref{lemm:regularity}-\ref{regularity-b})]
%\noindent\textit{Proof of Lemma~\ref{lemm:regularity}-\ref{regularity-b}.}
\begin{proof}[Proof of Lemma~\ref{lemm:regularity}-\ref{regularity-b}]	
	In the following, we use the substitution $s = \frac{1-\alpha}{\alpha}$ for convenience.
	
	Observe that $ E_0^{(2)} (s, P_{\mathbf{x}^n}) - sr$ in Eq.~\eqref{eq:objective} is strictly concave in $s\in\mathbb{R}_{\geq 0}$ since
	%	\begin{align} 
	%	\frac{ \partial^2E_0^{(2)} (s, P_{\mathbf{x}^n}) }{\partial s^2} &= -\frac{1}{(1+s)^3} {\Lambda}''_{0,P_{\mathbf{x}^n}} \left( \frac{s}{1+s} \right) < 0, \label{eq:regularity5}
	%	\end{align}
	\begin{align} 
	\frac{ \partial^2E_0^{(2)} (s, P_{\mathbf{x}^n}) }{\partial s^2} &= -\frac{1}{(1+s)^3} {\Lambda}''_{0,P_{\mathbf{x}^n}} \left( \frac{s}{1+s} \right) < 0, \label{eq:regularity5}
	\end{align}	
	owing to Eqs.~\eqref{eq:regularity8}, \eqref{eq:second_derivatie}, and item~\ref{regularity-a}.
	Moreover, $s=0$ cannot be an optimum in Eq.~\eqref{eq:objective}; otherwise, it will violate the assumption $\phi_n(r) \geq 0$. Thus a unique maximizer $s^\star  \in \mathbb{R}_{> 0}$ exists such that
	\begin{align}
	{\phi}_n (r) &= -s^\star r + E_0^{(2)} (s^\star, P_{\mathbf{x}^n})\label{eq:regularity14} \\
	&= \frac{1}{1+s^\star} {\Lambda}'_{0,P_{\mathbf{x}^n}}\left( \frac{1}{1+s^\star} \right) - {\Lambda}_{0,P_{\mathbf{x}^n}} \left( \frac{1}{1+s^\star} \right) . \label{eq:regularity9}
	\end{align}
	where in the second equality we use Eq.~(\ref{eq:regularity8}) and 
	\begin{align}
	r &= \left.\frac{ \partial E_0^{(2)} (s, P_{\mathbf{x}^n}) }{\partial s}\right|_{s = s^\star} \label{eq:regularity7}\\
	&= - \frac{1}{1+s^\star} {\Lambda}'_{0,P_{\mathbf{x}^n}}  \left( \frac{1}{1+s^\star} \right)-{\Lambda}_{0,P_{\mathbf{x}^n}} \left( \frac{1}{1+s^\star} \right) .\label{eq:regularity2} 
	\end{align}
	Comparing Eq.~\eqref{eq:regularity9} with \eqref{eq:regularity2} gives
	\begin{align}
	{\Lambda}'_{0,P_{\mathbf{x}^n}} \left( \frac{1}{1+s^\star} \right) = {\phi}_n(r) - r, \label{eq:regularity11}
	\end{align}
	which is exactly the optimum solution to $\Lambda_{0,P_{\mathbf{x}^n} }^*(z)$ in Eq.~\eqref{eq:FL2} with 
	\begin{align}
	\alpha^\star &= \frac{1}{1+s^\star}\in(0,1), \label{eq:opt_t}\\
	z &=  {\phi}_n(r) - r.
	\end{align}
	Hence, we obtain
	\begin{align}
	\Lambda_{0,P_{\mathbf{x}^n} }^*\left(  {\phi}_n(r) - r \right)&= \frac{1}{1+s^\star} z - {\Lambda}_{0,P_{\mathbf{x}^n}}\left( \frac{1}{1+s^\star} \right) \\
	&= \frac{1}{1+s^\star}\left(  {\phi}_n(r) - r \right) - {\Lambda}_{0,P_{\mathbf{x}^n} }\left( \frac{1}{1+s^\star} \right)\\
	&= \frac{s^\star}{1+s^\star} {\Lambda}'_{0,P_{\mathbf{x}^n} } \left( \frac{1}{1+s^\star} \right) - {\Lambda}_{0,P_{\mathbf{x}^n} }\left( \frac{s^\star}{1+s^\star} \right)\\
	&=  {\phi}_n(r),
	\end{align}
	where Eqs.~\eqref{eq:regularity11} and \eqref{eq:regularity9} are used in the third and last equalities.
\end{proof}	

%\item[(\ref{lemm:regularity}-\ref{regularity-c})] 
%\noindent\textit{Proof of Lemma~\ref{lemm:regularity}-\ref{regularity-c}.}
\begin{proof}[Proof of Lemma~\ref{lemm:regularity}-\ref{regularity-c}]
	This proof follows from similar arguments in item \ref{regularity-b} and Eq.~\eqref{eq:sym0}.
	Eqs.~\eqref{eq:regularity11} and \eqref{eq:sym0} lead to
	\begin{align} \label{eq:regularity30}
	\Lambda_{1,P_{\mathbf{x}^n}}'\left( \frac{s^\star}{1+s^\star}  \right) = r - \phi_n(r),
	\end{align}
	which satisfies the optimum solution to $\Lambda_{1,P_{\mathbf{x}^n}}(z)$ in Eq.~\eqref{eq:FL2} with $\alpha^\star = \frac{s^\star}{1+s^\star} \in (0,1)$ and $z = r - \phi_n(r)$.
	Then, 
	\begin{align}
	\Lambda_{1,P_{\mathbf{x}^n} }^*\left(  r - {\phi}_n(r)  \right)&= \alpha^\star z - {\Lambda}_{1,P_{\mathbf{x}^n}}(\alpha^\star) \\
	&= \frac{s^\star}{1+s^\star}\left(  r - {\phi}_n(r) \right) - {\Lambda}_{1,P_{\mathbf{x}^n} }\left( \frac{s^\star}{1+s^\star} \right)\\
	&= \frac{1}{1+s^\star} {\Lambda}'_{1,P_{\mathbf{x}^n} } \left( \frac{s^\star}{1+s^\star} \right) - {\Lambda}_{1,P_{\mathbf{x}^n} }\left( \frac{s^\star}{1+s^\star} \right)\\
	&=  r,
	\end{align}
	where the third equality is due to Eq.~\eqref{eq:regularity30}, and the last equality follows from Eqs.~\eqref{eq:sym0} and \eqref{eq:regularity2}.
	
\end{proof}

%\item[(\ref{lemm:regularity}-\ref{regularity-d})]
%\noindent\textit{Proof of Lemma~\ref{lemm:regularity}-\ref{regularity-d}.}	
\begin{proof}[Proof of Lemma~\ref{lemm:regularity}-\ref{regularity-d}]
	The fact that a unique optimizer $\alpha^\star \in(0,1)$ exists such that ${\Lambda}'_{0,P_{\mathbf{x}^n}} \left( \alpha^\star \right) = {\phi}_n(r) - r$ follows directly from Eqs.~\eqref{eq:regularity11}, \eqref{eq:opt_t} and ${\Lambda}''_{0,P_{\mathbf{x}^n} }(\alpha)>0$, for $\alpha\in[0,1]$.
	
	Moreover, Eqs.~\eqref{eq:regularity14}, \eqref{eq:regularity7}, and \eqref{eq:regularity5} yield
	\begin{align}
	-\frac{ \partial \phi_n(r) }{ \partial r}  & = s^\star, \label{eq:regularity23} \\
	\frac{ \partial^2 \phi_n(r) }{ \partial r^2} &= -\frac{\partial s^\star }{\partial r} = \left.- \left( \frac{\partial^2 E_0^{(2)} (s, P_{\mathbf{x}^n}) }{\partial s^2} \right)^{-1}\right|_{s = s^\star} = \frac{(1+s^\star)^3}{\Lambda_{0,P_{\mathbf{x}^n}}\left( \frac{1}{1+s^\star} \right)},
	\end{align}
	which completes the claim in item \ref{regularity-d}.		
	%\end{enumerate}
\end{proof}
%\qed

\section{A Tight Large Deviation Inequality} \label{app:tight}
Let $\left(Z_i\right)_{i=1}^n$ be a sequence of independent, real-valued random variables with probability measures $\left(\mu_i\right)_{i=1}^n$. Let $\Lambda_i(\alpha) := \log \mathbb{E}\left[\mathrm{e}^{(1-\alpha) Z_i}\right]$ and define the Legendre-Fenchel transform of $\frac1n\sum_{i=1}^n \Lambda_i(\cdot)$ to be:
\begin{align}
\Lambda_n^*(z) := \sup_{\alpha \in \mathbb{R}} \left\{ z(1-\alpha) - \frac1n \sum_{i=1}^n \Lambda_i(\alpha) \right\}, \quad \forall z\in\mathbb{R}.
\end{align}
The so-called \emph{large deviation inequality} means estimate the probability of sum of independent random variables that deviate from the mean by a linear amount, i.e.~$\Pr\left\{ \frac1n\sum_{i=1}^n Z_i \geq z   \right\}$. 
Cram\'er's theorem \cite[Chapter 2]{DZ98} states that 
\begin{align} \label{eq:Cramer}
\Pr\left\{ \frac1n\sum_{i=1}^n Z_i \geq z   \right\} = \mathrm{e}^{-n \Lambda_n^*(z) + o(n)},
\end{align}
where $\mathrm{e}^{ o(n) }$ is some subexponential prefactor.
The upper bound of Eq.~\eqref{eq:Cramer} can be simply shown by the exponent Chebyshev inequality:
\begin{align}
\Pr\left\{ \frac1n\sum_{i=1}^n Z_i \geq z   \right\} \leq \mathrm{e}^{-n \Lambda_n^*(z)},
\end{align}
while the lower bound is more involved.
The Bahadur-Ranga Rao's concentration inequality below then provides a sharp lower bound and show that the subexponential prefactor can be improved to $O(\frac{1}{\sqrt{n}})$.

Let $z\in\mathbb{R}$ be such that $\exists \alpha^\star\in (0,1)$ and
\begin{align}
%&z + \frac1n\sum_{i=1}^n \Lambda_i'(\alpha^\star)=0; \\
&\Lambda_n^*(z) = z(1-\alpha^\star) - \frac1n\sum_{i=1}^n \Lambda_i (\alpha^\star). \label{eq:Rao1}
\end{align}
Define the probability measure $\tilde{\mu}_i$ via 
\begin{align}
\frac{\mathrm{d}\tilde{\mu}_i}{\mathrm{d}\mu_i} (z_i) := \mathrm{e}^{ z_i (1-\alpha^\star) - \Lambda_i(\alpha^\star)},
\end{align}
and let $\bar{Z}_i:= Z_i - \mathbb{E}_{\tilde{\mu}_i}\left[Z_i\right]$. Furthermore, define  $m_{2,n} := \frac1n\sum_{i=1}^n \Var_{\tilde{\mu}_i}\left[\bar{Z}_i\right]$, $m_{3,n} := \frac1n \sum_{i=1}^n \mathbb{E}_{\tilde{\mu}_i}\left[\left|\bar{Z}_i\right|^3\right]$,	and $K_n(\alpha^\star) := \frac{15\sqrt{2\pi}m_{3,n}}{m_{2,n}}$. With these definitions, we can now state the following sharp concentration inequality for $\frac1n\sum_{i=1}^n Z_i$:
\begin{theo}[Bahadur-Ranga Rao's Concentration Inequality {\cite[Proposition 5]{AW14}}, \cite{BR60}] \label{theo:Rao} % {\cite[Theorem 3.7.4]{DZ98}}, \cite{BR60}
	Provided Eq.~\eqref{eq:Rao1} holds,
	then
	\begin{align}
	\Pr\left\{ \frac1n\sum_{i=1}^n Z_i \geq z   \right\}
	\geq \mathrm{e}^{-n\Lambda_n^*(z)} \frac{ \mathrm{e}^{-K_n(\alpha^\star)} }{ 2\sqrt{2 n \pi m_{2,n} }  }\left( 1 - \frac{ 1+ (1+K_n(\alpha^\star))^2}{ 2 \sqrt{2 n m_{2,n}}} \right).
	\end{align}
\end{theo}

	\section{Uniform Continuity} \label{app:UC}
	The goal of this section is to present various uniform continuity properties, which play a significant role in proving finite blocklength converse bounds (see Propositions~\ref{prop:Chebyshev} and \ref{prop:sharp}).
	Let $r\in(C_{0,\mathscr{W}}, C_{1,\mathscr{W}})$ throughout this section. 
	For any $P\in\mathscr{P}_r(\mathcal{X}):= \left\{ P\in\mathscr{P}(\mathcal{X}): E_\text{sp}^{(2)}(r,P) >0 \right\}$, we denote by  $(\alpha_{r,P}^\star,\sigma_{r,P}^\star) \in (0,1) \times \mathcal{S(H)}$  the unique saddle-point of $F(r,P)$ (see Proposition~\ref{prop:saddle}-\ref{saddle-c}).
	For $P\notin\mathscr{P}_r(\mathcal{X})$, note that $(1,\sigma)$  is a saddle-point of $F(r,P)$ for all $\sigma\in\mathcal{S(H)}$. We thus choose $(1,P\mathscr{W})$ to be the saddle-point of $F(r,P)$ for $P\notin\mathscr{P}_r(\mathcal{X})$, subsequently.
	Define
	\begin{align}
	B_r(P,\mathscr{W}) &:= \sum_{x\in\mathcal{X}} P(x) \mathbb{E}_{ v_{x,\alpha_{r,P}^\star}} \left[  \log \frac{p_x}{q_x} \right]; \label{eq:UC_1}\\
	V_r(P,\mathscr{W}) &:= \sum_{x\in\mathcal{X}} P(x) \mathbb{E}_{ v_{x,\alpha_{r,P}^\star}} \left[  \left|\log \frac{p_x}{q_x} - \mathbb{E}_{ v_{ x,\alpha_{r,P}^\star }} \left[  \log \frac{p_x}{q_x} \right]\right|^2 \right]; %- B_R(P,\mathscr{W})^2 \\
	\label{eq:UC_2} \\
	T_r(P,\mathscr{W}) &:= \sum_{x\in\mathcal{X}} P(x) \mathbb{E}_{ v_{x,\alpha_{r,P}^\star}} \left[  \left|\log \frac{p_x}{q_x} - \mathbb{E}_{ v_{x,\alpha_{r,P}^\star}} \left[  \log \frac{p_x}{q_x} \right] \right|^3 \right], \label{eq:UC_3}
	\end{align}
	where $(p_x, q_x)$ is the Nussbaum-Szko{\l}a distribution \cite{NS09} of $(W_x, \sigma_{r,P}^\star)$, and the \emph{tilted distribution} is 
	\begin{align}
	v_{x,\alpha}(i,j) := \frac{ p_x^\alpha(i,j) q_x^{1-\alpha}(i,j) }{ \sum_{\imath,\jmath} p_x^\alpha(\imath,\jmath) q_x^{1-\alpha}(\imath,\jmath)  }, \quad \alpha\in[0,1].
	%\begin{cases}\frac{ p_x^\alpha(i,j) q_x^{1-\alpha}(i,j) }{ \sum_{\imath,\jmath} p_x^\alpha(\imath,\jmath) q_x^{1-\alpha}(\imath,\jmath)  } 
	%%= p_x^\alpha(i,j) q_x^{1-\alpha}(i,j) Q_\alpha^{-1}(W_x\|\sigma_{R,P}^\star)
	%, & \alpha\in[0,1) \\
	%p_x(i,j), &\alpha = 1
	%\end{cases}.
	\end{align}
	%\begin{align}
	%v_{\alpha,x}(i,j) := \frac{ p_x^\alpha(i,j) q_x^{1-\alpha}(i,j) }{ \sum_{\imath,\jmath} p_x^\alpha(\imath,\jmath) q_x^{1-\alpha}(\imath,\jmath)  }, \quad \alpha\in[0,1].
	%\end{align}
	
	Inspired by Ref.~\cite[Lemma 62]{PPV10}, we show the following continuity property, which are crucial for establishing the large deviation bounds in finite blocklength regime.
	\begin{prop}[Uniform Continuity] \label{prop:UC}
		Fix $R\in(C_{0,\mathscr{W}}, C_{1,\mathscr{W}})$.
		For every $\underline{R} \in (C_{0,\mathscr{W}}, R]$, $B_r(P,\mathscr{W})$, $V_r(P,\mathscr{W})$, and $T_r(P,\mathscr{W})$ are jointly continuous functions of $(r,P)$ on $[\underline{R},R]\times \mathscr{P}(\mathcal{X})$.
	\end{prop}
	\begin{remark}
		When $R \geq  I_1^{(2)}(P,\mathscr{W}) = I(P,\mathscr{W})$, Proposition~\ref{prop:Esp}-\ref{Esp-a} implies that $(\alpha_{R,P}^\star, \sigma_{R,P}^\star) = (1, P\mathscr{W})$.
		In this case, $V_R(P,\mathscr{W})$ equals to the \emph{information variance} $V(P)$ introduced by Tomamichel and Tan \cite[Appendix B.3]{TV15}, and so does $T_R(P,\mathscr{W}) = T(P)$.
		The established Proposition~\ref{prop:UC} covers the special case of the continuities of $V(P)$ and $T(P)$ in $P$, and provides a rigorous proof for \cite[Lemma 29]{TV15}.
		We emphasize that such a continuity property is a critical step to ensure that the third-order term in the asymptotic expansion of coding rate (see e.g.~\cite{PPV10}, \cite{TV15}) independent of all codeword sequences.
	\end{remark}
	
	\begin{proof}[Proof of Proposition~\ref{prop:UC}]
		Inspecting the definitions given in Eqs.~\eqref{eq:UC_1}, \eqref{eq:UC_2}, and \eqref{eq:UC_3}.
		it is not hard to see that the quantities $B_r(P,\mathscr{W})$, $V_r(P,\mathscr{W})$, and $T_r(P,\mathscr{W})$ are sums of finitely many terms. We thus prove that each term is a continuous function in $(r,P)$.
		In the following, we first show the continuity of $B_r(P,\mathscr{W})$. The proof for $V_r(P,\mathscr{W})$ and $T_r(P,\mathscr{W})$ follow similarly.

		Fix an arbitrary $x\in\mathcal{X}$ onwards.
		Let $(R_k, P_k)_{k\in\mathbb{N}}$ be an arbitrary sequence such that $(R_k, P_k) \in [\underline{R},R]\times \mathscr{P}(\mathcal{X})$, and $\lim_{k\to+\infty} (R_k, P_k) = (R_0, P_0) \in [\underline{R},R] \times \mathscr{P}(\mathcal{X})$. 
		To ease the burden of notation, we let
		\begin{align}
		\alpha_k := \alpha_{R_k,P_k}^\star, \text{ and }
		\sigma_k := \sigma_{R_k,P_k}^\star, \quad \forall k \in \mathbb{N}.
		\end{align}
		Note that the joint continuity proved in Proposition~\ref{prop:saddle}-\ref{saddle-d} guarantees that
		\begin{align}
		\begin{split} \label{eq:UC_4}
		&\lim_{k\to+\infty} \alpha_k = \alpha_{R_0, P_0}^\star =: \alpha_0, \\
		&\lim_{k\to+\infty} \sigma_k = \sigma_{R_0,  P_0}^\star =: \sigma_0.
		\end{split}
		\end{align}
		
		Given the eigenvalue decompositions $W_x = \sum_{i} \lambda_i |e_i\rangle\langle e_i |$ and $\sigma_k = \sum_{j} \mu_j(\sigma_k) |f_j^k\rangle\langle f_j^k |$, Nussbaum-Szko{\l}a distributions are $p_x(i,j) = \lambda_i |\langle e_i|f_j^k\rangle|^2$ and $q_x(i,j) = \mu_j(\sigma_{R_k,P_k}^\star) |\langle e_i|f_j^k\rangle|^2$. Here, we write $f_j^k$ and $\mu_j (\sigma_k)$ to emphasize the dependence on $P_k$.
		To prove the continuity of $B_r(P,\mathscr{W})$, it suffices to show
		\begin{align} 
		\begin{split} \label{eq:UC1}
		&P_k(x)  \frac{ \lambda_i^{\alpha_k} \mu_j^{1-\alpha_k}(\sigma_k)  |\langle e_i|f_j^k \rangle|^2 }{ \sum_{\imath,\jmath}\lambda_{\imath}^{\alpha_k} \mu_{\jmath}^{1-\alpha_k}(\sigma_k)  |\langle e_{\imath}|f_{\jmath}^k \rangle|^2 }   \log \frac{ \lambda_i }{ \mu_j (\sigma_k) }  \\
		\longrightarrow \;&P_0(x)  \frac{ \lambda_i^{\alpha_0} \mu_j^{1-\alpha_0}(\sigma_0)  |\langle e_i|f_j^0 \rangle|^2 }{ \sum_{\imath,\jmath} \lambda_{\imath}^{\alpha_0} \mu_{\jmath}^{1-\alpha_0}(\sigma_k)  |\langle e_{\imath}|f_{\jmath}^0 \rangle|^2 }   \log \frac{ \lambda_i }{ \mu_j (\sigma_0)}.
		\end{split}
		\end{align}
		
		In the following, we first exclude some trivial cases.	
		If $\lambda_i=0$, then the convergence is obvious (recalling that the power function is only acting on the support). We assume $\lambda_i>0$ onwards.
		If $P_k(x) > 0$ only for finite number of $k$, then the convergence in Eq.~\eqref{eq:UC1} is also trivial.
		We may assume $P_k(x) >0$ for all $k\in\mathbb{N}$ (switching to a subsequence if necessary).
		Further, Proposition~\ref{prop:saddle}-\ref{saddle-b} implies that $W_x \ll \sigma_k$ for all $k\in\mathbb{N}$.
		We have $\lambda_i |\langle e_i|f_j^k \rangle|^2=0$ whenever $\mu_j (\sigma_k) |\langle e_i|f_j^k \rangle|^2=0$ by the absolute continuity, which in turn implies the convergence of Eq.~\eqref{eq:UC1}.
		We may assume $\mu_j(\sigma_k)|\langle e_i|f_j^k \rangle|^2>0$ for all $k\in\mathbb{N}$.

		With the above assumptions, we study two cases: $P_0(x) = 0$ or not, separately.
		If $P_0(x) > 0$, then $W_x \ll \sigma_0$.
		The absolute continuity again implies that $\mu_j(\sigma_0)>0$ by the previous argument.
		Hence, we can deduce that $\mu_j (\sigma_k)$  is bounded away from zero. Using the continuity given in Eqs.~\eqref{eq:UC_4} and the continuity of logarithm,  $ \log \lambda_i/\mu_j (\sigma_k)$ tends to $\log \lambda_i/ \mu_j (\sigma_0)$, which shows the convergence in Eq.~\eqref{eq:UC1}.
		%	We may consider the non-trivial case---$P_0(x) = 0$. 
		
		It remains to show the case of $P_0(x) = 0$. 
		Observe that the convergence in Eq.~\eqref{eq:UC1} holds when $\mu_j(\sigma_k) \not\rightarrow 0$.
		We thus consider the circumstance that $\mu_j(\sigma_k) \rightarrow 0$.
		To achieve our goal, we will show that the log-likelihood ratio $\log \frac{\lambda_i}{\mu_j(\sigma_k)}$ does not diverge too fast.
		
 		In what follows we inspect the eigenvalue $\mu_j(\sigma_k)$.
		The saddle-point property given in Proposition~\ref{prop:saddle}-\ref{saddle-a} and Proposition~\ref{prop:I2}-\ref{I2-Augustin_mean} indicate that $\sigma_k$ must satisfy
		\begin{align} \label{eq:UC2}
		\sigma_k = { \left( \sum_{\bar{x}\in\mathcal{X}} P_k( \bar{x} ) \frac{ W_{\bar{x}}^{\alpha_k} }{ \Tr\left[ W_{\bar{x}}^{\alpha_k} (\sigma_k)^{1-\alpha_k} \right] }  \right)^{\frac{1}{\alpha_k}} }.
		\end{align}
		Further, since $\alpha_{r,P}^\star \in (0,1]$ for all $(r,P)\in (C_{0,\mathscr{W}} , \infty]\times \mathscr{P}(\mathcal{X})$, the  continuity of $P\mapsto \alpha_{r,P}^\star$ given in Proposition~\ref{prop:saddle}-\ref{saddle-d} and the compactness of $[]\mathscr{P}(\mathcal{X})$  imply that 
		\begin{align}
		{\alpha}_{\underline{R}} := \min_{P\in\mathscr{P}(\mathcal{X})} \alpha_{\underline{R},P}^\star > 0 \label{eq:UC_5}
		\end{align}
		By the convexity of $r\mapsto E_\text{sp}^{(2)}(r,P)$ and  Proposition~\ref{prop:Esp}-\ref{Esp-c}, we have
		$
		\alpha_k \in [\alpha_{\underline{R}},1] %, \quad \forall k\in\{0\} \cup \mathbb{N}.
		$ for all $k\in\{0\} \cup \mathbb{N}$.
		
		Therefore, 
		\begin{align}
		\mu_j (\sigma_k)
		& = \langle f_j^k |\sigma_k | f_j^k \rangle \\
		&\geq   \left(     \sum_{{\bar{x}}} P_k({\bar{x}}) \frac{  \langle f_j^k | W_{\bar{x}}^{\alpha_k}| f_j^k \rangle }{ \Tr\left[ W_{\bar{x}}^{\alpha_k} (\sigma_k)^{1-\alpha_k} \right] }    \right)^{\frac{1}{\alpha_k}} \label{eq:convex1} \\
		&\geq  \left(     P_k(x)   \frac{  \langle f_j^k | W_{{x}}^{\alpha_k}| f_j^k \rangle }{ \Tr\left[ W_{{x}}^{\alpha_k} (\sigma_k)^{1-\alpha_k} \right] }  \right)^{\frac{1}{\alpha_k}}  \\
		&= \left(     P_k(x) \frac{\sum_{\imath}  \lambda_{\imath}^{\alpha_k} |\langle e_{\imath}| f_j^k \rangle|^2}{  \sum_{\imath,\jmath}\lambda_{\imath}^{\alpha_k} \mu_{\jmath}^{1-\alpha_k}(\sigma_k)  |\langle e_{\imath}|f_{\jmath}^k \rangle|^2  }  \right)^{\frac{1}{\alpha_k}} \\
		&\geq \left(     P_k(x) \frac{  \lambda_{i}^{\alpha_k} |\langle e_{i}| f_j^k \rangle|^2}{  \sum_{\imath,\jmath}\lambda_{\imath}^{\alpha_k} \mu_{\jmath}^{1-\alpha_k}(\sigma_k)  |\langle e_{\imath}|f_{\jmath}^k \rangle|^2  }  \right)^{\frac{1}{\alpha_k}} \\
		&\geq \lambda_i  c_k^{\frac{1}{ {\alpha}_{\underline{R}} }  },
		\label{eq:UC5}
		\end{align}
		where inequality \eqref{eq:convex1} follows from Jensen's inequality\footnote{For every Hermitian matrix $B$ with eigenvalue decomposition $B = \sum_i \lambda_i |u_i\rangle\langle u_i|$, it follows that for every unit vector $|v\rangle$ and positive convex function $f$,
		$f(\langle v | B |v \rangle) \leq f( \sum_i \lambda_i |\langle v | u_i \rangle|^2 )
		\leq \sum_i f(\lambda_i) |\langle v|u_i	\rangle |^2 = \langle v | f(B) |v \rangle$.
}
		for the convex function $(\cdot)^{^{\frac{1}{\alpha_k}}}$ with $\frac{1}{\alpha_k} \in (0,1]$, and 
		we denote by $c_k:= P_k(x) \frac{  |\langle e_{i}| f_j^k \rangle|^2}{  \sum_{\imath,\jmath}\lambda_{\imath}^{\alpha_k} \mu_{\jmath}^{1-\alpha_k}(\sigma_k)  |\langle e_{\imath}|f_{\jmath}^k \rangle|^2 }$ in the last line.
		Note that $\mu_j (\sigma_k)\leq 1$. 
		%	Since $0\geq \log \mu_j (\sigma_k) \geq \log P_k^{\frac{1}{\alpha_k}}(x) \Upsilon^{-1} \tilde{\lambda}_{\min} \left(        W_x \right)$, we have
		Eq.~\eqref{eq:UC5} then implies
		\begin{align}
		\left| \log \frac{\lambda_i}{\mu_j (\sigma_k)} \right|
		&\leq \log \frac{1}{\lambda_i} - \log \left( \lambda_i  c_k^{\frac{1}{ {\alpha}_{\underline{R}} }  } \right) \\
		&= 2 \log \frac{1}{\lambda_i} - \frac{1}{ {\alpha}_{\underline{R}} } \log c_k  . \label{eq:UC6}
		\end{align}
		Since we assume $\mu_j(\sigma_k) \rightarrow 0$ and $\lambda_i>0$, Eq.~\eqref{eq:UC5} guarantees that 
		\begin{align}
		 c_k \to 0. \label{eq:UC_6}
		\end{align}

		Using Eqs.~\eqref{eq:UC_5},  \eqref{eq:UC6}, \eqref{eq:UC_6}, and the fact that $\lambda_i^{\alpha_k} \mu_j^{1-\alpha_k}(\sigma_k) \in [0,1]$ for all $k\in\mathbb{N}$, we are able show that the left-hand side of Eq.~\eqref{eq:UC1} converges to $0$:
		\begin{align}
		&P_k(x) \frac{ \lambda_i^{\alpha_k} \mu_j^{1-\alpha_k}(\sigma_k)  |\langle e_i|f_j^k \rangle|^2 }{ \sum_{\imath,\jmath}\lambda_{\imath}^{\alpha_k} \mu_{\jmath}^{1-\alpha_k}(\sigma_k)  |\langle e_{\imath}|f_{\jmath}^k \rangle|^2 } \left| \log \frac{ \lambda_i }{ \mu_j (\sigma_k) } \right| \\
		&\leq  c_k  \left| \log \frac{ \lambda_i }{ \mu_j (\sigma_k) } \right| \\
		&\leq 2 c_k \log \frac{1}{\lambda_i } - \frac{1}{\alpha_{\underline{R}}}  c_k \log  c_k  \\
		&\to 0,
		\end{align}
		which proves the joint continuity of $(r,P) \mapsto B_r(P,\mathscr{W})$.
		
		Next, we show the continuity of $V_r(P,\mathscr{W})$ and $T_r(P,\mathscr{W})$. 
		Denote by $B_r(W_x\|\sigma_k) := \mathbb{E}_{ v_{x, \alpha_k}} \left[  \log p_x/q_x \right] $ for convenience. For $P_0(x) >0$, $\mu_j (\sigma_k)$ is bounded away from zero. Then, $ \log \lambda_i/\mu_j (\sigma_k)$ tends to $\log \lambda_i/ \mu_j (\sigma_0)$, and it is not hard to see that $B_{R_k}(W_x\| \sigma_k) \to B_{R_k}(W_x\|\sigma_0)$.
		It suffices to prove the convergence when $P_k(x) \to 0$ and $\mu_j(\sigma_k) \to 0$ as mentioned before.
		Eq.~\eqref{eq:UC6} immediately implies that
		\begin{align}
		B_{R_k}(W_x\|\sigma_k) &= \sum_{i,j}
		\frac{ \lambda_i^{\alpha_k} \mu_j^{1-\alpha_k}(\sigma_k)  |\langle e_i|f_j^k \rangle|^2 }{ \sum_{\imath,\jmath}\lambda_{\imath}^{\alpha_k} \mu_{\jmath}^{1-\alpha_k}(\sigma_k)  |\langle e_{\imath}|f_{\jmath}^k \rangle|^2 } \log \frac{ \lambda_i }{ \mu_j (\sigma_k) } \\
		&\leq 	  2 \log \frac{1}{\lambda_i} - \frac{1}{\alpha_{\underline{R}}} \log c_k . \label{eq:UC9}
		\end{align}
		Using the inequality $|a+b|^2 \leq 2(|a|^2+|b|^2)$, we obtain
		\begin{align}
		&P_k(x)  \frac{ \lambda_i^{\alpha_k} \mu_j^{1-\alpha_k}(\sigma_k)  |\langle e_i|f_j^k \rangle|^2 }{ \sum_{\imath,\jmath}\lambda_{\imath}^{\alpha_k} \mu_{\jmath}^{1-\alpha_k}(\sigma_k)  |\langle e_{\imath}|f_{\jmath}^k \rangle|^2 } \left| \log \frac{ \lambda_i }{ \mu_j(\sigma_k)} - B_{R_k}(W_x\|\sigma_k)\right|^2  \notag \\
		&\leq 2 c_k \left| \log \frac{ \lambda_i }{ \mu_j(\sigma_k)} \right|^2  + 2 c_k B_{R_k}^2(W_x\|\sigma_k). 
		 \label{eq:UC10}
		\end{align}
		Combining Eqs.~\eqref{eq:UC6}, \eqref{eq:UC_6}, \eqref{eq:UC9}, and \eqref{eq:UC10}, we prove the continuity of $V_r(P,\mathscr{W})$.
		
		Similarly, using the inequality $|a+b|^3 \leq 4(|a|^3+|b|^3) $ gives
		\begin{align}
		&P_k(x) \frac{ \lambda_i^{\alpha_k} \mu_j^{1-\alpha_k}(\sigma_k)  |\langle e_i|f_j^k \rangle|^2 }{ \sum_{\imath,\jmath}\lambda_{\imath}^{\alpha_k} \mu_{\jmath}^{1-\alpha_k}(\sigma_k)  |\langle e_{\imath}|f_{\jmath}^k \rangle|^2 } \left| \log \frac{ \lambda_i }{ \mu_j(\sigma_k)} - B_{\alpha_k}(W_x\|\sigma_k)\right|^3  \notag \\
		&\leq 4 c_k\left| \log \frac{ \lambda_i }{ \mu_j(\sigma_k)} \right|^3 + 4 c_k B_{R_k}^3(W_x\|\sigma_k). \label{eq:UC11}
		\end{align}
		Further, Eq.~\eqref{eq:UC6} implies
		\begin{align}
		\left| \log \frac{ \lambda_i }{ \mu_j(\sigma_k)}\right|^3  
		\leq -4\log^3 \lambda_i - \frac{4}{\alpha_{\underline{R}}} \log^3 c_k . \label{eq:UC12}
		\end{align}
		Combining Eqs.~\eqref{eq:UC_6}, \eqref{eq:UC9}, \eqref{eq:UC11}, and \eqref{eq:UC12}, proves the continuity of $T_r(P,\mathscr{W})$.
	\end{proof}

\section{Proof of Proposition \ref{prop:I2}} \label{app:I2}
\begin{prop2}[Properties of order $\alpha$ Augustin Information and Radius]
	Given any classical-quantum channel $\mathscr{W}:\mathcal{X}\to \mathcal{S(H)}$ with $|\mathcal{X}|< \infty$, the following hold:
	\begin{enumerate}[(a)]
		\item\label{I2-mono_} For every $P\in\mathscr{P}(\mathcal{X})$, $\alpha \mapsto I_\alpha^{(2)}(P,\mathscr{W})$ is monotone increasing on $[0,1]$, and $I_\alpha^{(2)}(P,\mathscr{W}) \leq \log |\mathcal{X}|$ for all $\alpha\in[0,1]$.
		
		\item\label{I2-Augustin_mean_} For every $(\alpha,P)\in(0,1]\times \mathscr{P}(\mathcal{X})$, there exists a unique $\sigma_{\alpha,P} \in \mathcal{S(H)}$, termed Augustin mean, such that
		\begin{align}
		I_\alpha^{(2)}(P,\mathscr{W}) =  D_\alpha\left(\mathscr{W}\| \sigma_{\alpha,P} | P\right),
		\end{align}
		and
		\begin{align} \label{eq:fixed-point0}
		\mathsf{T}_{\alpha,P}(\sigma) = \sigma \text{ and }
		\sigma \gg P\mathscr{W}
		\text{ if and only if } \sigma = \sigma_{\alpha,P},
		\end{align}	
		where the map $\mathsf{T}_{\alpha,P}:\mathcal{S}_{P,\mathscr{W}}(\mathcal{H}) \to \mathcal{S(H)}$ is defined as
		\begin{align} 
		\mathsf{T}_{\alpha,P}(\sigma) = \sum_{x\in\mathcal{X}} P(x) \frac{\sigma^{\frac{1-\alpha}{2}} W_x^\alpha \sigma^{\frac{1-\alpha}{2}} }{\Tr\left[W_x^\alpha \sigma^{1-\alpha}\right]}.
		\end{align}		
		
		\item\label{I2-conc_P_} For every $\alpha\in[0,1]$, the map $P\mapsto I_\alpha^{(2)}(P,\mathscr{W})$ is concave on $\mathscr{P}(\mathcal{X})$.
		
		\item\label{I2-conc_alpha_} For every $P\in\mathscr{P}(\mathcal{X})$, $\alpha \mapsto \frac{1-\alpha}{\alpha} I_\alpha^{(2)}(P,\mathscr{W})$ is concave on $(0,1]$.
		
		\item\label{I2-cont_alpha_} For every $P\in\mathscr{P}(\mathcal{X})$, $\alpha \mapsto I_\alpha^{(2)}(P,\mathscr{W})$ is continuous on $[0,1]$.
		
		\item\label{I2-cont_equi_} The family of functions $\{ I_\alpha^{(2)}(P,\mathscr{W})  \}_{ \alpha\in[0,1]  }$ is uniformly equicontinuous in $P\in\mathscr{P}(\mathcal{X})$.
		Moreover,
		The map $(\alpha, P) \mapsto I_\alpha^{(2)}(P,\mathscr{W}) \text{ is jointly continuous on } [0,1]\times \mathscr{P}(\mathcal{X})$. 
		
		\item\label{I2-cont_mean_} The map $(\alpha, P) \mapsto \sigma_{\alpha,P} \text{ is jointly continuous on } (0,1]\times \mathscr{P}(\mathcal{X})$. 
		
		\item\label{I2-cont_C_}
		The map $\alpha \mapsto C_{\alpha,\mathscr{W}}$ is continuous and monotone increasing on $[0,1]$. 
	\end{enumerate}
\end{prop2}

\begin{proof}[Proof of Proposition \ref{prop:I2}-\ref{I2-mono_}]
Recalling the definition of $I_\alpha^{(2)}$ given in Eq.~\eqref{eq:I2}. The statement immediately follows from Lemma~\ref{lemma:chaotic}-\ref{Da_mono_cont} (see also \cite[Lemma 4.6]{MO14b}) because the minimization over $\sigma\in\mathcal{S(H)}$ preserves the monotonicity. Hence, we have $I_\alpha^{(2)}(P,\mathscr{W}) \leq I_1(P,\mathscr{W}) \leq \log |\mathcal{X}|$, where the last inequality follows from the well-known upper bound for the Holevo quantity (see e.g.~\cite[Chapter 12]{NC09}).
\end{proof}

%\item[(\ref{prop:I2}-\ref{I2-Augustin_mean})] 
%\noindent\textit{Proof of Proposition \ref{prop:I2}-\ref{I2-Augustin_mean}.}
\begin{proof}[Proof of Proposition \ref{prop:I2}-\ref{I2-Augustin_mean_}]	
We first note that the infimum in Eq.~\eqref{eq:I2} can be attained. This can be verified by the following argument.
Lemma~\ref{lemma:chaotic}-\ref{Da_second_convex} shows that $D_\alpha$ is lower semi-continuous in its second argument. Hence, the linear combination, i.e.~$D_\alpha\left(\mathscr{W}\|\sigma|P\right)$, is also lower semi-continuous on $\mathcal{S(H)}$. Further, $\mathcal{S(H)}$ is compact owing to the assumption of the finite-dimensional Hilbert space $\mathcal{H}$. Thus, the extreme value theorem \cite[Chapter 30\S12.2]{KF75} guarantees that the infimum can be attained

For $\alpha= 1$, it is well-known that (see e.g.~\cite{Hol73}) $\sigma_{1,P} = P\mathscr{W}$. Using the fact $P\mathscr{W} \gg W_x$ for all $x\in\texttt{supp}(P)$, the statements immediately follow.

We fix an arbitrary $(\alpha,P)\in(0,1)\times \mathscr{P}(\mathcal{X})$ subsequently. Without loss of generality, we may further assume
\begin{align}
\bigcup_{x\in\texttt{supp}(P)} \texttt{supp}(W_x) = \mathds{1}_{\mathcal{H}},
\end{align}
and hence $P\mathscr{W}$ has full support.
We first show that the minimizer $\sigma_{\alpha,P}$ has full support too. Second, we prove the fixed-point property Eq.~\eqref{eq:fixed-point}. Finally, we establish the uniqueness of $\sigma_{\alpha,P}$. We remark that the uniqueness has been proven by Dalai and Winter \cite[Appendix~D]{DW14b}. Here, we provide an alternative proof for the completeness.
Our approach follows closely from Hayashi and Tomamichel \cite[Appendix C]{HT14}.

Define
\begin{align} \label{eq:target}
\mathcal{M}_\alpha(\mathcal{H}):=\argmin_{\sigma\in\mathcal{S(H)}}  D_\alpha \left( \mathscr{W} \| \sigma |P \right)
= \argmax_{\sigma\in\mathcal{S(H)}} g_\alpha(\sigma)
= \argmax_{\sigma\in\mathcal{S}_{P,\mathscr{W}}(\mathcal{H})} g_\alpha(\sigma)
%	\quad \forall \alpha \in (0,1),
\end{align}	
where
\begin{align}
g_\alpha (\sigma) := \sum_{x\in\mathcal{X}} P(x) \log \Tr \left[ W_x^\alpha \sigma^{1-\alpha} \right].
\end{align}
To show that the optimizer of $g_\alpha(\cdot)$ has full support, we observe that the directional derivative on the boundary of $\mathcal{S(H)}$ where at least one eigenvalue is zero in a direction that increases its rank diverges to positive infinite. Namely, it suffices to show
\begin{align}
\lim_{t\to 0} \frac{g_\alpha((1-t)\sigma + t\sigma^\perp) - g_\alpha(\sigma)}{t} = +\infty, \label{eq:fix7}
\end{align} 
where $\sigma \in \mathcal{S}_{P,\mathscr{W}}(\mathcal{H})$ is some singular density operator, and $\sigma^\perp :=  \frac{(\mathds{1}_\mathcal{H} - \sigma)}{\Tr\left[ \mathds{1}_\mathcal{H} - \sigma	\right]}$.
For $x\in\texttt{supp}(P)$ with $W_x \ll \sigma$, we have $W_x \perp \sigma^\perp$. It is not hard to see that
\begin{align}
&\lim_{t\to0} P(x) \frac{  \log \Tr\left[ W_x^\alpha\left( (1-t)\sigma + t \sigma^\perp \right)^{1-\alpha} \right] - \log \Tr\left[ W_x^\alpha \sigma^{1-\alpha} \right] }{t} \\
&= \lim_{t\to0} P(x) \frac{  \log \Tr\left[ W_x^\alpha\left( (1-t)^{1-\alpha}\sigma^{1-\alpha} + t^{1-\alpha} (\sigma^\perp)^{1-\alpha} \right) \right] - \log \Tr\left[ W_x^\alpha \sigma^{1-\alpha} \right] }{t} \label{eq:fix1} \\
%	&\lim_{t\to+\infty} \frac{P(x)}{t} \log \left\{ \frac{  (1-t)^{1-\alpha}\Tr\left[ W_x^\alpha\sigma^{1-\alpha} \right] + \Tr\left[ W_x^\alpha (\sigma^\perp)^{1-\alpha} \right] }{ \Tr\left[ W_x^\alpha \sigma^{1-\alpha} \right] } \right\} \label{eq:fix1} \\
&= \lim_{t\to0} P(x)\frac{(1-\alpha)\log (1-t)}{t} \label{eq:fix2}	\\
&= \lim_{t\to0} P(x)\frac{-(1-\alpha)}{1-t} \label{eq:fix3} \\
&= -P(x)(1-\alpha) \\
&> -\infty \label{eq:fix4}
\end{align}
where Eq.~\eqref{eq:fix1} holds because $\sigma \perp \sigma^\perp$; Eq.~\eqref{eq:fix2} is due to $W_x \perp \sigma^\perp$; and Eq.~\eqref{eq:fix3} is owing to L'H\^{o}spital's rule.

On the other hand, since $\sigma$ is singular, there must be some  $x\in\texttt{supp}(P)$ such that $W_x \not\ll \sigma$. Hence, by denoting $c:= \frac{ \Tr\left[W_x^\alpha(\sigma^\perp)^{1-\alpha}\right] }{ \Tr\left[W_x^\alpha\sigma^{1-\alpha}\right] }>0$,
Eq.~\eqref{eq:fix1} leads to
\begin{align}
&\lim_{t\to0} P(x)\frac{\log \left\{(1-t)^{1-\alpha} + t^{1-\alpha} c \right\} }{t} \\
&= \lim_{t\to0} P(x) \frac{ -(1-\alpha)(1-t)^{-\alpha} + (1-\alpha)t^{-\alpha} c }{(1-t)^{1-\alpha} + t^{1-\alpha} c} \label{eq:fix5}\\
&= +\infty, \label{eq:fix6}
\end{align}
where Eq.~\eqref{eq:fix5} is by L'H\^{o}spital's rule again.
Combining Eqs.~\eqref{eq:fix4} and \eqref{eq:fix6} concludes Eq.~\eqref{eq:fix7}.

Next, we show the fixed-point property: $\mathcal{M}_\alpha(\mathcal{H}) = \mathcal{F}_\alpha(\mathcal{H})$, where $\mathcal{F}_\alpha(\mathcal{H}) := \{\sigma\in\mathcal{S}_{>0}(\mathcal{H}) \}$ denotes the fixed-points of the map: $\mathsf{T}_{\alpha,P}:\mathcal{S}_{P,\mathscr{W}}(\mathcal{H}) \to \mathcal{S(H)}$. A necessary and sufficient condition for $\sigma$ to be an optimizer is 
\begin{align} \label{eq:saddle30}
\partial_\omega g_\alpha (\sigma) :=  
\mathsf{D} g_\alpha(\sigma) [ \omega - \sigma] =	0,
\end{align}
for all $\omega \in\mathcal{S(H)}$, where $\mathsf{D}g_\alpha(\sigma)$ denotes the Fr\'echet derivative of the map $g_\alpha$ (see e.g.~\cite[Appendix C]{HT14}).
%, \cite{HP14,CH1,CH2,CH16RSPA}).	
Using the chain rule of Fr\'echet derivatives, it follows
\begin{align} 
\partial_\omega g_\alpha ( {\sigma})
&= \Tr\left[  \sum_{x\in\mathcal{X}} P(x) \frac{W_x^\alpha}{\Tr\left[ W_x^\alpha {\sigma}^{1-\alpha} \right] } \partial_{\omega} {\sigma}^{1-\alpha}\right] \label{eq:saddle10} \\
&= \Tr\left[  \sum_{x\in\mathcal{X}} P(x) \frac{ \sigma^{\frac{-\alpha}{2}} W_x^\alpha\sigma^{\frac{-\alpha}{2}}}{\Tr\left[  W_x^\alpha {\sigma}^{1-\alpha} \right] }  \sigma^{\frac{\alpha}{2}} \partial_{\omega} {\sigma}^{1-\alpha}  \sigma^{\frac{\alpha}{2}}\right]. 
\end{align}
We claim that the operators
\begin{align}
\left\{ \Delta_\omega = \sigma^{\frac{\alpha}{2}} \partial_{\omega} {\sigma}^{1-\alpha}  \sigma^{\frac{\alpha}{2}}: \omega \in \mathcal{S(H)}  \right\}
\end{align}
span the space of traceless Hermitian operators on $\mathcal{S(H)}$.
Let $\sigma = \sum_{i} \lambda_i |i\rangle\langle i|$ with $\lambda_i>0$ be the eigenvalue decomposition. One can verify \cite[Theorem 3.25]{HP14} that
\begin{align}
\langle i | \Delta_\omega | j \rangle = \begin{dcases}
(\lambda_i \lambda_j)^{\frac{\alpha}{2}} \frac{\lambda_i^{1-\alpha} - \lambda_j^{1-\alpha} }{ \lambda_i - \lambda_j } \langle i| \omega - \sigma |j\rangle, & \text{if } \lambda_i \neq \lambda_j \\
(1-\alpha) \langle i| \omega - \sigma |j\rangle, & \text{if } \lambda_i = \lambda_j
\end{dcases}.
\end{align}
Therefore, $\Delta_\omega$ is Hermitian and $\Tr\left[\Delta_\omega\right] = 0$ for all $\omega \in \mathcal{S(H)}$.
Moreover, the basis of the traceless Hermitian operators is given by the operators
\begin{align}
\left\{ \Gamma_{ij} = |i\rangle\langle j | + |j\rangle\langle i|, \;
\Gamma_{ij}' = \mathrm{i}|i\rangle\langle j | - |j\rangle\langle i|, \;
\Gamma_{ij}'' = |i\rangle\langle i | - |j\rangle\langle j|
\right\}_{i\neq j} .
\end{align}
For every tuple $(i,j)$ with $i\neq j$ there exists an $\eps > 0$ such that the state $\omega = \sigma + \eps \Gamma_{ij}$ is still in $\mathcal{S(H)}$. For this state, we find that $\Delta_\omega = \eta \Gamma_{ij}$ for some real $\eta>0$. The similar argument applies to $\Gamma_{ij}'$ and $\Gamma_{ij}''$. Hence, we have verified that the operators $\{\Delta_\omega\}_{\omega\in\mathcal{S(H)}}$ span the space of traceless Hermitian operators.

Armed with the above discussion, the condition that $\partial_\omega g_\alpha(\sigma) = 0$ for all $\omega\in\mathcal{S(H)}$ is equivalent to the condition that the operators
\begin{align}
\sum_{x\in\mathcal{X}} P(x) \frac{ \sigma^{\frac{-\alpha}{2}} W_x^\alpha\sigma^{\frac{-\alpha}{2}}}{\Tr\left[  W_x^\alpha {\sigma}^{1-\alpha} \right] }
\end{align}
must be proportional to the identity. Thus, the optimum must be a fixed point of the map $\mathsf{T}_{\alpha,P}(\cdot)$.

Lastly, to prove the uniqueness of the optimizer, it remains to show 
$\partial_\omega^2 g_\alpha(\sigma) : \mathsf{D}^2 g_\alpha(\sigma)[ \omega- \sigma, \omega - \sigma] < 0$ for all $\omega \neq \sigma $ and $\sigma>0$.
Continuing on Eq.~\eqref{eq:saddle10}, we have
\begin{align}
\partial_\omega^2 g_\alpha(\sigma)
&= - \Tr\left[  \sum_{x\in\mathcal{X}} P(x) \frac{W_x^\alpha}{\Tr^2\left[ W_x^\alpha {\sigma}^{1-\alpha} \right] } \partial_{\omega} {\sigma}^{1-\alpha}\right] + \Tr\left[  \sum_{x\in\mathcal{X}} P(x) \frac{W_x^\alpha}{\Tr\left[ W_x^\alpha {\sigma}^{1-\alpha} \right] } \partial_{\omega}^2{\sigma}^{1-\alpha}\right] \\
&< \Tr\left[  \sum_{x\in\mathcal{X}} P(x) \frac{W_x^\alpha}{\Tr\left[ W_x^\alpha {\sigma}^{1-\alpha} \right] } \partial_{\omega}^2 {\sigma}^{1-\alpha}\right], \label{eq:fix8}
\end{align}
where Eq.~\eqref{eq:fix8} holds by noting that $\partial_{\omega}  {\sigma}^{1-\alpha} \neq 0$ for all $\omega\neq \sigma$.
Further, $\partial_{\omega}^2 {\sigma}^{1-\alpha} \leq 0$ since $u\mapsto u^{1-\alpha}$ is operator concave.
Thus, $\partial_{\omega}^2 g_\alpha(\sigma) < 0$, item \ref{I2-Augustin_mean_} is proved.
\end{proof}
	
%\item[(\ref{prop:I2}-\ref{I2-f})] 
%\noindent\textit{Proof of Proposition \ref{prop:I2}-\ref{I2-f}.}
\begin{proof}[Proof of Proposition \ref{prop:I2}-\ref{I2-conc_P_}.]
	Recall the definition given in Eq.~\eqref{eq:I2}. The assertion follows because the pointwise infimum of linear functions is concave.	
\end{proof}

\begin{proof}[Proof of Proposition \ref{prop:I2}-\ref{I2-conc_alpha_}.]
	This assertion was proved by Mosonyi and Ogawa \cite[Corollary B.2]{MO14b}.
%	
%	In the following we prove the second assertion. The definition of $I_\alpha^{(2)}$ give in Eq.~\eqref{eq:I2} implies that
%	\begin{align}
%	(\alpha-1) I_\alpha^{(2)}(P,\mathscr{W}) = \sup_{\sigma \in \mathcal{S(H)}} \sum_{x\in\mathcal{X}} \log Q_\alpha\left( W_x \| \sigma\right).
%	\end{align}
\end{proof}

\begin{proof}[Proof of Proposition \ref{prop:I2}-\ref{I2-cont_alpha_}.]
	 The idea of the proof originate from Nakibo\u{g}lu \cite[Lemmas 16, 17]{Nak18a}.
	 
	 Recalling item~\ref{I2-conc_alpha_}, the map $\alpha \mapsto \frac{1-\alpha}{\alpha} I_\alpha^{(2)}(P,\mathscr{W})$ is concave on $(0,1]$. Since any concave function is continuous in its interior \cite[Corollary 6.3.3]{Dud02}, the map $\alpha \mapsto I_\alpha^{(2)}(P,\mathscr{W})$ is continuous on $(0,1)$.
	 The continuity at $\alpha = 0$ can be verified as follows. Let $\sigma_{0,P}\in\mathcal{S(H)}$ be any state such that $I_0^{(2)}(P,\mathscr{W}) = D_0\left(\mathscr{W}\|\sigma_{0,P}|P\right)$.
	 Then, the monotonicity in item~\ref{I2-mono_} and the definition of $I_\alpha^{(2)}$ given in Eq.~\eqref{eq:I2}  imply that
	 \begin{align}
	 I_0^{(2)}(P,\mathscr{W}) \leq I_\alpha^{(2)}(P,\mathscr{W}) \leq D_\alpha\left( \mathscr{W} \| \sigma_{0,P} |P \right), \quad \forall \alpha \in(0,1].
	 \end{align}
	 The continuity of $\alpha\mapsto D_\alpha$ on $[0,1]$, Lemma~\ref{lemma:chaotic}-\ref{Da_mono_cont}, thus implies $\lim_{\alpha\downarrow 0} I_\alpha\left( P,\mathscr{W} \right) = I_0(P,\mathscr{W})$.
	 
	 It remains to show the continuity at $\alpha = 1$.	 
	 We claim the following fact about the Augustin mean.
	 \begin{lemm} \label{lemm:Augustin_mean}
		Given $\alpha \in (0,1]$ and $P\in\mathscr{P}(\mathcal{X})$, we let $\sigma_{\alpha,P} \in \mathcal{S(H)}$ be the Augustin mean, i.e.~$I_\alpha^{(2)}(P,\mathscr{W}) = D_\alpha\left(\mathscr{W}\|\sigma_{\alpha,P}|P\right)$.
		The following hold.
		\begin{itemize}
			\item For any $\alpha \in [\sfrac12, 1]$, 
			\begin{align}
			D_\alpha\left( W_x \| \sigma_{\alpha,P} \right) \leq \log \frac{1}{P(x)}. \label{eq:Augustin_mean0}
			\end{align}
			
			\item For any $\alpha \in [\sfrac12, 1]$,
			\begin{align} \label{eq:Augustin_mean00}
			\sigma_{\alpha,P} \leq \left( \min_{x:P(x)>0} P(x) \right)^{\frac{\alpha-1}{\alpha}} \sigma_{1,P}.
			\end{align}
		\end{itemize}
	 \end{lemm}
	 Then, Eq.~\eqref{eq:Augustin_mean00} and Lemma~\ref{lemma:chaotic}-\ref{Da_second_mono} imply that
	 \begin{align}
	 D_\alpha\left(\mathscr{W}\|\sigma_{\alpha,P}|P\right) \geq  D_\alpha\left(\mathscr{W}\|\sigma_{1,P}|P\right) + \frac{1-\alpha}{\alpha} \log \left( \min_{x:P(x)>0} P(x) \right), \quad \forall \alpha\in\left[\sfrac12,1\right].
	 \end{align}	 
	 Moreover, the fact $D_\alpha\left(\mathscr{W}\|\sigma_{\alpha,P}|P\right) = I_\alpha^{(2)}(P,\mathscr{W})$ in item~\ref{I2-Augustin_mean_} and the monotonicity in item~\ref{I2-mono_} show that
	 \begin{align}
	 D_\alpha\left(\mathscr{W}\|\sigma_{1,P}|P\right) + \frac{1-\alpha}{\alpha} \log \left( \min_{x:P(x)>0} P(x) \right)
	 \leq I_\alpha^{(2)}(P,\mathscr{W}) \leq I_1^{(2)}(P,\mathscr{W}), \quad \forall \alpha\in\left[\sfrac12,1\right].
	 \end{align}
	 It is clearly that $D_\alpha\left(\mathscr{W}\|\sigma_{1,P}|P\right)<+\infty$ by the fact that $\sigma_{1,P} \gg W_x$ for all $x \in \texttt{supp}(P)$.
	 Then, the continuity of $\alpha\mapsto D_\alpha$ on $[0,1]$, Lemma~\ref{lemma:chaotic}-\ref{Da_mono_cont}, together with the fact $D_1\left(\mathscr{W}\|\sigma_{1,P}|P\right) = I_1^{(2)}(P,\mathscr{W})$ imply the continuity of $I_\alpha^{(2)}$ at $\alpha=1$.
	 
	 In the following, we prove Lemma~\ref{lemm:Augustin_mean} to complete the proof of item~\ref{I2-cont_alpha_}.
	 \begin{proof}[Proof of Lemma~\ref{lemm:Augustin_mean}]
	 Proposition~\ref{prop:I2}~\ref{I2-Augustin_mean_} implies that the Augustin mean satisfies
	 \begin{align} \label{eq:Augustin_mean1}
	 \sigma_{\alpha,P} = \left( \sum_{x\in\mathcal{X}} P(x)  W_x^\alpha \e^{(1-\alpha)D_\alpha(W_x\|\sigma_{\alpha,P})} \right)^{\frac{1}{\alpha}}.
	 \end{align}
	 Using the operator monotonicity of $(\cdot)^{\frac{1-\alpha}{\alpha}}$ for $\alpha \in [\sfrac12, 1]$ (see e.g.~\cite{Bha97, HP14}), we have
	 \begin{align} \label{eq:Augustin_mean3}
	 \sigma_{\alpha,P}^{1-\alpha } \geq P(x)^{\frac{1-\alpha}{\alpha}} W_x^{1-\alpha} \e^{\frac{(1-\alpha)^2}{\alpha} D_\alpha(W_x\|\sigma_{\alpha,P})}.
	 \end{align}
	 Then, Eq.~\eqref{eq:Augustin_mean3} and Lemma~\ref{lemma:chaotic}-\ref{Da_second_mono} imply that
	 \begin{align}
	 D_\alpha\left( W_x \| \sigma_{\alpha,P} \right) &= \frac{1}{\alpha-1} \log \Tr\left[ W_x^\alpha \sigma_{\alpha,P}^{1-\alpha} \right] \\
	 &\leq -\log P(x)^{\frac{1}{\alpha}} - \frac{1-\alpha}{\alpha} D_\alpha\left(W_x\|\sigma_{\alpha,P}\right),
	 \end{align}
	 which proves the first claim.
	 
	 For any $\alpha \in [\sfrac12, 1]$, the map $(\cdot )^{\frac{1}{\alpha}}$ is operator convex.
	 Then, Eq.~\eqref{eq:Augustin_mean1} yields
	 \begin{align} \label{eq:Augustin_mean2}
	 \sigma_{\alpha,P} \leq \sum_{x\in\mathcal{X}} P(x) W_x \e^{\frac{1-\alpha}{\alpha} D_\alpha(W_x\|\sigma_{\alpha,P})}.
	 \end{align}
	 Note that $P\mathscr{W} = \sigma_{1,P}$.
	 Applying Eq.~\eqref{eq:Augustin_mean1} on \eqref{eq:Augustin_mean2} gives the desired result in Eq.~\eqref{eq:Augustin_mean00}.
	 \end{proof}
\end{proof}
	
%\item[(\ref{prop:I2}-\ref{I2-cont_equi})] 
%\noindent\textit{Proof of Proposition \ref{prop:I2}-\ref{I2-cont_equi}.}
\begin{proof}[Proof of Proposition \ref{prop:I2}-\ref{I2-cont_equi_}]

To prove the equicontinuity, we need the following inequality:
\begin{align}
I_\alpha^{(2) }(\alpha,P_\beta) &\leq \beta I_\alpha^{(2) }(\alpha,P_1) + (1-\beta) I_\alpha^{(2) }(\alpha,P_0) + H(\beta) \label{eq:ec2}
\end{align}
for any $P_1, P_0 \in \mathscr{P}(\mathcal{X})$, $P_\beta = \beta P_1 + (1-\beta) P_0$, $\beta \in (0,1)$, $\alpha\in [0,1]$; and we shorthand $H(\beta):= -\beta \log \beta - (1-\beta)\log (1-\beta)$ the binary entropy function.

Let $\sigma_{\alpha,P} \in \mathcal{S(H)}$ be the Augustin mean\footnote{For $\alpha=1$, the Augustin mean is not unique. We note that the proof of item~\ref{I2-cont_equi_} does not require the uniqueness.} as in item~\ref{I2-Augustin_mean_} for $\alpha\in[0,1]$.
Lemma~\ref{lemma:chaotic}-\ref{Da_second_mono} implies that, for every $\alpha \in [0,1]$,
\begin{align}
&\sum_{x\in\mathcal{X}} P_\beta(x) D_{\alpha} \left(W_x\| \beta \sigma_{\alpha,P_1} + (1-\beta) \sigma_{\alpha,P_0} \right) \notag \\
&= \beta \sum_{ x \in \mathcal{X} } P_1(x) D_{\alpha}\left(W_x\| \beta \sigma_{\alpha,P_1}  + (1-\beta) \sigma_{\alpha,P_0} \right) + (1-\beta) \sum_{ x \in \mathcal{X} } P_0(x) D_{\alpha} \left( W_x\| \beta \sigma_{\alpha,P_1} + (1-\beta) \sigma_{\alpha,P_0} \right) \\
&\leq \beta \sum_{x \in \mathcal{X} } P_1(x) D_{\alpha}  \left( W_x\| \sigma_{\alpha, P_1} \right) - \beta\log\beta + (1-\beta) \sum_{ x\in\mathcal{X} } P_0(W_x) D_{\alpha} \left( W_x\| \sigma_{\alpha, P_0} \right) - (1-\beta)\log (1-\beta) \\
&= \beta I_\alpha^{(2)}(P_1,\mathscr{W}) + (1-\beta) I_\alpha(P_0, \mathscr{W}) + H(\beta).
\end{align}

Let $s_\wedge, s_1, s_0$ be
\begin{align}
s_\wedge &= \frac{ P_1\wedge P_0 }{ \left\| P_1\wedge P_0 \right\|_1}, \\
s_1 &= \frac{ P_1 - P_1\wedge P_0 }{ 1 - \left\| P_1\wedge P_0 \right\|_1 }, \\
s_0 &= \frac{ P_0 - P_1\wedge P_0 }{ 1 - \left\| P_1\wedge P_0 \right\|_1 }.
\end{align}
Here, $\|\cdot\|_1$ denotes the $\ell_1$-norm.
One can verify that (see e.g.~\cite{Nak18a})
\begin{align}
P_1 &= \left( 1 - \frac{\left\| P_1- P_0 \right\|_1}{2} \right) s_\wedge + \frac{\left\| P_1- P_0 \right\|_1}{2} s_1, \\
P_0 &= \left( 1 - \frac{\left\| P_1- P_0 \right\|_1}{2} \right) s_\wedge + \frac{\left\| P_1- P_0 \right\|_1}{2} s_0.
\end{align}

%\mh{Superscript of $I_\alpha$ is missing in the following paragraph.} 
Then, the concavity of $P\mapsto I_\alpha^{(2)}(P,\mathscr{W})$ given in item~\ref{I2-conc_P_} together with Eq.~\eqref{eq:ec2} yield
\begin{align}
I_\alpha^{(2) }(P_0, \mathscr{W}) - I_\alpha^{(2) }(P_1, \mathscr{W}) &\leq H\left( \frac{\left\| P_1- P_0 \right\|_1}{2} \right) + \frac{\left\| P_1- P_0 \right\|_1}{2} \left( I_\alpha^{(2) }(s_0,\mathscr{W})  - I_\alpha^{(2) }(s_1, \mathscr{W}) \right) \\
&\leq H\left( \frac{\left\| P_1- P_0 \right\|_1}{2} \right) + \frac{\left\| P_1- P_0 \right\|_1}{2}  I_\alpha^{(2) }(s_0,\mathscr{W})
\end{align}
for all $\alpha \in [0,1]$.
Thus,
\begin{align}
\left| I_\alpha^{(2) }(P_0, \mathscr{W}) - I_\alpha^{(2) }(P_1, \mathscr{W}) \right| &\leq H\left( \frac{\left\| P_1- P_0 \right\|_1}{2} \right) + \frac{\left\| P_1 - P_0 \right\|_1}{2} \log |\mathcal{X}|
\end{align}
since $\alpha \mapsto I_\alpha^{(2) }$ is monotone increasing by item~\ref{I2-mono_}.
The above inequality implies the desired equicontinuity.

The joint continuity of $(\alpha,P) \mapsto I_\alpha^{(2)}(P,\mathscr{W})$ follows from the continuity of $\alpha\mapsto I_\alpha^{(2)}(P,\mathscr{W})$ given in \ref{I2-cont_alpha_} and uniform equicontinuity.

\end{proof}

%\item[(\ref{prop:I2}-\ref{I2-cont_mean})] 
%\noindent\textit{Proof of Proposition \ref{prop:I2}-\ref{I2-cont_mean}.}
\begin{proof}[Proof of Proposition \ref{prop:I2}-\ref{I2-cont_mean_}.]
Let $(\alpha_k, P_k)_{k\in\mathbb{N}}$ be an arbitrary sequence such that $\alpha_k\in (0,1]$, $P_k\in\mathscr{P}(\mathcal{X})$, and $\lim_{k\to+\infty} (\alpha_k,P_k) = (\alpha_0, P_0) \in (0,1]\times\mathscr{P}(\mathcal{X})$.
Further, let $(\sigma_{\alpha_k, P_k})_{k\in\mathbb{N}}$ be the sequence of the Augustin mean corresponding to $(\alpha_k, P_k)$. Since $\mathcal{S(H)}$ is compact, there exists a convergent subsequence $\{k_{l}\}_{l\in\mathbb{N}}$ such that
%	\begin{align}
$\lim_{l\to+\infty} \sigma_{\alpha_{k_{l}}, P_{k_l}} = \sigma_0$
%	\end{align} 
for some $\sigma_0\in\mathcal{S(H)}$.

The joint continuity of $(\alpha,P)\mapsto I_\alpha^{(2)}(P,\mathscr{W})$ in item~\ref{I2-cont_equi} thus implies
\begin{align}
\lim_{k\to+\infty} I_{\alpha_k}^{(2)}(P_k,\mathscr{W}) = 
D_{\alpha_0}(\mathscr{W}\|\sigma_0|P_0) = I_{\alpha_0}^{(2)}(P_0,\mathscr{W}) = 	D_{\alpha_0}(\mathscr{W}\|\sigma_{\alpha_0,P_0}|P_0).
\end{align}
Then, the uniqueness of the minimizer $\sigma_{\alpha,P}$ in item~\ref{I2-Augustin_mean} guarantees that $\sigma_0 = \sigma_{\alpha_0,P_0}$.
Hence, 
\begin{align}
\lim_{k\to+\infty} \sigma_{\alpha_k,\sigma_k} =\sigma_0 = \sigma_{\alpha_0,\sigma_0},
\end{align}
which proves item~\ref{I2-cont_mean_}.
\end{proof}

\begin{proof}[Proof of Proposition \ref{prop:I2}-\ref{I2-cont_C}.]
Berge's maximum theorem \cite[Section IV.3]{Ber63}, \cite[Lemma 3.1]{Psh71} shows that the continuous map $(\alpha,P) \mapsto I_\alpha^{(2)} (P,\mathscr{W})$ maximized over the compact set $P\in\mathscr{P}(\mathcal{X})$ is still continuous for $\alpha \in [0,1]$.	
\end{proof}

	\section{Proof of Proposition \ref{prop:saddle}} \label{app:saddle}
\begin{prop3}[Saddle-Point]
	Consider a classical-quantum channel $\mathscr{W} : \mathcal{X}\to \mathcal{S(H)}$, any $R\in( C_{0,\mathscr{W}}, C_\mathscr{W}   )$, and $P\in \mathscr{P}(\mathcal{X})$. Let
	\begin{align} 
	\mathcal{S}_{P,\mathscr{W}}(\mathcal{H}) &:= \left\{ \sigma \in \mathcal{S(H)}: \forall x \in \textnormal{\texttt{supp}}(P), \, W_x \not\perp  \sigma   \right\}.
	\end{align}
	Define
	\begin{align} \label{eq:FF}
	F_{R,P} (\alpha, \sigma) := 
	\begin{dcases}
	\frac{1-\alpha}{\alpha} \left(   D_\alpha\left( \mathscr{W} \| \sigma | P  \right) -R \right), & \alpha \in (0,1) \\
	0, & \alpha = 1
	\end{dcases},
	\end{align}
	on $(0,1]\times \mathcal{S}(\mathcal{H})$, and denote by
	\begin{align} \label{eq:PPR}
	\mathscr{P}_R(\mathcal{X}) := \left\{ P\in\mathscr{P}(X) : \sup_{0<\alpha\leq 1} \inf_{\sigma \in \mathcal{S(H)}}  F_{R,P} (\alpha, \sigma) \in \mathbb{R}_{>0}    \right\}.
	\end{align}
	The following holds
	\begin{enumerate}[(a)]
		\item\label{saddle-aa} For any $P\in\mathscr{P}(\mathcal{X})$, $F_{R,P}(\cdot,\cdot)$ has a saddle-point on $(0,1]\times \mathcal{S}_{P,\mathscr{W}}(\mathcal{H})$ with  the saddle-value:
		\begin{align}
		\min_{\sigma \in \mathcal{S(H)}}\sup_{0<\alpha\leq 1}   F_{R,P} (\alpha,\sigma) = \sup_{0<\alpha\leq 1} \min_{\sigma \in \mathcal{S(H)}}  F_{R,P} (\alpha,\sigma)= E_\textnormal{sp}^{(2)}(R,P).
		\end{align}
		
		\item\label{saddle-bb} 
		Fix $P\in\mathscr{P}_R(\mathcal{X})$. Any saddle-point $(\alpha_{R,P}^\star, \sigma_{R,P}^\star)$ of $F_{R,P}(\cdot,\cdot)$ satisfies $\alpha_{R,P}^\star\in(0,1)$ and 
		\begin{align}
		\sigma_{R,P}^\star 
		\gg W_x, \quad \forall x \in \textnormal{\texttt{supp}}(P).
		\end{align}
		
		\item\label{saddle-cc} For $P\in\mathscr{P}_R(\mathcal{X})$, the saddle-point is unique.
		
		\item\label{saddle-dd} For any $\underline{R} \in (C_{0,\mathscr{W}}, R]$, both $\alpha_{r,P}^\star $ and $\sigma_{r,P}^\star$ are jointly continuous functions of $(r,P)$ on $[\underline{R},R]\times \mathscr{P}(\mathcal{X})$.
		
	\end{enumerate}
\end{prop3}	
%In the following, we will repeat and present the proofs for the individual statements of Proposition \ref{prop:saddle}.	
%	\begin{enumerate}
%		\item[(\ref{prop:saddle}-\ref{saddle-a})] 
%\noindent\textit{Proof of Proposition \ref{prop:saddle}-\ref{saddle-a}.}
%\textit{For any $P\in\mathscr{P}(\mathcal{X})$, $F^{(2)}_{R,P}(\cdot,\cdot)$ has a saddle-point with  the saddle-value:
%		\[\min_{\sigma \in \mathcal{S(H)}}\sup_{0<\alpha\leq 1}   F^{(2)}_{R,P} (\alpha,\sigma) = \sup_{0<\alpha\leq 1} \min_{\sigma \in \mathcal{S(H)}}  F^{(2)}_{R,P} (\alpha,\sigma)= E_\textnormal{sp}^{(2)}(R,P).
%		\]}\\
\begin{proof}[Proof of Proposition \ref{prop:saddle}-\ref{saddle-aa}]
		Fix arbitrary $R>C_{0,\mathscr{W}}$ and $P\in\mathscr{P}(\mathcal{X})$.
		In the following, we prove the existence of a saddle-point of $F_{R,P}(\cdot,\cdot)$ on $(0,1	] \times \mathcal{S}_{P,\mathscr{W}}(\mathcal{H})$. Ref.~\cite[Lemma 36.2]{Roc70} states that $(\alpha^\star,\sigma^\star)$ is a saddle point of $F_{R,P}(\cdot,\cdot)$ if and only if the supremum in 
		\begin{align}
		\sup_{\alpha\in (0,1]} \inf_{\sigma \in \mathcal{S}_{P,\mathscr{W}}(\mathcal{H})} F_{R,P} (\alpha,\sigma) \label{eq:saddle15}
		\end{align}
		is attained at $\alpha^\star\in(0,1]$, the infimum in 
		\begin{align}
		\inf_{\sigma \in \mathcal{S}_{P,\mathscr{W}}(\mathcal{H})} \sup_{\alpha\in (0,1]}  F_{R,P} (\alpha,\sigma) \label{eq:saddle16}
		\end{align}
		is attained at $\sigma^\star\in \mathcal{S}_{P,\mathscr{W}}(\mathcal{H})$, and the two extrema in Eqs.~\eqref{eq:saddle15}, \eqref{eq:saddle16} are equal and finite.
		We first claim that, $\forall \alpha\in(0,1],$
		\begin{align}
		\inf_{\sigma \in \mathcal{S}_{P,\mathscr{W}}(\mathcal{H})} F_{R,P}(\alpha,\sigma)
		= \inf_{\sigma \in \mathcal{S}(\mathcal{H})} F_{R,P}(\alpha,\sigma). \label{eq:saddle21}
		\end{align}
		To see this, observe that for any $\alpha \in (0,1)$, Eqs.~\eqref{eq:Petz} and \eqref{eq:Petz_P} yield
		\begin{align}
		\forall \sigma \in \mathcal{S(H)}\backslash \mathcal{S}_{P,\mathscr{W}}(\mathcal{H}), \quad 
		%\begin{dcases}
		D_{\alpha}\left( \mathscr{W}\|\sigma | P \right) = +\infty, 
		\label{eq:saddle23}
		\end{align}
		which, in turn, implies
		\begin{align} \label{eq:saddle22}
		\forall \sigma \in \mathcal{S(H)}\backslash \mathcal{S}_{P,\mathscr{W}}(\mathcal{H}), \quad 
		F_{R,P}(\alpha,\sigma ) = +\infty.
		\end{align}
		%By Eq.~\eqref{eq:saddle21}, and the fact $R>C_{0,\mathscr{W}} (W)$, we deduce that
		Further, Eq.~\eqref{eq:saddle21} holds trivially when $\alpha=1$.
		Hence, Eq.~\eqref{eq:saddle21} yields
		\begin{align}
		\begin{split}
		\sup_{\alpha\in (0,1]} \inf_{\sigma \in \mathcal{S}_{P,\mathscr{W}}(\mathcal{H})} F_{R,P} (\alpha,\sigma)
		&= \sup_{\alpha\in (0,1]} \inf_{\sigma \in \mathcal{S}(\mathcal{H})} F_{R,P} (\alpha,\sigma) 
%		=\sup_{s\in\mathbb{R}_{\geq 0}} \min_{\sigma \in \mathcal{S}(\mathcal{H})} K_{R,P} (s,\sigma),		\label{eq:saddle27}
		\end{split}
		\end{align}
%		Next, Eq.~\eqref{eq:chaotic4} in Lemma \ref{lemma:chaotic} implies that $\sigma\mapsto D_{\frac{1}{1+s}}(W_x\|\sigma)$ is lower semi-continuous. Since the affine combination preservers the lower semi-continuity, 			
%		where in the last equality \eqref{eq:saddle27} we invoke the lower semi-continuity of the map $\sigma\mapsto K_{R,P}(s,\sigma)$ by Eq.~\eqref{eq:chaotic4} in Lemma \ref{lemma:chaotic} and the compactness of $\mathcal{S(H)}$.
		Owing to the fact $R> C_{0,\mathscr{W}}$ and Eq.~\eqref{eq:Esp2}, we have
		\begin{align}
		E_\text{sp}^{(2)} (R,P) = \sup_{\alpha\in (0,1]} \inf_{\sigma \in \mathcal{S}(\mathcal{H})} F_{R,P} (\alpha,\sigma) < +\infty, \label{eq:saddle28}
		\end{align}
		which guarantees the supremum in the right-hand side of Eq.~\eqref{eq:saddle28} is attained at some $\alpha\in(0,1]$. Namely, there exists some $\bar{\alpha}_{R,P} \in (0,1]$ such that
		\begin{align} \label{eq:fact1}
		\sup_{\alpha\in (0,1]} \inf_{\sigma \in \mathcal{S}_{P,\mathscr{W}}(\mathcal{H})} F_{R,P} (\alpha,\sigma)
		= \max_{\alpha\in [\bar{\alpha}_{R,P},1]} \inf_{\sigma \in \mathcal{S}(\mathcal{H})} F_{R,P} (\alpha,\sigma)  < +\infty.
		\end{align}
		Thus, we complete our claim in Eq.~\eqref{eq:saddle15}. It remains to show that the infimum in Eq.\eqref{eq:saddle16} is attained at some $\sigma^\star \in \mathcal{S}_{P,\mathscr{W}}(\mathcal{H})$ and the supremum and infimum are exchangeable. To achieve this, we will show that $\left( [\bar{\alpha}_{R,P},1], \mathcal{S}_{P,\mathscr{W}}(\mathcal{H}), F_{R,P} \right)$ is a closed saddle-element (see Definition \ref{defn:saddle} below) and employ the boundness of $[\bar{\alpha}_{R,P},1]\times \mathcal{S}_{P,\mathscr{W}}(\mathcal{H})$ to conclude our claim.

\begin{defn} [Closed Saddle-Element {\cite{Roc64}}] \label{defn:saddle}
	We denote by $\texttt{ri}$ and $\texttt{cl}$ the relative interior and the closure of a set, respectively.	Let $\mathcal{A},\mathcal{B}$ be subsets of a real vector space, and $F:\mathcal{A}\times \mathcal{B}\to\mathbb{R}\cup\{\pm \infty  \}$.	
	The triple $\left(\mathcal{A},\mathcal{B}, F\right)$ is called a closed saddle-element if for any $x\in \texttt{ri}\left(\mathcal{A}\right)$ (resp.~$y\in \texttt{ri}\left(\mathcal{B}\right)$), 
	\begin{itemize}
		\item[(i)] $\mathcal{B}$ (resp.~$\mathcal{A}$) is convex.
		\item[(ii)] $F(x,\cdot)$ (resp.~$F(\cdot, y)$) is convex (resp.~concave) and lower (resp.~upper) semi-continuous.
		\item[(iii)] Any accumulation point of $\mathcal{B}$ (resp.~$\mathcal{A}$) that does not belong to $\mathcal{B}$ (resp.~$\mathcal{A}$), say $y_o$ (resp.~$x_o$) satisfies $\lim_{y\to y_o} F(x,y) = +\infty$ (resp.~$\lim_{x\to x_o} F(x, y) = -\infty$).
	\end{itemize}
\end{defn}
		
		Fix an arbitrary $\alpha \in \texttt{ri}\left( [\bar{\alpha}_{R,P},1] \right) = (\bar{\alpha}_{R,P},1)$.
		We check that $\left( \mathcal{S}_{P,\mathscr{W}}(\mathcal{H}), F_{R,P}(\alpha, \cdot)\right)$ fulfills the three items in Definition \ref{defn:saddle}.
		(i) The set $\mathcal{S}_{P,\mathscr{W}}(\mathcal{H})$ is clearly convex.
		(ii) Recall Lemma~\ref{lemma:chaotic}-\ref{Da_second_convex} that $\sigma\mapsto D_{\alpha}(W_x\|\sigma)$ is convex and lower semi-continuous. Since convex combination preservers the convexity and the  lower semi-continuity, Eq.~\eqref{eq:FF} yields that  $\sigma \mapsto F_{R,P}(\alpha,\sigma)$ is convex and lower semi-continuous on $\mathcal{S}_{P,\mathscr{W}}(\mathcal{H})$.
		(iii) Due to the compactness of $\mathcal{S(H)}$, any accumulation point of $\mathcal{S}_{P,\mathscr{W}}(\mathcal{H})$ that does not belong to $\mathcal{S}_{P,\mathscr{W}}(\mathcal{H})$, say $\sigma_o$, satisfies $\sigma_o \in \mathcal{S(H)} \backslash \mathcal{S}_{P,\mathscr{W}}(\mathcal{H})$. Eqs.~\eqref{eq:saddle23} and \eqref{eq:saddle22} then show that $F_{R,P}(\alpha, \sigma_o) = +\infty$.
		
		Next, fix an arbitrary $\sigma \in \texttt{ri}\left(  \mathcal{S}_{P,\mathscr{W}}(\mathcal{H}) \right)$. Owing to the convexity of $\mathcal{S}_{P,\mathscr{W}}(\mathcal{H})$, it follows that $\texttt{ri}\left(  \mathcal{S}_{P,\mathscr{W}}(\mathcal{H}) \right) $ $= \texttt{ri}\left(\texttt{cl}\left( \mathcal{S}_{P,\mathscr{W}}(\mathcal{H})\right)\right)$ (see e.g.~\cite[Theorem 6.3]{Roc70}).
		We first claim $\texttt{cl}\left( \mathcal{S}_{P,\mathscr{W}}(\mathcal{H})\right) = \mathcal{S(H)}$. To see this, observe that $\mathcal{S}_{>0}(\mathcal{H}) \subseteq  \mathcal{S}_{P,\mathscr{W}}(\mathcal{H})$ since a full-rank density operator is not orthogonal with every $W_x$, $x\in\mathcal{X}$.
		Hence, 
		\begin{align}
		\mathcal{S(H)}=
		\texttt{cl}\left( \mathcal{S}_{>0}(\mathcal{H}) \right) 
		\subseteq \texttt{cl} \left(  \mathcal{S}_{P,\mathscr{W}}(\mathcal{H}) \right). \label{eq:saddle25}
		\end{align}
		On the other hand, the fact $\mathcal{S}_{P,\mathscr{W}}(\mathcal{H}) \subseteq  \mathcal{S(H)}$ leads to
		\begin{align}
		\texttt{cl}\left( \mathcal{S}_{P,\mathscr{W}}(\mathcal{H}) \right) \subseteq  
		\texttt{cl}\left(\mathcal{S(H)}\right) 
		= \mathcal{S(H)}. \label{eq:saddle26}
		\end{align}
		By Eqs.~\eqref{eq:saddle25} and \eqref{eq:saddle26}, we deduce that 
		\begin{align}
		\texttt{ri}\left(  \mathcal{S}_{P,\mathscr{W}}(\mathcal{H}) \right)
		= \texttt{ri}\left( \texttt{cl}\left( \mathcal{S}_{P,\mathscr{W}}(\mathcal{H}) \right) \right)
		= \texttt{ri}\left(  \mathcal{S}(\mathcal{H}) \right)
		=  \mathcal{S}_{>0}(\mathcal{H}), \label{eq:saddle24}
		\end{align}
		where the last equality in Eq.~\eqref{eq:saddle24} follows from \cite[Proposition 2.9]{Wei11}.
		Hence, we obtain
		\begin{align} \label{eq:saddle19}
		\forall \sigma \in \texttt{ri}\left(  \mathcal{S}_{P,\mathscr{W}}(\mathcal{H}) \right) \quad \text{and} \quad \forall x\in \mathcal{X},
		\quad \sigma \gg W_x.
		\end{align}
		Now we verify that $\left( [\bar{\alpha}_{R,P},1], F_{R,P}(\cdot,\sigma)\right)$ satisfies the three items in Definition \ref{defn:saddle}.
		Fix an arbitrary $\sigma \in \texttt{ri}\left(  \mathcal{S}_{P,\mathscr{W}}(\mathcal{H}) \right)$.
		(i) The set $(0,1]$ is obviously convex.
		(ii) From Lemma~\ref{lemma:chaotic}-\ref{Da_mono_cont}, the map $\alpha \mapsto F_{R,P}(\alpha,\sigma)$ is continuous on $(0,1)$. Further, it is not hard to verify that $F_{R,P}(1,\sigma) = 0 = \lim_{\alpha\uparrow 1} F_{R,P}(\alpha,\sigma)$ from Eqs.~\eqref{eq:saddle19}, \eqref{eq:FF}, and \eqref{eq:Petz}.
		Item \ref{I2-conc_alpha} in Proposition~\ref{prop:I2} and \cite[Collorary B.2]{MO14b}
		implies that $\alpha \mapsto F_{R,P}(\alpha,\sigma)$ on $[\bar{\alpha}_R,1)$ is concave.
		Moreover, the continuity of $\alpha \mapsto F_{R,P}(\alpha,\sigma)$ on $[\bar{\alpha}_{R,P},1)$ guarantees the concavity of $\alpha \mapsto F_{R,P}(\alpha,\sigma)$ on $[\bar{\alpha}_{R,P},1]$.
		(iii) Since $[\bar{\alpha}_{R,P},1]$ is closed, there is no accumulation point of $[\bar{\alpha}_{R,P},1]$ that does not belong to $[\bar{\alpha}_{R,P},1]$.
		
		We are at the position to prove item \ref{saddle-aa} of Proposition \ref{prop:saddle}.
The closed saddle-element, along with the boundness of $\mathcal{S}_{P,\mathscr{W}}(\mathcal{H})$ and Rockafellar's saddle-point result \cite[Theorem 8]{Roc64}, \cite[Theorem 37.3]{Roc70} imply that
		\begin{align}
		- \infty < \sup_{ \alpha\in [\bar{\alpha}_{R,P},1] } \inf_{\sigma \in \mathcal{S}_{P,\mathscr{W}}(\mathcal{H})} F_{R,P} (s,\sigma)
		= \min_{\sigma \in \mathcal{S}_{P,\mathscr{W}}(\mathcal{H})} \sup_{ \alpha\in[\bar{\alpha}_{R,P},1] }  F_{R,P} (s,\sigma). \label{eq:fact2}
		\end{align}
		Then Eqs.~\eqref{eq:fact1} and \eqref{eq:fact2} lead to the existence of a saddle-point of $F_{R,P}(\cdot,\cdot)$ on $(0,1]\times \mathcal{S}_{P,\mathscr{W}}(\mathcal{H})$.
		Hence, item \ref{saddle-aa} is proved.

\end{proof}		

\begin{proof}[Proof of Proposition \ref{prop:saddle}-\ref{saddle-bb}]
	Fix arbitrary $R\in(C_{0,\mathscr{W}},C_\mathscr{W})$ and $P\in \mathscr{P}_R(\mathcal{X})$.
	We have 
	\begin{align} \
	 &\sup_{0<\alpha\leq 1} \min_{\sigma \in \mathcal{S(H)}}  F_{R,P} (\alpha, \sigma) 
	 =\min_{\sigma \in \mathcal{S(H)}} \sup_{0<\alpha\leq 1}  F_{R,P} (\alpha, \sigma)   \in \mathbb{R}_{>0}  \label{eq:saddle1}
	\end{align}
	by the saddle-point property in item~\ref{saddle-aa} and the definition of $\mathscr{P}_R(\mathcal{X})$ given in Eq.~\eqref{eq:PPR}.
	First note that $(1,\sigma)$ for any $\sigma\in\mathcal{S(H)}$ will not be  a saddle point of $F_{R,P}(\cdot,\cdot)$ because $F_{R,P} (1, \sigma) = 0$, $\forall\sigma \in \mathcal{S(H)}$,  contradicting Eq.~\eqref{eq:saddle1}.
	
	Next, we assume $(\alpha^\star, \sigma^\star)$ is a saddle-point of $F_{R,P}(\cdot,\cdot)$ with $\alpha^\star \in (0,1)$, it holds that
	\begin{align} \label{eq:saddle3}
	F_{R,P} (\alpha^\star, \sigma^\star) = \min_{\sigma\in\mathcal{S(H)}} F_{R,P} (\alpha^\star, \sigma)
	= \frac{\alpha^\star - 1}{\alpha^\star} R + \frac{1-\alpha^\star}{\alpha^\star}  \min_{\sigma\in\mathcal{S(H)}}  D_{\alpha^\star} (\mathscr{W}\|\sigma|P).
	\end{align}
	Since $\sigma^\star$ is the minimizer of $\min_{\sigma\in\mathcal{S(H)}} D_{\alpha^\star} (\mathscr{W}\|\sigma|P)$, it is clear from Proposition~\ref{prop:I2}-\ref{I2-Augustin_mean}  that
	\begin{align} \label{eq:saddle13}
	\sigma^\star \gg W_x, \quad \forall x \in \texttt{supp}(P),
	\end{align}
	and  thus item \ref{saddle-bb} is proved.
	
\end{proof}

\begin{proof}[Proof of Proposition \ref{prop:saddle}-\ref{saddle-cc}]
	
Continuing from item~\ref{saddle-bb}, we show the uniqueness of the saddle-point.
Since $(1,\sigma)$ for $\sigma\in\mathcal{S(H)}$ will not be a saddle-point of $F_{R,P}(\cdot,\cdot)$ as shown in item~\ref{saddle-bb},
we let $\alpha^\star\in(0,1)$ attain the supremum in the left-hand side of Eq.~\eqref{eq:saddle1}.
Proposition~\ref{prop:I2}-\ref{I2-Augustin_mean} implies that the minimizer to the map $\sigma\mapsto D_{\alpha^\star}(\mathscr{W}\|\sigma |P)$ is unique, and thus it follows that the minimizer of Eq.~\eqref{eq:saddle1} is unique as well.	

Next, %it remains to show the uniqueness of the maximizer.
%It is convenient to adopt the substitution $\alpha = \frac{1}{1+s}$. Thus, we introduce
%\begin{align}
%K_{R,P}(s,\sigma) := F_{R,P} (\frac{1-\alpha}{\alpha}, \sigma).
%\end{align}
%Again, for $R\in(C_{0,\mathscr{W}},C_\mathscr{W})$ and $P\in \mathscr{P}_R(\mathcal{X})$, we have
%\begin{align}
%\min_{\sigma \in \mathcal{S(H)}}   \sup_{s\geq 0}  K_{R,P} (s, \sigma) \in \mathbb{R}_{>0}. \label{eq:saddle1_0}
%\end{align}
%Let $\sigma^\star \in \mathcal{S}_{P,\mathscr{W}}(\mathcal{H})$ attain the minimum in Eq.~\eqref{eq:saddle1_0}. Then,
%\begin{align}
%K_{R,P}(\alpha,\sigma^\star ) 
%&= \sup_{s\geq 0} \left\{ E_0^{(2)} (s,P) - sR  \right\},
%\label{eq:saddle7} 
%\end{align}
%where we define
%\begin{align}
%E_0^{(2)} (s,P) := s I_{\frac{1}{1+s}}^{(2)} (P,\mathscr{W}).
%\end{align}
we will invoke Lemma~\ref{lemm:regularity} in Appendix~\ref{app:LF} to show the uniqueness of the maximizer.
Let $\sigma^\star \in\mathcal{S(H)}$ be the minimizer of right-hand side of the equality in Eq.~\eqref{eq:saddle1}, and
let $\mathbf{x}^n \in \mathcal{X}^n$ be an arbitrary sequence with an empirical distribution $P$.
Denote by $p^n, q^n$ be two distributions with $(p_i,q_i)$ being the Nussbaum-Szko{\l}a mapping of $(W_{x_i}, \sigma^\star)$, where $x_i$ is the $i$-th symbol of $\mathbf{x}^n$ for $i\in[n]$.
Further, item~\ref{saddle-bb} guarantees that $p^n \ll q^n$.

Now, we let $p^n$ and $q^n$ to be the hypotheses described in Eq.~\eqref{eq:pn_qn}.
It is not hard to observe that $\sup_{0<\alpha\leq 1} F_{R,P}(\alpha,\sigma^\star) = \phi_n(R) $ given in Eq.~\eqref{eq:phi_n}.
Items~\ref{regularity-b} and \ref{regularity-d} in Lemma~\ref{lemm:regularity} then show that the optimizer $\alpha^\star \in(0,1)$ of $\sup_{0<\alpha\leq 1} F_{R,P}(\alpha,\sigma^\star)$ is unique, which completes the proof of item~\ref{saddle-cc}.

		\end{proof}

\begin{proof}[Proof of Proposition \ref{prop:saddle}-\ref{saddle-dd}]
	We first prove the joint continuity of $(r,P)\mapsto \alpha_{r,P}^\star$ on $[\underline{R}, R] \times \mathscr{P}(\mathcal{X})$.
	To that end, it suffices to show that $P\mapsto \alpha_{r,P}^\star$ is continuous on $\mathscr{P}(\mathcal{X})$ for every $r\in[\underline{R},R]$, and the family $\{ \alpha_{r,P}^\star  \}_{ P\in\mathscr{P}(\mathcal{X}) }$ is uniformly equicontinuous in $r$ on $[\underline{R},R]$. 
	Moreover, it is equivalent to prove the joint continuity of $(r,P)\mapsto s_{r,P}^\star$ on $[\underline{R}, R] \times \mathscr{P}(\mathcal{X})$ by using the substitution $s^\star_{r,P} := (1-\alpha^\star_{r,P} )/\alpha^\star_{r,P}$. This will ease the burden of notation.
	
	In the following, we show the continuity of  $P\mapsto s_{r,P}^\star$.
	The proof idea of such continuity is similar to \cite[Proposition 3.4]{AW14}. 
	Fix $r\in[\underline{R},R]$, any $P_0 \in \mathscr{P}(\mathcal{X})$ and consider arbitrary $\left\{ P_k \right\}_{k\in\mathbb{N}}$ such that $P_k \in \mathscr{P}(\mathcal{X})$ for all $ k\in\mathbb{N}$, and $\lim_{n\to+\infty} P_k = P_0$.	
	Following from Proposition~\ref{prop:Esp}-\ref{Esp-c} that will be proved later in Appendix~\ref{app:Esp}, we have\footnote{Here, for $E_\text{sp}^{(2)}(r,P) = 0$, we adopt $(1,P\mathscr{W})$ as the saddle-point in Eq.~\eqref{eq:F}, which means $s^\star_{r,P} = 0$.} 
	\begin{align} \label{eq:saddle-d1}
	s^\star_{r,P_k} = - \frac{\partial E_\text{sp}^{(2)}(r,P_k)}{\partial r} \in \mathbb{R}_{\geq 0}.
	\end{align}
	Since $r \geq \underline{R} > C_{0,\mathscr{W}}$, the continuity of $E_\text{sp}^{(2)}(r,\cdot)$ given in Proposition~\ref{prop:Esp}-\ref{Esp-a} that will be proved later shows that
	\begin{align}
	\lim_{k\to+\infty } E_\text{sp}^{(2)} (r, P_k) = E_\text{sp}^{(2)} (r, P_0).
	\end{align}
	Viewing $ ( E_\text{sp}^{(2)} (r, P_k) )_{k\in\mathbb{N}}$ as a sequence of functions that converges to $E_\text{sp}^{(2)} (r, P_0)$,
	Ref.~\cite[Corollary VI.6.2.8]{Hir01} proved that the sequence of first-order derivatives of differentiable convex functions converses to the first-order derivative of the limit. Indeed, Proposition~\ref{prop:Esp}-\ref{Esp-a} guarantees that $E_\text{sp}^{(2)}(\cdot, P)$ is convex. Therefore,
	\begin{align}
	\lim_{k\to+\infty} s_{R,P_k}^\star = \lim_{k\to+\infty} - \left.\frac{\partial E_\text{sp}^{(2)}(r,P_k)}{\partial r}\right|_{r=R} = -\left.\frac{\partial E_\text{sp}^{(2)}(r,P_0)}{\partial r}\right|_{r=R} = s_{R,P_0}^\star,
	\end{align}
	which shows the continuity of $P\mapsto s_{r,P}^\star$ for every $r\in[\underline{R},R]$.

	Next, we prove the equicontinuity.
	Let $R_1, R_2 \in [\underline{R}, R]$ be arbitrary. 
	As will be shown later in Proposition~\ref{prop:Esp}-\ref{Esp-a}, for every $P\in\mathscr{P}(\mathcal{X})$, $E_\text{sp}^{(2)}(\cdot, P)$ is convex and non-increasing on $[0,+\infty]$.
	Using Eq.~\eqref{eq:saddle-d1}, the absolute value of the difference between the first-order derivative of $E_\text{sp}^{(2)}(\cdot, P)$ at $R_1$ and $R_2$ can be calculated as follows
	\begin{align}
	\left| s^*_{R_1,P} - s^*_{R_2,P} \right| \leq s^*_{R_1,P} \vee  s^*_{R_2,P} = s^*_{R_1 \wedge R_1, \mathscr{W}}
	\leq s^*_{\underline{R},\mathscr{W}},
	\label{eq:difference}
	\end{align}
	where $s^*_{\underline{R},P} = (1-\alpha^\star_{\underline{R},P})/\alpha^\star_{\underline{R},P}$ and $\alpha^\star_{\underline{R},P}$ is the optimizer of $E_\text{sp}^{(2)}(\underline{R}, P)$ given in Eq.~\eqref{eq:Esp2}.
	
	For all $P \in \mathscr{P}(\mathcal{X})$ such that $\underline{R} \geq I_1(P,\mathscr{W})$, the right-hand side of Eq.~\eqref{eq:difference} is zero since $E_\text{sp}^{(2)}(\underline{R},P) = 0$ (see Proposition~\ref{prop:Esp}-\ref{Esp-a} again).
	On the other hand, for all $P \in \mathscr{P}(\mathcal{X})$ such that $\underline{R} < I_1(P,\mathscr{W})$, Proposition~\ref{prop:Esp}-\ref{Esp-c} shows that
	\begin{align}	\label{eq:saddle-d3}
	I_{\alpha_{\underline{R},P}^\star }^{(2)}(P,\mathscr{W}) > \underline{R}.
	\end{align}
	Further, since $\underline{R} \in (C_{0,\mathscr{W}}, C_{1,\mathscr{W}})$, the continuous monotone increase of the map $\alpha\mapsto C_{\alpha,\mathscr{W}}$ proved in Proposition~\ref{prop:I2}-\ref{I2-cont_C} guarantees that there exists a $\alpha_{\underline{R}} \in (0,1)$ such that
	\begin{align}
	C_{ \alpha_{\underline{R}} } = \underline{R}. \label{eq:saddle-d4}
	\end{align}
	Then, from Eqs.~\eqref{eq:saddle-d3}, \eqref{eq:saddle-d4}, and the definition of the R\'enyi information radius given in Eq.~\eqref{eq:radius}, we have
	\begin{align}
	I_{\alpha_{\underline{R}} }^{(2)}(P,\mathscr{W}) 
	\leq C_{\underline{R}, \mathscr{W}} = \underline{R} < I_{\alpha_{\underline{R},P}^\star }^{(2)}(P,\mathscr{W}), \label{eq:saddle-d5}
	\end{align}
	The above inequality \eqref{eq:saddle-d5} and the monotone increases of the map $\alpha\mapsto I_{\alpha}^{(2)} (P,\mathscr{W})$ further imply that
	\begin{align}
	\alpha_{\underline{R}} < \alpha_{\underline{R},P}^\star. \label{eq:saddle-d2}
	\end{align}
	Both Eqs.~\eqref{eq:difference} and \eqref{eq:saddle-d2} then yield
	\begin{align}
	\left| s^*_{R_1,P} - s^*_{R_2,P} \right| \leq \frac{1-\alpha_{\underline{R}}}{\alpha_{\underline{R}}} < \infty
	\end{align}
	for all $P \in \mathscr{P}(\mathcal{X})$ such that $\underline{R} < I_1(P,\mathscr{W})$.
	This shows the equicontinuity of the family $\{ \alpha_{r,P}^\star  \}_{ P\in\mathscr{P}(\mathcal{X}) }$ on $[\underline{R},R]$. 	
	Together with the continuity of $P\mapsto s^*_{r,P}$ for all $r \in [\underline{R},R]$, the joint continuity of $(r,P) \mapsto s^*_{r,P}$ on $[\underline{R},R] \times \mathscr{P}(\mathcal{X})$ is proved.
	
	Lastly, we move on to prove the continuity of $(r,P)\mapsto \sigma_{r,P}^\star$ on $[\underline{R},R] \times \mathscr{P}(\mathcal{X})$.
	Let $(R_k,P_k) \in [\underline{R},R] \times \mathscr{P}(\mathcal{X})$ for all $k\in\mathbb{N}$ be arbitrary such that $\lim_{k\in\mathbb{N}} (R_k,P_k) = (R_0,P_0) \in [\underline{R},R] \times \mathscr{P}(\mathcal{X})$.
	From Eq.~\eqref{eq:saddle3}, the saddle-point property yields that
	\begin{align} \label{eq:saddle100}
	\sigma_{R_k,P_k}^\star = \sigma_{\alpha_{R_k,P_k}^\star, P_k},
	\end{align}
	where in the right-hand side of the above equality we denote the Augustin mean by $\sigma_{\alpha,P} := \min_{\sigma \in \mathcal{S(H)}} D_\alpha\left(\mathscr{W}\|\sigma|P\right)$.
	Moreover, Proposition~\ref{prop:I2}-\ref{I2-cont_mean} states that $(\alpha,P)\mapsto \sigma_{\alpha,P}$ is jointly continuous on $(0,1]\times \mathscr{P}(\mathcal{X})$.
	Hence, the joint continuity of $(r,P)\mapsto \alpha_{r,P}^\star$ proved above together with Eq.~\eqref{eq:saddle100} show that
	\begin{align}
		\lim_{k\to+\infty} \sigma_{R_k,P_k}^\star &= 	\lim_{k\to+\infty} \sigma_{\alpha_{R_k,P_k}^\star, P_k} \\
		&= \sigma_{\lim_{k\to+\infty} \alpha_{R_k,P_k}^\star, \lim_{k\to+\infty} P_k} \\
		&= \sigma_{ \alpha_{\lim_{k\to+\infty}(R_k, P_k)}^\star, \lim_{k\to+\infty} P_k} \\
		&= \sigma_{ \alpha_{R_0, P_0 }^\star, P_0} \\
		&= \sigma_{R_0,P_0}^\star,
	\end{align}
	which completes the proof of item~\ref{saddle-dd}.
\end{proof}
		
%	\end{enumerate}
%	\qed

	\section{Proof of Proposition \ref{prop:Esp}} \label{app:Esp}
\begin{prop4}[Properties of Error-Exponent Functions] 
	Consider a classical-quantum channel $\mathscr{W} : \mathcal{X} \to \mathcal{S(H)}$ with $C_{0,\mathscr{W}} < C_\mathscr{W}$.  We have
	\begin{enumerate}[(a)]%[label=(\alph*)]
		
		\item\label{Esp-aa}  
		Given every $P\in\mathscr{P}(\mathcal{X})$, $E_\textnormal{sp}^{(2)}(\cdot,P)$ is convex and non-increasing on $[0,+\infty]$, and continuous  on $\left[ I_0^{(2)}(P,\mathscr{W}) ,  +\infty \right]$. For every $R>C_{0,\mathscr{W}}$, $E_\textnormal{sp}^{(2)}(R,\cdot)$ is continuous on $\mathscr{P}(\mathcal{X})$. Further,
		\begin{align}
		&E_\textnormal{sp}^{(2)} (R,P) =
		\begin{dcases}
		+\infty, & R< I_0^{(2)}(P,\mathscr{W}) \\
		0, & R\geq I_1^{(2)}(P,\mathscr{W})	
		\end{dcases}.
		\end{align}
		
		\item\label{Esp-bb}   $E_\textnormal{sp}(\cdot)$ is convex and non-increasing  on $ [0,+\infty]$,  and continuous on $ [ C_{0,\mathscr{W}} ,  +\infty ]$. Further,
		\begin{align}
		&E_\textnormal{sp} (R) =
		\begin{dcases}
		+\infty, & R< C_{0,\mathscr{W}} \\
		0, & R\geq C_{1,\mathscr{W}}
		\end{dcases}.
		\end{align}
		
		\item\label{Esp-cc} Consider any $R\in(C_{0,\mathscr{W}}, C_\mathscr{W})$ and $P\in\mathscr{P}_R(\mathcal{X})$ (see Eq.~\eqref{eq:PR}). The function $E_\textnormal{sp}^{(2)}(\cdot, P)$ is differentiable with
		\begin{align} \label{eq:ss2}
		s_{R,P}^{\star} := \frac{1-\alpha_{R,P}^\star}{\alpha_{R,P}^\star} = - \left.\frac{\partial E_\textnormal{sp}^{(2)}(r,P)}{\partial r}\right|_{r=R} \in \mathbb{R}_{>0},
		\end{align}
		where $ \alpha_{R,P}^{\star} $ is the optimizer in Eq.~\eqref{eq:Esp2}.
		Moreover, 
		\begin{align}
		I_{\alpha_{R,P}^\star} (P,\mathcal{W}) > R. \label{eq:Esp_100}
		\end{align}
		
%		\item\label{Esp-dd} $s_{R,(\cdot)}^\star$ in Eq.~\eqref{eq:ss2} is continuous on $\mathscr{P}_R(\mathcal{X})$.
	\end{enumerate}
\end{prop4}	
%In the following, we will repeat and present the proofs for the individual statements of Proposition \ref{prop:Esp}.		
%	\begin{enumerate}
%		\item[(\ref{prop:Esp}-\ref{Esp-a})] 
%\noindent\textit{Proof of Proposition \ref{prop:Esp}-\ref{Esp-a}.}
\begin{proof}[Proof of Proposition \ref{prop:Esp}-\ref{Esp-aa}]
%		\textit{Given every $P\in\mathscr{P}(\mathcal{X})$, $E_\textnormal{sp}^{(2)}(\cdot,P)$ is convex and non-increasing on $[0,+\infty]$, and continuous  on $\left[ I_0^{(2)}(P,\mathscr{W}) ,  +\infty \right]$. Given every $R>C_{0,\mathscr{W}}$, $E_\textnormal{sp}^{(2)}(R,\cdot)$ is continuous on $\mathscr{P}(\mathcal{X})$. Further,
%			\begin{align*}
%			&E_\textnormal{sp}^{(2)} (R,P) =
%			\begin{dcases}
%			+\infty, & R< I_0^{(2)}(P,\mathscr{W}) \\
%			0, & R\geq I_1^{(2)}(P,\mathscr{W})	
%			\end{dcases}.
%			\end{align*}}\\		
		Fix any arbitrary $P\in\mathscr{P}(\mathcal{X})$. Item \ref{I2-mono} in Proposition \ref{prop:I2} shows that the map $\alpha \mapsto I_\alpha^{(2)}(P,\mathscr{W})$ is monotone increasing on $[0,1]$. Hence, from the definition in Eq.~\eqref{eq:Esp2}, it is not hard to verify that $E_\text{sp}^{(2)}(R,P) = +\infty$ for all $ R\in(0, 
		I_0^{(2)}(P,\mathscr{W}) )$; finite for all $  R>I_0^{(2)}(P,\mathscr{W})$; and $E_\text{sp}^{(2)}(R,P) = 0$, for all $ R\geq  I_1^{(2)}(P,\mathscr{W})$.	
		
		For every $\alpha\in(0,1]$, the function $\frac{1-\alpha}{\alpha} ( I_{\alpha}^{(2)}(P,\mathscr{W}) - R )$ in Eq.~\eqref{eq:Esp2} is an non-increasing, convex, and continuous function in $R\in\mathbb{R}_{>0}$. Since $E_\text{sp}^{(2)}(R,P)$ is the pointwise supremum of the above function, $E_\text{sp}^{(2)}(R,P)$ is non-increasing, convex, and lower semi-continuous function for all $R\geq 0$. Furthermore, since a convex function is continuous on the interior of the interval if it is finite \cite[Corollary 6.3.3]{Dud02}, thus $E_\text{sp}^{(2)}(R,P)$ is continuous for all $R > I_0^{(2)}(P,\mathscr{W})$, and continuous from the right at $R = I_0^{(2)}(P,\mathscr{W})$.
		
		To establish the continuity of $E_\text{sp}^{(2)}(R,P)$ in $P\in\mathscr{P}(\mathcal{X})$, we first claim that there exists some $\bar{\alpha}_R\in(0,1]$ such that for every $P\in\mathscr{P}(\mathcal{X})$,
		\begin{align}
		\sup_{\alpha\in(0,1]} \frac{1-\alpha}{\alpha}\left( I_\alpha^{(2)}(P,\mathscr{W}) - R\right)
		= \sup_{\alpha\in [\bar{\alpha}_R,1]} \frac{1-\alpha}{\alpha}\left( I_\alpha^{(2)}(P,\mathscr{W}) -R \right). \label{eq:Esp-a2}
		\end{align}
		Recall that $R> C_{0,\mathscr{W}} = \max_{P\in\mathscr{P}(\mathcal{X})} I_0^{(2)}(P,\mathscr{W})$. The continuity, item \ref{I2-cont_C} in Proposition \ref{prop:I2}, implies that there exists an $\bar{\alpha}_R>0$ such that 
		\begin{align}
		R\geq  I_{ \bar{\alpha}_R }^{(2)}(P,\mathscr{W}), \quad \forall P \in \mathscr{P}(\mathcal{X}). \label{eq:Esp-a1}
		\end{align}
		Then, Eq.~\eqref{eq:Esp-a1} and the monotone increases of the map $\alpha \mapsto I_\alpha^{(2)}(P,\mathscr{W})$ yield that,
		% for all $P\in\mathscr{P}(\mathcal{X})$ and $\alpha \in(0, \bar{\alpha}_R)$,
		\begin{align}
		\frac{1-\alpha}{\alpha}\left( I_\alpha^{(2)}(P,\mathscr{W}) - R\right) < 0, \quad 
		\forall P\in\mathscr{P}(\mathcal{X}),\; \text{and } \alpha \in(0, \bar{\alpha}_R).
		\end{align}
		The non-negativity of $E_\text{sp}^{(2)}(R,P)\geq 0$ ensures that the maximizer $\alpha^\star$ will not happen in the region $(0, \bar{\alpha}_R)$, and thus Eq.~\eqref{eq:Esp-a2} is evident.
		Finally, Berge's maximum theorem \cite[Section IV.3]{Ber63}, \cite[Lemma 3.1]{Psh71}, the compactness of $[\bar{\alpha}_R, 1]$, and item \ref{I2-cont_equi} in Proposition \ref{prop:I2} complete our claim:
		\begin{align}
		P \mapsto E_\text{sp}^{(2)}(R,P) = \sup_{\alpha\in [\bar{\alpha}_R,1]} \frac{1-\alpha}{\alpha}\left( I_\alpha^{(2)}(P,\mathscr{W}) -R \right) \text{ is continuous on } \mathscr{P}(\mathcal{X}). 
		\end{align}
\end{proof}

\begin{proof}[Proof of Proposition \ref{prop:Esp}-\ref{Esp-bb}]		
%		\item[(\ref{prop:Esp}-\ref{Esp-b})] 
%\noindent\textit{Proof of Proposition \ref{prop:Esp}-\ref{Esp-b}.}
%		\textit{$E_\textnormal{sp}(\cdot)$ is convex and non-increasing  on $ [0,+\infty]$,  and continuous on $ [ C_{0,\mathscr{W}} ,  +\infty ]$. Further,
%			\begin{align*}
%			&E_\textnormal{sp} (R) =
%			\begin{dcases}
%			+\infty, & R< C_{0,\mathscr{W}} \\
%			0, & R\geq C_\mathscr{W}
%			\end{dcases}.
%			\end{align*}}\\
		The statement follows since item \ref{Esp-aa} holds for any $P\in\mathscr{P}(\mathcal{X})$ and we invoke the definition of $C_{\alpha,\mathscr{W}}$ in Eq.~\eqref{eq:radius}.
\end{proof}

\begin{proof}[Proof of Proposition \ref{prop:Esp}-\ref{Esp-cc}]
For any $R\in(C_{0,\mathscr{W}}, C_\mathscr{W})$ and $P\in\mathscr{P}_R(\mathcal{X})$, item \ref{saddle-c} in Proposition \ref{prop:saddle} shows that the optimizer $\alpha_{R,P}^\star$ is unique. 
Eq.~\eqref{eq:ss2} directly follows from item \ref{regularity-d} in Lemma~\ref{lemm:regularity}.

The saddle-point property in Proposition~\ref{prop:saddle}-\ref{saddle-a} shows that
\begin{align}
E_\text{sp}^{(2)}(R,P) = \frac{1-\alpha_{R,P}^\star}{ \alpha_{R,P}^\star } \left( I_{\alpha_{R,P}^\star}^{(2)}(P,\mathscr{W}) - R \right).
\end{align}
Further, since $E_\text{sp}^{(2)} (R,P) > 0 $ and $\alpha_{R,P}^\star \in (0,1)$ for $P\in\mathscr{P}_R(\mathcal{X})$, the above equality implies Eq.~\eqref{eq:Esp_100}.

%		Let $\alpha = \frac{1}{1+s}$ and recall \eqref{eq:E021}:
%		\begin{align}
%		E_0^{(2)}(s,P) := s  D_{\frac{1}{1+s}  }\left(\mathscr{W}\|\sigma^\star|P\right), \label{eq:E02}
%		\end{align}
%		where $\sigma^\star = \argmin_{\sigma\in\mathcal{S}(H)}  D_{\frac{1}{1+s}  }\left(\mathscr{W}\|\sigma|P\right)$.
%		Then,
%		\begin{align} \label{eq:err1}
%		E_\text{sp}^{(2)}(R,P) = \sup_{s\geq 0} \left\{ E_0^{(2)}(s,P)  - sR  \right\},
%		\end{align}
%		is the Legendre-Fenchel transform of $E_0^{(2)}(\cdot,P)$. For any $R\in(C_{0,\mathscr{W}}, C_\mathscr{W})$ and $P\in\mathscr{P}_R(\mathcal{X})$, $E_0^{(2)}(s,P)$ is strictly concave in $s\in\mathbb{R}_{>0}$, following  Eq.~\eqref{eq:saddle4} and item \ref{regularity-d} in Lemma \ref{lemm:regularity}. The Legendre-Fenchel transform $E_\text{sp}^{(2)}(\cdot,P)$ is thus differentiable on its interior domain \cite[Theorem 4.1.1]{Hir01}.
%		Eq.~\eqref{eq:ss2} also follows from item \ref{regularity-d} in Lemma \ref{lemm:regularity}.
\end{proof}

\end{document}